\def\IsFull{}

\ifdefined\IsSpringer{}
\documentclass[runningheads]{llncs}
\else
    \ifdefined\IsSP{}
    \documentclass[conference,compsoc]{IEEEtran}
    \usepackage{amsthm}
    \usepackage{amsmath}
    \usepackage[nocompress]{cite}
    \else
    \documentclass[8pt]{article}

\usepackage{fullpage} 
\usepackage{amsthm}
    \fi
\fi
\usepackage[T1]{fontenc}
\usepackage{booktabs}
\usepackage{arydshln}
\usepackage{orcidlink}
\usepackage{multicol,multirow}
\usepackage{thm-restate}
\usepackage{graphicx}
\usepackage{subcaption}
\usepackage{tikz}
\usepackage{multicol,multirow}
\usetikzlibrary{positioning}
\usetikzlibrary{shapes}
\usetikzlibrary{matrix,arrows}
\usetikzlibrary{decorations}
\usetikzlibrary{decorations.pathreplacing}
\usetikzlibrary{shapes.geometric, chains, calc, fit, matrix}
\tikzstyle{system}=[rectangle,draw,fill=lightgray,minimum height=0.8cm,minimum
            width=0.8cm,thick]
\tikzstyle{BC}=[system]
\tikzstyle{rzesource}=[system]
\tikzstyle{RO}=[resource, minimum width=1cm]
\tikzstyle{protocol}=[circle, inner sep=0.7mm, draw]
\tikzstyle{simulator}=[circle, inner sep=0.7mm, draw]
\tikzstyle{memory}=[resource]
\tikzstyle{distinguisher}=[resource,fill=white,minimum width=3.5cm,
minimum height=1.2cm]
\tikzstyle{link}=[]
\usepackage{thm-restate}
\usepackage{tabularray}
\UseTblrLibrary{booktabs}
\usepackage{pgfplots}
\usepackage{siunitx}
\usepackage{hyperref}
\hypersetup{
    colorlinks=true,
    linkcolor=blue,
    filecolor=magenta,
    citecolor=magenta,      
    urlcolor=cyan   
    }

\usepackage{pifont} 

\usepackage[
	n,
	operators,
	advantage,
	sets,
	adversary,
	landau,
	probability,
	notions,	
	logic,
	ff,
	mm,
	primitives,
	events,
	complexity,
	oracles,
	asymptotics,
	keys]{cryptocode}
\usepackage{enumitem}
\usepackage{algorithm}
\usepackage{algpseudocode}
\usepackage{float}
\usepackage{adjustbox}
 \RequirePackage{varwidth}
 \RequirePackage{color}
 \RequirePackage[most]{tcolorbox}
\date{}
\usepackage{cite}
\begin{document}
\def\ddefloop#1{\ifx\ddefloop#1\else\ddef{#1}\expandafter\ddefloop\fi}

\def\ddef#1{\expandafter\def\csname bf#1\endcsname{{\mbox{\bf #1}}}}
\ddefloop ABCDEFGHIJKLMNOPQRSTUVWXYZabcdefghijklmnopqrstuvwxyz\ddefloop

\def\ddef#1{\expandafter\def\csname bf#1\endcsname{\ensuremath{\pmb{\csname #1\endcsname}}}}
\ddefloop {alpha}{beta}{gamma}{delta}{epsilon}{varepsilon}{zeta}{eta}{theta}{vartheta}{iota}{kappa}{lambda}{mu}{nu}{xi}{pi}{varpi}{rho}{varrho}{sigma}{varsigma}{tau}{upsilon}{phi}{varphi}{chi}{psi}{omega}{Gamma}{Delta}{Theta}{Lambda}{Xi}{Pi}{Sigma}{varSigma}{Upsilon}{Phi}{Psi}{Omega}{ell}\ddefloop

\def\ddef#1{\expandafter\def\csname bb#1\endcsname{\ensuremath{\mathbb{#1}}}}
\ddefloop ABCDEFGHIJKLMNOPQRSTUVWXYZ\ddefloop

\def\ddef#1{\expandafter\def\csname c#1\endcsname{\ensuremath{\mathcal{#1}}}}
\ddefloop ABCDEFGHIJKLMNOPQRSTUVWXYZ\ddefloop

\newcommand{\getsr}{\xleftarrow{\$}}

\newcommand{\Floor}[1]{\left \lfloor #1\right\rfloor}
\renewcommand{\angle}[1]{\left\langle #1\right\rangle}
\newcommand{\Ceil}[1]{\ceil #1\rceil}
\renewcommand{\set}[1]{\left\{#1\right\}}
\newcommand{\hmin}{\text{H}_\infty}

\renewcommand{\prover}{\mathcal{P}}
\renewcommand{\verifier}{\mathcal{V}}
\newcommand{\ZKP}{\mathsf{ZKP}}
\newcommand{\zkio}{\mathsf{io}}
\renewcommand{\pp}{\mathsf{pp}}
\newcommand{\witness}{\mathsf{wt}}
\newcommand{\zkstate}{\mathsf{st}}
\newcommand{\Prime}{p}
\newcommand{\Gen}{\term{Gen}}
\newcommand{\LWR}{\texttt{LWR}}
\newcommand{\LWE}{\texttt{LWE}}
\newcommand{\lwr}{\term{LWR}}


\newcommand{\anti}[1]{\textcolor{magenta}{[Antigoni: #1]}}
\newcommand{\hnote}[1]{\textcolor{red}{[Harish: #1]}}

\newcommand{\term}[1]{\mathsf{#1}}
\newcommand{\sys}{\term{TAPAS}}
\newcommand{\Sys}{\textrm{Two-server Asymmetric Private Aggregation Scheme}}
\newcommand{\etal}{\textit{et al.}}
\newcommand{\Hybrid}{\term{Hybrid}}
\renewcommand{\negl}{\term{negl}}


\newcommand{\Ext}{\term{Ext}}
\newcommand{\psaaux}{\term{aux}}
\newcommand{\psact}{\term{ct}}
\newcommand{\Commit}{\term{PCommit}}
\newcommand{\commit}{\term{DLCommit}}

\newcommand{\bpsact}{\boldsymbol{\psact}}
\newcommand{\bpsain}{\boldsymbol{{x}}}

\newcommand{\cIf}{\mbox{\textbf{if }}}

\newcommand{\com}{\term{com}}
\newcommand{\coms}{\term{Setup}}
\newcommand{\comc}{\term{Commit}}
\renewcommand{\pp}{\term{pp}}

\newcommand{\eqq}{\stackrel{?}{=}}
\newtcolorbox{protocol}[1][]{%
width=\textwidth, boxsep=1mm, left= 0mm, right=0.5mm, title={\textbf{#1}}
  }

\newcommand{\aok}{\term{SoK}_\cR}
\newcommand{\aoks}{\mathsf{Setup}_\cR}
\newcommand{\aokp}{\term{Sign}_\cR}
\newcommand{\aokv}{\term{Verify}}
\newcommand{\aoksim}{\term{Sim}}
\newcommand{\stmt}{\term{stmt}}
\newcommand{\wit}{\term{wit}}
\newcommand{\aokcrs}{\term{crs}}
\newcommand{\tr}{\mathrm{tr}}

\newcommand{\cFor}{\mbox{\bf for }}
\newcommand{\cTo}{\mbox{\bf to }}
\newcommand{\cDownTo}{\mbox{\bf downto }}
\newcommand{\cDo}{\mbox{\bf do }}
\renewcommand{\cIf}{\mbox{\bf if }}
\newcommand{\cAnd}{\mbox{\bf and }}
\newcommand{\cThen}{\mbox{\bf then }}
\newcommand{\cElse}{\mbox{\bf else }}
\newcommand{\cWhile}{\mbox{\bf while }}
\newcommand{\cOr}{\mbox{\bf or }}
\newcommand{\cAbort}{\mbox{\bf abort }}

\ifdefined\IsSpringer{}
\newenvironment{claimproof}[1]{\par\noindent\underline{Proof:}\space#1}{}
\else
\makeatletter
\renewcommand{\paragraph}[1]{%
  \@startsection{paragraph}{4}{\z@}%
  {1.25ex \@plus1ex \@minus.2ex}%
  {-1em}%
  {\normalfont\normalsize\bfseries}%
  {\strip@period#1}%
}
\ifdefined\IsFull

\def\strip@period#1.{#1}
\makeatother
\theoremstyle{definition}
\newtheorem{definition}{Definition}
\newtheorem{construction}{Construction}
\theoremstyle{plain}
\newtheorem{theorem}{Theorem}
\newtheorem{lemma}{Lemma}
\newtheorem{claim}{Claim}
\newenvironment{claimproof}[1]{\par\noindent\underline{Proof:}\space#1}{}
\theoremstyle{remark} 
\newtheorem{remark}{Remark} 
\fi
\fi

\newcommand{\tSim}{\term{Sim}}
\newcommand{\lab}{\ell}


\newcommand{\kappaSec}{\kappa}                      
\newcommand{\NP}{\mathsf{NP}}                           
\newcommand{\PPT}{\mathsf{PPT}}                         
\newcommand{\Prb}{\Pr}                                  
\newcommand{\given}{\;|\;}                              
\newcommand{\advA}{\mathcal{A}}                         
\newcommand{\advD}{\mathcal{D}}                         
\newcommand{\SimA}{\mathcal{S}}                         
\newcommand{\ExtA}{\mathcal{E}}                         
\newcommand{\Experiment}{\mathsf{Exp}}                  
\newcommand{\Game}{\mathsf{Game}}                       
\newcommand{\CRS}{\mathsf{crs}}                         
\newcommand{\TD}{\mathsf{td}}                           
\newcommand{\SimO}{\mathcal{O}_{\Sim}}                  
\newcommand{\SimQ}{\mathcal{Q}}                         
\newcommand{\stA}{\mathsf{st}_{\advA}}                  
\newcommand{\Trans}{\mathcal{T}}                        

\newcommand{\probexp}[2]{\Prb\!\left[#1:#2\right]}      
\newcommand{\ExpSE}[1]{\Experiment^{\mathsf{se}}_{#1}}  
\newcommand{\ExpSHVZK}[1]{\Experiment^{\mathsf{shvzk}}_{#1}} 

\newcommand{\Setup}{\mathsf{Setup}}
\newcommand{\Prove}{\mathsf{Prove}}
\newcommand{\Verif}{\mathsf{Verif}}
\newcommand{\Sim}{\mathsf{Sim}}

\newcommand{\Rel}{R}
\newcommand{\Lang}{L_\Rel}

\DeclareRobustCommand{\lesssim}{%
  \mathrel{\mathpalette\lowersim\originallesssim}%
}
\DeclareRobustCommand{\gtrsim}{%
  \mathrel{\mathpalette\lowersim\originalgtrsim}%
}

\makeatletter
\renewcommand{\paragraph}[1]{\smallskip\noindent{\it #1}}
\makeatother


%
\title{\textrm{TAPAS}: Efficient Two-Server Asymmetric Private Aggregation Beyond Prio(+)}
%
%
\ifdefined\IsSpringer{}
    \ifdefined\IsSub{}
    \author{}
    \else
    \author{Harish Karthikeyan\orcidID{0000-0002-1787-4906} \and
    Antigoni Polychroniadou\orcidID{0009-0003-0125-2971}}
           \institute{JPMorgan AI Research, JPMorgan AlgoCRYPT CoE, New York, NY, USA}
    \authorrunning{H. Karthikeyan and A. Polychroniadou}
    \fi
\else
\author{%
\normalsize
 \begin{tabular}{c c}
Harish Karthikeyan~\orcidlink{0000-0002-1787-4906} & Antigoni Polychroniadou~\orcidlink{0009-0003-0125-2971} \\
\texttt{harish.karthikeyan@jpmchase.com} & \texttt{antigoni.polychroniadou@jpmorgan.com}
\end{tabular}
\\[2ex]
\normalsize JPMorgan AI Research, JPMorgan AlgoCRYPT CoE
}
\fi
%

%
%
\maketitle              

\begin{abstract}

Privacy‑preserving aggregation is a cornerstone for AI systems that learn from distributed data without exposing individual records, especially in federated learning and telemetry. Existing two‑server protocols (e.g., Prio and successors) set a practical baseline by validating inputs while preventing any single party from learning users’ values, but they impose symmetric costs on both servers and communication that scales with the per‑client input dimension $L$. Modern learning tasks routinely involve dimensionalities $L$ in the tens to hundreds of millions of model parameters.

We present TAPAS, a two‑server asymmetric private aggregation scheme that addresses these limitations along four dimensions: (i) no trusted setup or preprocessing, (ii) server‑side communication that is independent of $L$ (iii) post‑quantum security based solely on standard lattice assumptions (LWE, SIS), and (iv) stronger robustness with identifiable abort and full malicious security for the servers. A key design choice is intentional asymmetry: one server bears the $O(L)$ aggregation and verification work, while the other operates as a lightweight facilitator with computation independent of $L$. This reduces total cost, enables the secondary server to run on commodity hardware, and strengthens the non‑collusion assumption of the servers. One of our main contributions is a suite of new and efficient lattice-based zero-knowledge proofs; to our knowledge, we are the first to establish privacy and correctness with identifiable abort in the two-server setting.

\end{abstract}
\ifdefined\IsTPMPC 
\else
\section{Introduction}
AI-driven systems increasingly rely on privacy-preserving aggregation to unlock collective insights without exposing individual data, most notably in federated learning and telemetry~\cite{firefox,PoPETS:BKMGBE22,CCS:BBGLR20,CCS:BIKMMP17}. Two-server protocols such as Prio~\cite{Prio}, Prio+~\cite{SCN:AGJOP22}, and Elsa~\cite{SP:RSWP23} have established a practical baseline: in these designs, the two servers are symmetric, receiving and computing the same information, typically under secret sharing, so no single party learns users’ inputs, and malformed contributions are validated. Prio~\cite{Prio} is under active standardization consideration within the IETF~\cite{patton-cfrg-vdaf-01}. However, in modern workloads where per-client vectors are extremely large (millions of parameters) and the number of clients $n$ is comparatively modest (often $50-5000$ as reported in Table 2 of~\cite{Kairouz}), these symmetric designs exhibit several bottlenecks. First, each server incurs bandwidth and computation that scale linearly with the vector dimension $L$, so for a model of dimension $L$ both servers must provision network throughput $O(n\cdot L)$ and memory $O(L)$, effectively doubling deployment cost and forcing the secondary server to match the primary aggregator’s capacity. Second, server-side communication is tied to $L$ rather than $n$, misaligning cost with the true bottleneck in federated learning and straining network and memory budgets. Third, setup or preprocessing requirements (for example, trusted setup or client precomputation) complicate operations and introduce assumptions that reduce flexibility. Fourth, security relies on non–post-quantum, limiting long-term robustness against quantum adversaries. Finally, robustness against malicious servers is weaker: once servers begin exchanging messages, a malicious server can trigger aborts that appear indistinguishable from client misbehavior, thereby enabling blame shifting and denial-of-service without attributable fault. Prior work does not provide comprehensive guarantees for malicious security in server-side computation. 

Addressing these limitations calls for an asymmetric two-server design that removes trusted setup and preprocessing, decouples server-side communication from $L$ so costs scale with $n$ rather than vector dimension, bases security solely on standard post-quantum lattice assumptions (LWE, SIS), and strengthens robustness with identifiable abort that attributes faults to misbehaving servers once inter-server messaging begins, thereby aligning private aggregation with real-world constraints in federated learning and telemetry. The recent Heli~\cite{Heli} work explores asymmetric server roles but retains the aforementioned bottlenecks (setup/preprocessing overhead, communication scaling with $L$, non–post-quantum assumptions, and weak robustness under malicious servers). Moreover, Heli imposes a small-plaintext constraint: each vector coordinate must lie in a limited range to enable decryption via discrete-log recovery, which is impractical for high-precision or large-dimension aggregations common in federated learning. 
\fi
\ifdefined\IsTPMPC
AI-driven systems increasingly rely on privacy-preserving aggregation to unlock collective insights without exposing individual data, most notably in federated learning and telemetry~\cite{firefox,PoPETS:BKMGBE22,CCS:BBGLR20,CCS:BIKMMP17}. Two-server protocols such as Prio~\cite{Prio,patton-cfrg-vdaf-01}, Prio+~\cite{SCN:AGJOP22}, and Elsa~\cite{SP:RSWP23}
\footnote{See Section~\ref{sec:related_work_comparison} for a detailed discussion about related work.}
have established a practical baseline: in these designs, the two servers are symmetric, receiving and computing the same information, typically under secret sharing, so no single party learns users’ inputs, and malformed contributions are validated. However, in modern workloads where per-client vectors are extremely large (millions of parameters) and the number of clients $n$ is comparatively modest (often $50-5000$ as reported in Table 2 of~\cite{Kairouz}), these symmetric designs exhibit the following bottlenecks:

\begin{itemize}[leftmargin=*, noitemsep, topsep=0pt]
    \item \textbf{Server Asymmetry:} Both servers must provision $O(nL)$ bandwidth and $O(L)$ memory, doubling deployment costs. Furthermore, Server-to-Server traffic is tied to the vector dimension $L$ rather than the client count $n$.
    \item \textbf{Trusted Setup:} Reliance on trusted setups or heavy client precomputation limits deployment flexibility.
    \item \textbf{Post-quantum Insecurity:} Most existing protocols lack post-quantum security, limiting long-term robustness.
    \item \textbf{Weak Security against Malicious Server:} Robustness against malicious servers is weak~\cite{SP:RSWP23,Heli}: once servers begin exchanging messages, a malicious server can trigger aborts that appear indistinguishable from client misbehavior, thereby enabling blame shifting and denial-of-service without attributable fault. Prior work does not provide comprehensive guarantees against malicious adversaries for server-side computation. 
\end{itemize}

\fi
\ifdefined\IsTPMPC
\begin{table*}[!b]
\centering
\caption{ Asymptotic costs, omitting constant factors, are reported for $n$ clients and vector length $L$, where inputs $\in[0,2^t-1)$. Communication cost omits the computational security parameter ($\lambda$), which accounts for the bit-length of field and group elements. $\kappa$ is the statistical security parameter. For Heli, $\gamma$ denotes the fraction of dropped clients from computation. Server Security has three flavors: semi-honest (SH) server(s), malicious privacy (MP), where an honest server can detect deviation, and its strengthened version, identifiable abort (IA), where the honest server can identify the misbehaving party. Note that only Heli requires a trusted setup phase. $\sys$ provides improved security guarantees over prior work while ensuring server asymmetry and comparable (and, in some cases, improved) asymptotic costs. 
}
\label{tab:costs}
\resizebox{0.85\textwidth}{!}{%
\begin{tabular}{@{}lccc|ccc|cc@{}}
\toprule
 & \multicolumn{3}{c|}{\textbf{Communication Cost}} 
 & \multicolumn{3}{c|}{\textbf{Computation Cost}} 
 & \textbf{Server} & \textbf{Mal.} \\ 
 \cmidrule(lr){2-4} \cmidrule(lr){5-7} 
 & Server 1 & Server 2 & Client 
 & Server 1 & Server 2 & Client 
 & \textbf{Security} & \textbf{Clients}\\ \midrule
\textsc{Elsa} 
      & $n + n L t^{2}$ 
      & $n + n L t^{2}$ 
      & $tL + \lambda + \kappa+L \log L$ 
      & $n(tL + \lambda + \kappa)$ 
      & $n(tL + \lambda + \kappa)$ 
      & $tL + \lambda + \kappa $ 
      & MP & Yes \\[0.5em]
\textsc{Prio} 
      & $ntL\lambda+tL\log tL$ 
      & $ntL+tL\log tL$ 
      & $tL$ 
      & $n tL\log tL $ 
      & $n tL\log tL $ 
      & $tL\log tL$ 
      & SH & Yes \\[0.5em]
\textsc{Prio+} 
      & $n L t $ 
      & $n L t $ 
      & $L t $ 
      & $n L t\lambda$ 
      & $n L t \lambda$ 
      & $L$ 
      & SH & Yes \\[0.5em]
\textsc{Heli} 
      & $n (L+\log t)$ 
      & $(L+\gamma n\log n)$
      & $(L+\log tL)$ 
      & $ntL+L\sqrt{n\cdot 2^t}$
      & $L+ \gamma n$  
      & $tL$ 
      & MP & Yes \\
\midrule 
\textsc{$\sys$} 
      & $ nL$ 
      & $n\lambda$ 
      & $ L+\lambda$ 
      & $nL$ 
      & $n$ 
      & $L$ 
      & SH & No \\[0.5em]
\textsc{$\sys$$^{\text{BP}}$} 
      & $n (L+\log (L+\kappa))$ 
      & $n\lambda$ 
      & $ L+ \log (L+\kappa)+\lambda$ 
      & $nL + n\kappa+ n \lambda$ 
      & $n\lambda$ 
      & $L + \kappa+\lambda$ 
      & SH & Yes \\ [0.5em]
\textsc{$\sys$$^{\text{BP}}$} 
      & $n (L+\log (L+\kappa))$ 
      & $n\lambda$ 
      & $ L+ \log (L+\kappa)+\lambda$ 
      & $nL + n\kappa+ n \lambda$ 
      & $n\lambda+nL$ 
      & $L + \kappa+\lambda$ 
      & IA & Yes \\[0.5em]
\midrule
\textsc{$\sys$$^{\text{LWE}}$} 
      & $n \kappa L$ 
      & $n \cdot \kappa \lambda$ 
      & $\kappa L $ 
      & $n\kappa L$ 
      & $n\kappa \lambda$ 
      & $L \kappa$ 
      & SH & Yes \\ [0.5em]
\textsc{$\sys$$^{\text{LWE}}$} 
      & $n \kappa L$ 
      & $n\lambda$ 
      & $\kappa L $ 
      & $n\kappa L$ 
      & $n\kappa \lambda$ 
      & $L \kappa$ 
      & IA & Yes \\ [0.5em]
\bottomrule
\end{tabular}%
}
\end{table*}
\fi
\ifdefined\IsTPMPC\subsubsection*{Our Contributions}\else\subsection{Our Contributions}\fi\label{sec:contrib}
We formally introduce $\sys$, a Two-Server Asymmetric Private Aggregation Scheme that addresses these limitations along four key dimensions:
\begin{itemize}[leftmargin=*, noitemsep, topsep=0pt]
    \item No setup or preprocessing: $\sys$ operates without trusted setup or client-side preprocessing phases beyond the round’s standard participation, simplifying deployment and removing assumptions
\item  Communication independent of vector length: TAPAS’s server-side communication is independent of the per-client vector dimension $L$, aligning costs with the number of participants $n$ rather than with model size. This is crucial in federated learning, where $L$ can be on the order of millions of parameters while $n$ is comparatively modest. Asymptotic costs for $\sys$ and prior two-server protocols are summarized in Table~\ref{tab:costs}.

\item Post-quantum foundations:  $\sys$ is the first two-server private aggregation protocol built solely on standard lattice assumptions (LWE, SIS), providing post-quantum security without resorting to composite assumptions or ad hoc primitives.

\item Lattice-based Zero Knowledge Proof System: Standard lattice-based proofs suffer from a soundness slack proportional to the square root of the witness dimension, $\sqrt{L}$~\cite{C:LyuNguPla22}. While decomposing the witness into $k$ chunks of dimension $d$ theoretically tightens this slack to $\propto \sqrt{d}$, requiring simultaneous acceptance in across all blocks results in a negligible joint success probability. $\sys$ combines the witness decomposition trick and an \textit{Independent Restart} strategy, where each chunk runs rejection sampling asynchronously until acceptance. This ensures that the success of a block is independent of the success of other blocks, thereby significantly improving performance. This restores linear expected runtime while maintaining a tighter $\sqrt{d}$- factor slack. We prove security under a \textit{relaxed soundness definition}: we guarantee that a witness $\bfs_j$ can be extracted for every block $j$, but make no guarantee that $\bfs_1=\bfs_2=\ldots$. Fortunately, for our intended application of building a two-server secure aggregation protocol, this relaxed soundness definition can be easily rectified within the overall protocol.
\item Stronger robustness with identifiable abort: $\sys$ provides identifiable abort once servers begin exchanging messages. If a server acts maliciously, the protocol produces evidence that attributes the fault to that server (rather than “blaming” honest clients), thereby closing a gap in prior systems in which a malicious server could trigger aborts that appear indistinguishable from client misbehavior. 
The recent work of LZKSA~\cite{CCS:LuLu25} proposes a lattice-based ZK proof tailored to secure aggregation; however, we found that the protocol does not provide zero-knowledge or soundness. 
We defer more information in Section~\ref{sub:lzksa}.
\item Concrete performance gains: Experimental results convincingly demonstrate the significant performance gains that $\sys$ achieves, a 93$\times$ speedup over Prio and 17.5$\times$ over Elsa for vector dimensions of $L=2^{18}$, maintaining high efficiency for million-parameter models by decoupling secondary server overhead from the vector length.


\end{itemize}
\ifdefined\IsTPMPC
\begin{theorem}[Informal Theorem]
\small
    Assuming the hardness of the Hint Learning With Errors (LWE) problem, the Short Integer Solution (SIS) problem, the existence of collision-resistant hash functions, and simulation-extractable NIZKs, there exists a two-server secure aggregation protocol in the programmable random oracle model for high-dimensional vectors of size $L$ that achieves the following:
    \begin{itemize}[leftmargin=*, noitemsep, topsep=0pt]
    \item \textbf{Asymmetric Efficiency:} The protocol concentrates the heavy $O(L)$ bandwidth and computation costs on a single server, while the secondary server serves as a lightweight facilitator with costs \emph{independent} of the model dimension $L$ (scaling only with the number of clients $n$).
    \item \textbf{Malicious Security with Identifiable Abort:} The protocol preserves the privacy of honest clients' inputs against a malicious server and arbitrary malicious client collusion. Furthermore, it enforces \emph{Identifiable Abort}: any party that causes the protocol to halt is cryptographically identified.
\end{itemize}
\end{theorem}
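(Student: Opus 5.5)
This informal theorem is a summary of the protocol's guarantees, so I would prove it by fixing the concrete construction $\sys^{\text{LWE}}$ with identifiable abort and establishing two things for it: efficiency by direct accounting, and security by simulation. The construction is as follows. Each client masks its input vector $\bfx_i\in\mathbb{Z}_q^L$ with an LWE-style one-time pad $\bfA\bfs_i+\bfe_i$ and sends the masked vector to Server 1. It sends the short seed/secret $\bfs_i$, of dimension $\lambda$ and independent of $L$, in encrypted or shared form to Server 2. It attaches the block-decomposed lattice proof with independent restarts, which shows that the masked vector is well formed and that the input is bounded.

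Server 1 aggregates the masked vectors. Server 2 aggregates the secrets and returns $\sum_i \bfs_i$ together with a simulation-extractable NIZK of correct aggregation. Server 1 then unmasks $\sum_i \bfx_i$ by computing $\sum_i \bfA\bfs_i$, which costs $O(L)$, and rounds away the accumulated error. Because Server 2 only touches the $\lambda$-dimensional secrets, its cost is $O(n\lambda)$ by inspection. Server 1's cost is $O(nL)$, with the $\kappa$ factor coming from the $\kappa$ parallel repetitions of the proof. Correctness follows once the summed error is below $q/(2p)$, which I would check using the parameter constraints.

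For security I would use the real/ideal paradigm against an adversary corrupting one server together with an arbitrary set of clients, handling the two corruption cases separately and arguing each through a sequence of hybrids.

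\textbf{Server 1 corrupt.}
1. Program the random oracle and use the NIZK simulator to replace Server 2's proofs with simulated ones.
2. Replace the honest clients' proofs with simulated proofs.
3. Invoke Hint-LWE to replace $\bfA\bfs_i+\bfe_i$ for honest $i$ with uniform vectors. Hint-LWE, rather than plain LWE, is needed because Server 1 learns the aggregate $\sum_i \bfs_i$ and hence a leak on the honest secrets. This step requires correlated sampling: the simulator programs the honest masks so that their sum is consistent with the ideal output minus the corrupted clients' contributions.

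\textbf{Server 2 corrupt.} Honest clients' secrets reach Server 2 only in encrypted form, so semantic security of the encryption suffices here.

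\textbf{Malicious clients.} In both cases the simulator uses the relaxed-soundness extractor for the block proofs. This extractor yields witnesses $\bfs_{i,j}$ per block $j$, but does not guarantee they agree across blocks. I would repair this in the protocol: the per-block masks are bound to the single secret that Server 2 receives, so any inconsistency across blocks either makes Server 1's unmasking check fail or yields an SIS solution. This lets the simulator extract a well-defined $\bfx_i$ to submit to the ideal functionality.

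\textbf{Identifiable abort.} Every inter-server message carries a simulation-extractable NIZK and is signed or hash-committed through the collision-resistant hash. An abort by the honest server can therefore be attributed to the party whose proof fails to verify. Framing an honest party would require either forging a proof, which contradicts simulation-extractability, or finding a hash collision.

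I expect two steps to be the main obstacles. The first is the relaxed-soundness repair, which requires showing that cross-block inconsistency in the extracted witnesses reduces to SIS with the tighter $\sqrt d$ slack. The second is the Hint-LWE hybrid that absorbs the leaked aggregate secret. For the latter I would first try a direct reduction in which the Hint-LWE hint is exactly the sum of the honest secrets, with the parameters chosen so that the rounding error stays within bounds.
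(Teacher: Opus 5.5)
\textbf{Gap 1: the malicious-$S_1$ simulation fails because of message ordering.} Your malicious-$S_1$ simulation is essentially the paper's \emph{semi-honest} simulation (proof of Theorem~\ref{thm:atlas-lwe-sh}), and it breaks against a malicious $S_1$.

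The honest $\bfy_i$ are sent to $S_1$ in Phase~1. Only afterwards does $S_1$ adaptively fix the aggregated set $\cV$; it may simply declare honest clients invalid, which the functionality treats as dropouts. Only then can the simulator query $\cF_{\text{SecAgg}}$ and learn $\bfx_H$. Your correlated sampling (as in Theorem~\ref{thm:simulatability}) needs $\bfx_H$ at the moment the $\bfy_i$ are generated.

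After Phase~1, the only freedom left to the simulated $S_2$ is the $\lambda$-dimensional $\bfs^\Sigma$. A simulated NIZK that lets it lie does not help: $\sum_{i\in\cV}\bfy_i-\bfA\bfs^\Sigma$ ranges over a single coset of the $\lambda$-dimensional column space of $\bfA$. So for $L\gg\lambda$ it cannot be steered to decode to $\bfx_C+\bfx_H$.

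The missing idea is a second, programmable mask:
\begin{itemize}
\item The client sets $\bfy_i=\bfA\bfs_i+\bfe_i+\Delta(\bfx_i+\bfw_i)$ with $\bfw_i=\cH'(\term{seed}_i)$.
\item It gives $\term{seed}_i$ only to $S_2$, so $S_1$ cannot have queried $\cH'$ there.
\item It proves well-formedness of $\bfy_i$ relative to a \emph{commitment} to $\bfw_i$, not relative to the hash evaluation.
\end{itemize}
The simulator sends encryptions of $\mathbf{0}$. Once $\cV$ is fixed, it programs $\cH'$ at one honest seed to output that client's dummy mask shifted by $-\bfx_H$. Because $S_1$ already holds the commitment to $\bfw_i$, that commitment must be equivocable. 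The paper uses either Pedersen with generators of known discrete logarithm, or the Ajtai commitment $\bfB_s\bfs+\bfB_w\bfw+\bfB_\rho\bfrho$ with $\bfB_\rho$ produced by \textsf{TrapGen} and reopened via \textsf{SamplePre} (Theorems~\ref{thm:malicious-s1} and~\ref{thm:malicious-s1-lat}).

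\textbf{Gap 2: identifiable abort does not follow from NIZKs on server messages.} When $S_1$'s and $S_2$'s views of a client disagree, the culprit may be a client that sent inconsistent values to the two servers, or a server misreporting what it received. A NIZK attached to $S_2$'s message only certifies a computation on values $S_2$ \emph{claims} to have received, and it can fabricate those. So simulation-extractability gives no attribution.

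The paper gets non-repudiation from the clients:
\begin{itemize}
\item Each client signs $m_{i,\ell}\|\ell$ for server $\ell$.
\item The servers exchange and sign $\cV=\cV_1\cap\cV_2$ before any aggregate data flows.
\item A blame phase compares signed transcripts. Two valid conflicting signatures convict the client. Inability to produce a signed message, or an alarm raised on consistent messages, convicts the server.
\end{itemize}

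Relatedly, for $S_1$ to check $S_2$'s aggregate at all, it must hold binding, additively homomorphic commitments to the individual $\bfs_i$ ($C_s^{(i)}$ or $\bfB_s\bfs_i$). Your plan never gives $S_1$ such commitments.

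\textbf{Gap 3: the relaxed-soundness repair needs an explicit cross-server check.} An ``unmasking check'' at $S_1$ cannot localize the faulty client. In the paper, $S_2$ recomputes from its single $\bfs_i$ a compact binding value and sends a hash digest for $S_1$ to compare. That value is either the folded responses $\bfz^*_u$ (Lemma~\ref{lem:binding-aggregation}) or the per-block Ajtai commitments $\cC_j$. Binding then forces every extracted $\bfs_j$ to equal $\bfs_i$.

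This requires $S_2$ to see $\bfs_i$ in the clear, which is harmless since $S_2$ never sees anything depending on $\bfx_i$. Encrypting $\bfs_i$ to $S_2$, as you suggest, would make the check impossible.
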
 
\fi
\ifdefined\IsTPMPC \else A central design choice that enables these advances is asymmetry between the two servers: one server performs lighter computation and lower-bandwidth tasks, while the other undertakes the heavier verification and aggregation work. The lighter Server receives only $\mathcal{O}(\lambda)$ data per client (seeds) rather than $\mathcal{O}(L)$ (gradients), which for transformer-based models with $L \approx 100-120M$ cuts bandwidth by orders of magnitude. and performs computation independent of $L$, making commodity hardware (e.g., a laptop or \texttt{t3.micro}) sufficient. In particular, we present two versions, $\sys^{\text{BP}}$ and $\sys^{\text{LWE}}$, whose underlying proof systems rely, respectively, on bulletproofs and on lattice-based assumptions. We further subscript the two versions with $\text{SH}$ and $\text{MAL}$ to denote whether the protocol provides security against semi-honest servers (with malicious clients) or malicious servers, respectively. \fi
\ifdefined\IsTPMPC
\else
We state our main result informally as follows.

\begin{theorem}[Informal Theorem]
    Assuming the hardness of the Hint Learning With Errors (LWE) problem, the Short Integer Solution (SIS) problem, the existence of collision-resistant hash functions, and simulation-extractable NIZKs, there exists a two-server secure aggregation protocol in the programmable random oracle model for high-dimensional vectors of size $L$ that achieves the following:
    \begin{itemize}
    \item \textbf{Asymmetric Efficiency:} The protocol concentrates the heavy $O(L)$ bandwidth and computation costs on a single server, while the secondary server serves as a lightweight facilitator with costs \emph{independent} of the model dimension $L$ (scaling only with the number of clients $n$).
    \item \textbf{Malicious Security with Identifiable Abort:} The protocol preserves the privacy of honest clients' inputs against a malicious server and arbitrary malicious client collusion. Furthermore, it enforces \emph{Identifiable Abort}: any party that causes the protocol to halt is cryptographically identified.
\end{itemize}
\end{theorem}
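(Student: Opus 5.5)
This informal theorem is proved constructively: exhibit the protocol $\sys^{\text{LWE}}_{\text{MAL}}$ and establish efficiency and security separately. The protocol I would build works as follows. Each client $i$ samples a short seed/secret $\bfs_i$ and masks its input as $\bfc_i = \bfA\bfs_i + \bfe_i + \Delta\cdot \bfx_i$. It sends $\bfc_i$, together with a lattice ZK proof of well-formedness and input validity, to the heavy server. It sends the seed share (of size independent of $L$), encrypted or committed, to the light server. The light server aggregates the secrets $\sum_i \bfs_i$ over the surviving client set and returns this aggregate together with a simulation-extractable NIZK of correct aggregation. The heavy server then computes $\sum_i \bfc_i - \bfA\sum_i\bfs_i$ and rounds to recover $\sum_i \bfx_i$.

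\textbf{Asymmetric efficiency.} This part is bookkeeping. The heavy server processes $n$ ciphertexts of length $O(\kappa L)$ and verifies $n$ proofs, whose cost is linear in $L$ thanks to the block decomposition with independent restart. The light server handles only $O(\lambda)$-size objects per client: seed commitments and a single aggregate. Hence its communication and computation are $O(n\lambda)$, independent of $L$.

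\textbf{Security.} I would use a simulation-based ideal/real argument with an ideal functionality $\mathcal{F}_{\text{agg}}$ that outputs $\sum_{i\in H}\bfx_i$ plus corrupt inputs, or $(\mathsf{abort}, P^*)$ naming a corrupt party $P^*$. There are two corruption cases, each possibly together with arbitrary colluding clients: (a) the heavy server corrupt, and (b) the light server corrupt.

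In case (a) the simulator proceeds in three steps. First, it programs the random oracle to produce simulated client proofs. Second, it replaces honest ciphertexts by uniform vectors via Hint-LWE. Hint-LWE is needed because the aggregate secret $\sum\bfs_i$ is revealed and acts as a hint on individual $\bfs_i$. Third, it equivocates the light server's aggregate so that decryption yields the ideal output; this is possible because $\bfA$ has enough entropy once all but the aggregate is hidden.

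In case (b) the light server sees only commitments to short seeds, which are hiding under LWE/SIS-based commitment hiding. Its output is checked by the heavy server through NIZK verification, so any deviation yields a verifiable transcript that blames the light server.

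For malicious clients, in both cases the simulator uses the extractor of the lattice proof system. It extracts per-block witnesses $\bfs_{i,j}$ and inputs, and enforces cross-block consistency by having the protocol bind all blocks to one committed seed. Any inconsistency then breaks SIS binding. This is exactly the rectification of relaxed soundness mentioned in the contributions.

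\textbf{Identifiable abort.} This follows by noting that every inter-server message carries either a signature of knowledge or a simulation-extractable NIZK. An abort is therefore triggered only with a publicly checkable failing proof, which points to its author. Framing an honest party would require forging a proof, contradicting soundness or collision resistance of the hash used in Fiat–Shamir.

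\textbf{Main obstacle.} I expect the hardest step to be the hybrid for case (a). There the corrupt heavy server learns the exact aggregate secret and can choose the dropout set adaptively, and the extraction from rewinding-based lattice proofs for many clients must compose with programming of the random oracle. My first attempt would be a straight-line extraction argument via simulation-extractability, applied one honest client at a time. I would then apply Hint-LWE with the hint being the partial sum over the remaining honest clients, bounding the loss by a factor of $n$ in the hybrid count.
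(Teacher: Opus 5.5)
Your handling of malicious clients (extraction, plus binding every block to one committed secret so that inconsistency breaks SIS) matches the paper. Two steps fail, however.

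The first is step three of your case~(a) simulation, and it is the reason the statement is set in the programmable random oracle model. A malicious $S_1$ fixes $\cV$ only after it has seen the honest ciphertexts. The simulator therefore learns $\bfx_H=\sum_{i\in\cV\cap\cC_{\text{honest}}}\bfx_i$ only after $\{\tilde{\bfy}_i\}$ are fixed, and all it can still choose is $\tilde{\bfs}^\Sigma\in\bbZ_q^\lambda$. For fixed $\bfU=\sum_{i\in\cV\cap\cC_{\text{honest}}}\tilde{\bfy}_i$, the map $\tilde{\bfs}^\Sigma\mapsto\mathsf{Decode}(\bfU-\bfA\tilde{\bfs}^\Sigma)$ takes at most $q^\lambda$ values. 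For an environment that draws honest inputs at random, $\bfx_H$ is still unpredictable over far more than $q^\lambda$ admissible sums when $L\gg\lambda$. Hence no $\tilde{\bfs}^\Sigma$, even one found with a trapdoor for $\bfA$, makes $S_1$'s decryption equal the ideal output except with negligible probability. For honestly formed dummy ciphertexts this is just the binding of Lemma~\ref{lem:commitment-security}.

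The entropy of $\bfA$ cuts the other way: $\bfA$ is tall, so $\bfA\tilde{\bfs}^\Sigma$ has only $\lambda\log q$ bits of steering power against an $L$-dimensional output. For the same reason, the Hint-LWE simulator of Theorem~\ref{thm:simulatability} needs the target sum as \emph{input}. That is available only when the online set is fixed in advance, i.e.\ for a semi-honest $S_1$.

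The paper's malicious protocol (Figure~\ref{fig:mal-protocol}) therefore adds a second mask, $\bfy^{(i)}=\bfA\bfs^{(i)}+\bfe^{(i)}+\Delta\bfx^{(i)}+\Delta\cH'(\term{seed}^{(i)})$. The short seed goes to $S_2$ and reaches $S_1$ only after $\cV$ is fixed. The simulator of Theorem~\ref{thm:malicious-s1-lat} then programs $\cH'$ on an unqueried honest seed to absorb $\Delta\bfx_H$, at a cost to $S_2$ that is still independent of $L$. Each client's proof ties $\bfy^{(i)}$ to Ajtai commitments $\cC_j^{(i)}=\bfB_s\bfs^{(i)}+\bfB_w\bfw_j^{(i)}+\bfB_\rho\bfrho_j^{(i)}$ of that mask, which $S_2$ recomputes from the seed. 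So the commitment must also be equivocal: $\bfB_\rho$ is sampled with a trapdoor, and the simulator re-opens via Gaussian preimage sampling. In your proposal the random oracle is programmed only for Fiat--Shamir, so the simulator has no such late equivocation power.

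The second gap is identifiable abort. The aborts that matter are cross-server inconsistencies about a specific client, where a digest or aggregate check for client $i$ fails. A NIZK or signature of knowledge attached to a server's message only certifies that the server computed correctly from the value it \emph{claims} to have received. A malicious $S_2$ can substitute its own $\bfs_i'$ or seed and attach a perfectly valid proof, violating neither soundness nor collision resistance. Symmetrically, a malicious $S_1$ can falsely report a mismatch. The honest server cannot distinguish either case from a client that sent inconsistent data to the two servers, which is exactly the blame-shifting the paper wants to rule out.

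The missing ingredient is client-side non-repudiation (Section~\ref{sec:identifiable-abort}). Each client signs its message to each server with an EU-CMA scheme under a PKI. The servers exchange and sign the agreed valid set $\cV$ before any aggregation data flows. In a blame phase the signed client transcripts are compared: two valid client signatures on inconsistent messages convict the client. Failing to produce a signed message, or producing one that is in fact consistent, convicts the accusing server.
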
 
\fi

\ifdefined\IsTPMPC\else
\begin{table*}[!t]
\centering
\caption{ Costs are reported for $n$ clients and vector length $L$, where inputs $\in[0,2^t-1)$ Computation and communication (vector count) are given in $\mathcal(\cdot)$ notation, omitting constant factors. Communication cost omits the computational security parameter ($\lambda$) factor that accounts for bit-length in field and group elements. $\kappa$ is the statistical security parameter. For Heli, $\gamma$ denotes the fraction of dropped clients from computation. Server Security has three flavors: semi-honest (SH) server(s), malicious privacy (MP) where honest server can detect deviation, and its strengthened version identifiable abort (IA) where the honest server can identify the misbehaving party. 
}
\label{tab:costs}
\resizebox{\textwidth}{!}{%
\begin{tabular}{@{}lccc|ccc|cc@{}}
\toprule
 & \multicolumn{3}{c|}{\textbf{Communication Cost}} 
 & \multicolumn{3}{c|}{\textbf{Computation Cost}} 
 & \textbf{Server} & \textbf{Mal.} \\ 
 \cmidrule(lr){2-4} \cmidrule(lr){5-7} 
 & Server 1 & Server 2 & Client 
 & Server 1 & Server 2 & Client 
 & \textbf{Security} & \textbf{Clients}\\ \midrule
\textsc{Elsa} 
      & $n + n L t^{2}$ 
      & $n + n L t^{2}$ 
      & $tL + \lambda + \kappa+L \log L$ 
      & $n(tL + \lambda + \kappa)$ 
      & $n(tL + \lambda + \kappa)$ 
      & $tL + \lambda + \kappa $ 
      & MP & Yes \\[0.5em]
\textsc{Prio} 
      & $ntL\lambda+tL\log tL$ 
      & $ntL+tL\log tL$ 
      & $tL$ 
      & $n tL\log tL $ 
      & $n tL\log tL $ 
      & $tL\log tL$ 
      & SH & Yes \\[0.5em]
\textsc{Prio+} 
      & $n L t $ 
      & $n L t $ 
      & $L t $ 
      & $n L t\lambda$ 
      & $n L t \lambda$ 
      & $L$ 
      & SH & Yes \\[0.5em]
\textsc{Heli} 
      & $n (L+\log t)$ 
      & $(L+\gamma n\log n)$
      & $(L+\log tL)$ 
      & $ntL+L\sqrt{n\cdot 2^t}$
      & $L+ \gamma n$  
      & $tL$ 
      & MP & Yes \\
\midrule 
\textsc{$\sys$} 
      & $ nL$ 
      & $n\lambda$ 
      & $ L+\lambda$ 
      & $nL$ 
      & $n$ 
      & $L$ 
      & SH & No \\[0.5em]
\textsc{$\sys$$^{\text{BP}}$} 
      & $n (L+\log (L+\kappa))$ 
      & $n\lambda$ 
      & $ L+ \log (L+\kappa)+\lambda$ 
      & $nL + n\kappa+ n \lambda$ 
      & $n\lambda$ 
      & $L + \kappa+\lambda$ 
      & SH & Yes \\ [0.5em]
\textsc{$\sys$$^{\text{BP}}$} 
      & $n (L+\log (L+\kappa))$ 
      & $n\lambda$ 
      & $ L+ \log (L+\kappa)+\lambda$ 
      & $nL + n\kappa+ n \lambda$ 
      & $n\lambda+nL$ 
      & $L + \kappa+\lambda$ 
      & IA & Yes \\[0.5em]
\midrule
\textsc{$\sys$$^{\text{LWE}}$} 
      & $n \kappa L$ 
      & $n \cdot \kappa \lambda$ 
      & $\kappa L $ 
      & $n\kappa L$ 
      & $n\kappa \lambda$ 
      & $L \kappa$ 
      & SH & Yes \\ [0.5em]
\textsc{$\sys$$^{\text{LWE}}$} 
      & $n \kappa L$ 
      & $n\lambda$ 
      & $\kappa L $ 
      & $n\kappa L$ 
      & $n\kappa \lambda$ 
      & $L \kappa$ 
      & IA & Yes \\ [0.5em]
\bottomrule
\end{tabular}%
}
\end{table*}
\fi
\ifdefined\IsTPMPC
\bibliographystyle{abbrv}
\bibliography{bib/abbrev2,bib/crypto_crossref,bib/sample-base}
\appendix 
\section*{Appendix Overview}
The Appendix is organized as follows: 
\begin{itemize}
    \item Section~\ref{sec:related_work_comparison} provides a comprehensive comparison with related work and a detailed technical overview, including the protocol sequence diagram. 
    \item[] Section~\ref{sub:tech} provides a technical overview of our construction.
    \item Section~\ref{sec:prelims} establishes the necessary notations and mathematical preliminaries
    \item Section~\ref{sec:design} presents the full design specifications for the \textsc{TAPAS}$^{\text{BP}}$ and \textsc{TAPAS}$^{\text{LWE}}$ variants, covering both semi-honest and malicious server settings. 
    \item Section~\ref{sec:exp} provides detailed experimental evaluations of $\sys$. 
    \item Section~\ref{sec:crypto-app} details the formal cryptographic assumptions, including Learning With Errors (LWE) and Hint-LW.
    \item Section~\ref{sub:lzksa} offers a critical analysis of the LZKSA lattice-based proof system. 
    \item Section~\ref{sec:deferred} contains the complete security proofs for our core theorems and auxiliary lemmas.
\end{itemize}
\section{Detailed Comparison with Related Work}
\label{sec:related_work_comparison}
\else
\subsection{Detailed Comparison with Related Work}
\label{sec:related_work_comparison}
\fi

Table~\ref{tab:costs} summarizes the asymptotic complexity of $\sys$ against state-of-the-art protocols. The column labeled server security admits three values: {MP} (Malicious Privacy) indicates whether honest clients' inputs remain private even if the servers are malicious. SH (semi-honest) denotes privacy against semi-honest servers (but malicious clients), and IA (identifiable abort) means that an honest server can detect malicious behavior and abort, revealing the identity of the misbehaving party. The final column indicates whether the protocol is robust against active client deviations, such as malformed ciphertexts or invalid plaintexts. 
As can be seen: $\sys$ offers the highest level of security among all the other candidates. \textsc{$\sys$} is distinguished by its communication and computational efficiency, making it well-suited for large-scale secure aggregation. Its variants (\textsc{$\sys^{\text{BP}}$} and \textsc{$\sys^{\text{LWE}}$}) offer enhanced security and robustness, with only modest increases in cost, and remain competitive with or superior to existing two-server protocols in most settings. We discuss in detail below:

\begin{description}[nosep, leftmargin=*]
    \item[Comparison with Prio~\cite{Prio} and Prio+~\cite{SCN:AGJOP22}:]  Prio-based systems employ Secret-Shared Non-Interactive Proofs (SNIPs). Clients split their data into additive shares and prove validity properties (e.g., range limits) by evaluating an arithmetic circuit over these shares. Prio+ optimizes this by performing a Boolean conversion to reduce communication. These systems impose a {symmetric computational burden}. To verify the SNIPs, every server must process the full vector of dimension $L$. Further, these works only provide privacy or guarantees against a semi-honest server. 
    \item[Comparison with Elsa~\cite{SP:RSWP23}:] Elsa employs an interactive "trust-but-verify" approach in which servers collectively verify client inputs and reject batches that fail consistency checks.
Elsa achieves Malicious Privacy (MP), a property in which even if a server is malicious, an honest client's input remains private. If a consistency check fails, the honest server cannot distinguish between a malicious client sending bad data and a malicious server falsely reporting a check failure. Elsa also does not provide malicious server security for the final output of the computation.
\item[Comparison with Heli~\cite{Heli}:] Heli is an aggregation protocol based on the Discrete Logarithm Assumption (DL). At its core, Heli builds atop the key-homomorphic PRF-based private stream aggregation~\cite{NDSS:SCRCS11}. 
Heli faces two critical limitations due to its DLP foundation. First, recovering the aggregate requires solving the discrete log problem, which necessitates using small integers to ensure tractability. Second, validity proofs rely on bit-decomposition, expanding the vector size from $L$ to $L \cdot t$ (where $t$ is the bit-width). It should also be noted that a client in Heli communicates only with the heavy load server. 

These are the improvements that $\sys$ offers over Heli: 
\begin{enumerate}
    \item \textbf{Better Security:} $\sys$ achieves \textbf{Identifiable Abort (IA)}. This is a significant improvement over prior work. By enforcing a signed binding phase before aggregation, $\sys$ ensures that any protocol deviation is cryptographically attributable. If the protocol halts, the honest parties output the cheater's specific identity, thereby preventing anonymous stalling attacks.
    \item \textbf{Asymmetry:} Unlike symmetric workload protocols such as \cite{Prio,SCN:AGJOP22,SP:RSWP23}, the asymmetric Heli optimizes for client count $n$ (handling dropouts), $\sys$ optimizes for model dimension $L$. Given that modern FL models are massive ($L \gg n$), our $L$-independent facilitator significantly reduces the system-wide cost in deep learning contexts.
    \item \textbf{Input Independence:} $\sys$ supports arbitrary integer inputs without decryption overhead, as lattice decryption is a linear operation rather than an exponential search. Meanwhile, Heli restricts the input domain for efficient recovery of the aggregate. 
    \item \textbf{Proof Efficiency:} While both Heli and a variant of $\sys$ rely on using bulletproofs~\cite{SP:BBBPWM18} as a core building block for zero-knowledge proofs, $\sys$'s careful optimization ensures that the length of the proof vector does not incur a multiplicative $t$ factor where $t$ is the bit length of the inputs. 
    By integrating Bulletproofs over lattice commitments, our proof sizes scale with $\mathcal{O}(L)$ rather than the bit-expanded $\mathcal{O}(tL)$ required by Heli. Furthermore, we present a fully lattice-based proof system that avoids expensive group exponentiation and relies solely on field arithmetic. 
\end{enumerate}
\end{description}




\begin{figure}[!tb]
    \centering
    \includegraphics[width=\linewidth]{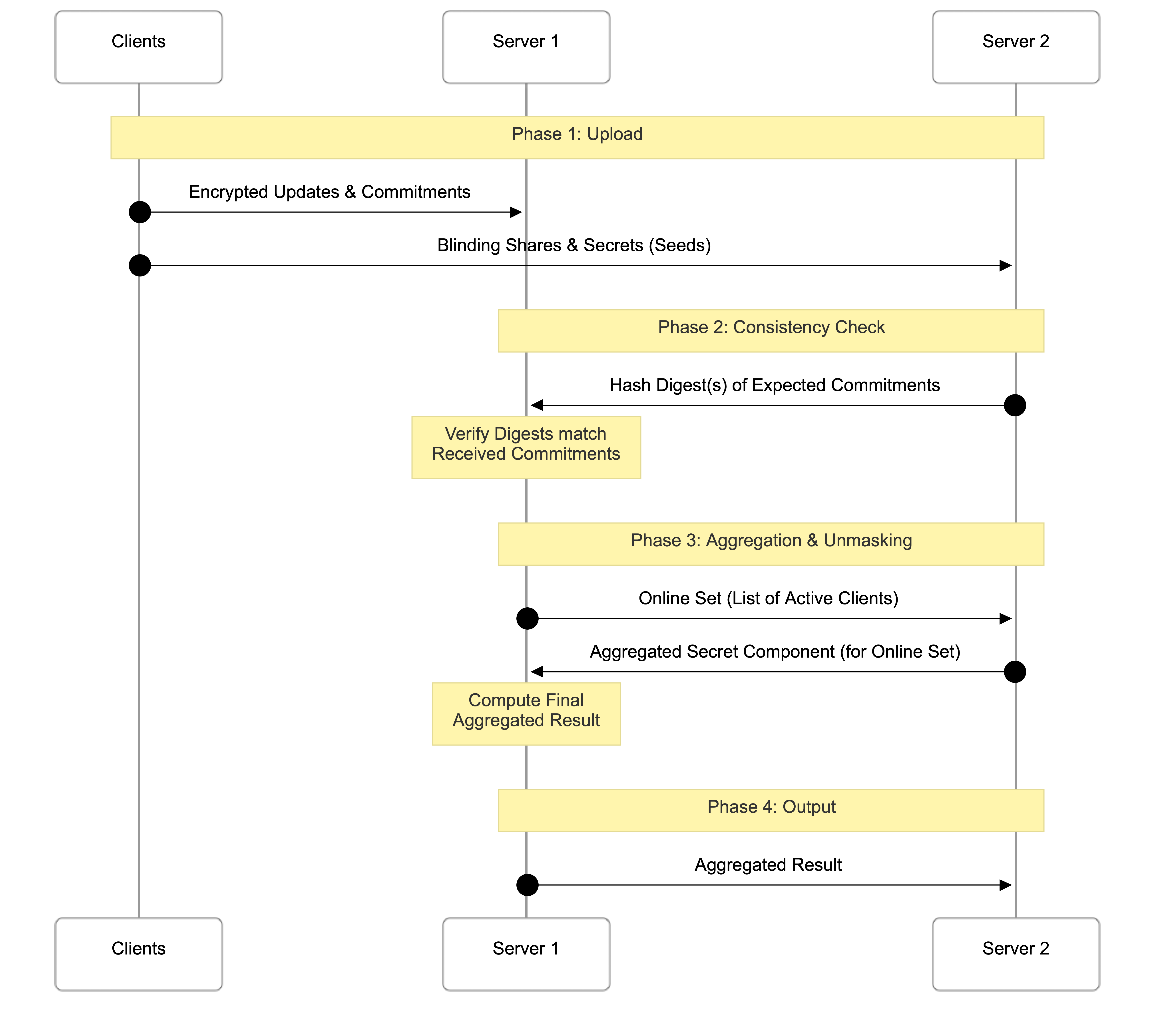}
    \caption{Sequence Diagram of $\sys$ showing client communication with the two servers and additional rounds of communication between the two servers. By updates, we are indicating the model updates that is to be sent to the server and are the inputs to our secure aggregation protocol. The unmasked result is communicated with all the users, in the semi-honest version. The version that is secure against malicious servers also allow the participants to verify the aggregate using commitments. }
    \label{fig:comm-model}
\end{figure}

\subsection{Technical Overview}
\label{sub:tech}
In this section, we outline our approach towards building $\sys$. As is typical in a two-server setting, a client communicates directly with both
servers. However, in a departure from prior work, we will avoid the standard approaches of sharing the inputs between the two servers but instead
 rely on a careful combination of encryption and homomorphism. The protocol's sequence diagram is described in Figure~\ref{fig:comm-model}, proceeding in 4 phases:
\begin{itemize}
    \item \textbf{Phase 1 (Upload):} Clients distribute encrypted inputs to Server 1 ($S_1$) and a much shorter secret component (seed) to Server 2 ($S_2$).
    \item \textbf{Phase 2 (Consistency Check):} $S_1$ validates the received commitments against hash digests provided by $S_2$ to detect malformed inputs.
    \item \textbf{Phase 3 (Unmasking):} Upon identifying the active client set, $S_2$ provides the necessary aggregated secret components to $S_1$, which then computes the final plaintext result.
    \item \textbf{Phase 4 (Output):} The unmasked aggregate is finalized in the semi-honest version. The version that is secure against malicious servers also allow the participants to verify the aggregate using commitments. 
\end{itemize}

\paragraph{Masking the Inputs.} The core technique will employ the Regev's encryption scheme based on the \LWE\ Assumption, where the plaintext modulus is $\bbZ_p$ and the ciphertext modulus is $\bbZ_q$, with $\Delta$ as the encoding parameter where $\Delta = \lfloor q/p \rfloor$.  Each client $C_i$ holds a private input vector $\mathbf{x}_i\in\bbZ_p^L$. $C_i$ samples a secret seed $\mathbf{s}_i \in \mathbb{Z}_q^{\lambda}$ and a noise vector $\mathbf{e}_i$ from a discrete Gaussian distribution $\chi$. The client computes the LWE ciphertext $\mathbf{y}_i$:
    \begin{equation}
        \mathbf{y}_i = \mathbf{A}\mathbf{s}_i + \mathbf{e}_i + \Delta \cdot \mathbf{x}_i \in \mathbb{Z}_q^L
    \end{equation}
    Client $C_i$ transmits $\mathbf{y}_i$ to the Server $S_1$.
For efficient computation, note that $\bfA$ can be set to be the public parameter which can be generated by hashing various common parameters of the scheme. In the Federated Learning setting, this can be the global model weights. By binding $\bfA$ to these global model weights, we can prevents attacks such as those listed by Pasquini~\etal~\cite{CCS:PasFraAte22}.  

\paragraph{Additive Homomorphism.} Now, observe that Regev encryption is homomorphic in the following sense: 
\[
\sum_{i=1}^{n} \bfy_i = \sum_{i=1}^n (\bfA\bfs_i+\bfe_i+\Delta\cdot \bpsain_i)=\bfA\cdot \sum_{i=1}^{n} \bfs_i + \sum_{i=1}^{n} \bfe_i + \Delta\cdot \sum_{i=1}^{n} \bpsain_i
\]
Note that the server needs to recover $\sum_{i=1}^{n} \bpsain_i$. For which it needs \emph{only} $\sum_{i=1}^{n} \bfs_i$. We will sample $\bfe_i$ in such a way that even with $\sum_{i=1}^{n} \bfe_i$, the decryption can still succeed even if the server does not receive $\sum_{i=1}^{n} \bfe_i$. Here $n$ is the number of clients.  

\paragraph{Role of the Second Server.}
The additive homomorphism guarantees that unmasking at server $S_1$ requires only the aggregate mask $\sum_{i=1}^{n} \bfs_i$, not the individual masks $\bfs_i$. Indeed, the knowledge of the individual $\bfs_i$ by the first server will enable it to recover individual inputs, violating the privacy of the inputs. Therefore, one needs to ensure that the first server only learns the sum of $\bfs_i$. This is the role undertaken by the second server $S_2$. Each client sends $\bfs_i$ to the second server, which simply adds up the received values and forwards them to the server. While this seems straightforward, one needs to include additional communication to ensure that both servers agree on a single online set of clients with respect to which both inputs and the seeds are summed. This is for the semi-honest server setting. 

\paragraph{Incorporating zero-knowledge proofs.}
Consistent with prior work, $\sys$ allows the client to generate a Zero-Knowledge Proof (ZKP) attesting to two main properties: (1) the client has executed the masking behavior correctly, and (2) the input vector $\bfx_i$ adheres to the infinity norm bound $\|\bfx_i\|_\infty \leq t$ for a public value $t$ (or the $L_2$ norm of the witness $\bfx_i$). The checks on norm bounds were considered to be sufficient to mitigate poisoning attacks~\cite{NDSS:NasHayCri22,SP:SHKR22}. To enforce these checks, $\sys$ utilizes a ``commit-and-prove'' strategy. We offer two distinct instantiations of this protocol - one based on bulletproofs over groups of prime order where DDH is hard and the other relying on lattice-based hardness assumptions. Each variant offer a tradeoff and users can choose variants based on specific performance or security needs. We now look at the variants in greater detail. We begin by looking at the group-based approach which is a warm up to our lattice-based zero knowledge proof system. The lattice-based proof systems come with inherent challenges that we tackle in novel ways, as described later in this section. Our focus over the next two subsections (Section~\ref{subsub:group},\ref{subsub:lwe}) are to create proof systems that protect against malicious client behavior and against a malicious server $S_2$. We describe how we achieve privacy against malicious server $S_1$ in Section~\ref{subsub:malicious}

\subsubsection{Group-Based Zero Knowledge Proof Instantiation.} 
\label{subsub:group}
This approach utilizes Bulletproofs and performs proofs over a group $\bbG$ of prime order $q'$.  It guarantees the exact satisfaction of the bound $\|\bfx_i\|_\infty \leq t$ (i.e., there is zero slack in the proof). It is highly efficient in communication, using Pedersen commitments, a single group element can commit to an entire vector. It requires computationally expensive group exponentiations. Additionally, it necessitates a group order $q' \gg q$, meaning the proofs cannot occur directly over the smaller LWE modulus $q$. We employ a suite of zero-knowledge building blocks to verify the correctness of LWE-based encryptions. The core of our system relies on {Pedersen Commitments}~\cite{C:Pedersen91} and {Bulletproofs} \cite{SP:BBBPWM18} to handle linear and quadratic relations over committed vectors in $O(\log L)$ communication. We adapt the techniques pointed out in the prior works of ACORN~\cite{USENIX:BGLLMR23} and Armadillo~\cite{CCS:MGKP25}. The proof proceeds in three parts: 

\begin{itemize}
    \item Exact Norm Proofs: To ensure security and correctness, we must prove that witnesses $\mathbf{w}$ satisfy $\| \mathbf{w} \|_2 \leq B$ or $\| \mathbf{w} \|_\infty \leq B$ where $B$ is an integer constant, specific to the underlying use case. Following \cite{EC:GenHalLyu22}, we use Lagrange's Four-Square Theorem to reduce these to algebraic relations. For example, $\| \mathbf{w} \|_2 \leq B$ is proven by showing $B^2 - \|\mathbf{w}\|^2 = \sum_{j=1}^4 \alpha_j^2$ for auxiliary integers $\alpha_j$. 
However, since these relations are checked \textit{modulo $q$}, the prover must also demonstrate that no modular wrap-around has occurred. This necessitates proving that all witness components are "sufficiently small" relative to $q$.
\item Approximate Proof of Smallness: Because exact range proofs for every component are expensive, we use an \emph{approximate proof} as a bridge. This system shows $\| \mathbf{w} \|_\infty < B'$ for some gap $B' \gg B$ to ensure the algebraic norms remain valid over the integers. 
The prover masks the witness with a high-entropy vector $\mathbf{y}$ (sampled according to a statistical security parameter $\kappa$) and uses rejection sampling. The verifier checks a random projection:
\[ \langle \mathbf{R}^\top \mathbf{r}, \mathbf{w} \rangle + \langle \mathbf{r}, \mathbf{y} \rangle = \langle \mathbf{z}, \mathbf{r} \rangle \]
where $\mathbf{R} \in \mathbb{Z}^{\kappa \times L}$. This ensures that the exact norm proofs are sound without requiring the overhead of bit-decomposition for every vector entry and $\bfr$ is a random challenge vector. 
\item Finally, we verify the LWE relation $\mathbf{y} = \mathbf{A}\mathbf{s} + \mathbf{e} + \Delta \mathbf{x} \pmod q$ by collapsing the $L$ equations into a single inner product $\langle \mathbf{y}, \mathbf{r} \rangle$ using a random challenge vector $\mathbf{r}$. This reduced statement is then proven using a linear proof $\Pi_{\text{linear}}$ over the combined witness $(\mathbf{s} \| \mathbf{e} \| \mathbf{x})$.
\end{itemize}

While the Zero-Knowledge Proof (ZKP) convinces $S_1$ that the client's ciphertext $\mathbf{y}_i$ is well-formed with respect to a commitment $\mathbf{c}_i$, we must introduce two additional mechanisms to prevent attacks on the secret keys and the aggregation process.

\begin{itemize}
    \item \textbf{Input Consistency (Preventing Client Equivocation).} 
    To ensure a malicious client does not provide a secret $\mathbf{s}_i$ to $S_2$ that differs from the one committed to $S_1$ during the ZKP phase, we leverage the binding property of the Pedersen commitment.
    \begin{enumerate}
        \item The client sends the explicit opening $(\mathbf{s}_i, r_i)$ to $S_2$.
        \item $S_2$ recomputes the commitment $\mathbf{c}'_i = \mathsf{Com}(\mathbf{s}_i; r_i)$ and sends a cryptographic digest $h_i = H(\mathbf{c}'_i)$ to $S_1$.
        \item $S_1$, who already holds the original commitment $\mathbf{c}_i$ from the ZKP, computes $h'_i = H(\mathbf{c}_i)$ and verifies that $h_i = h'_i$. Under the collision resistance of the hash function, we are guaranteed that consistency is maintained. 
    \end{enumerate}
    This ensures that $S_1$ and $S_2$ hold consistent views of the client's secret. $S_1$ trusts the value via the ZKP verification, while $S_2$ gains assurance via the digest comparison with $S_1$.

    \item \textbf{Verifiable Aggregation (Preventing malicious $S_2$).} 
    Server $S_2$ is tasked with computing the aggregate secret $\mathbf{s}_\Sigma = \sum_{i \in \mathcal{U}} \mathbf{s}_i$ for the valid set of clients $\mathcal{U}$. To prevent a malicious $S_2$ from injecting errors into this sum, we utilize the homomorphic property of Pedersen commitments.
    \begin{enumerate}
        \item In addition to $\mathbf{s}_\Sigma$, $S_2$ computes and sends the sum of the randomness $r_\Sigma = \sum_{i \in \mathcal{U}} r_i$.
        \item $S_1$ computes the commitment to the received aggregate: $\mathbf{c}_\Sigma = \mathsf{Com}(\mathbf{s}_\Sigma; r_\Sigma)$.
        \item $S_1$ verifies that this matches the product of the individual client commitments it stored previously:
        \[ \mathbf{c}_\Sigma \overset{?}{=} \prod_{i \in \mathcal{U}} \mathbf{c}_i \]
    \end{enumerate}
    Since the individual commitments $\mathbf{c}_i$ were verified against the ZKP and cross-checked for consistency in the previous step, $S_1$ is assured that $\mathbf{s}_\Sigma$ is the correct sum of the valid secrets. Even if $S_2$ colludes with a client, they cannot generate a valid opening $(\mathbf{s}_\Sigma, r_\Sigma)$ that satisfies the homomorphic relation without breaking the binding property of the commitment scheme.
\end{itemize}
We will defer a discussion on security against malicious server $S_1$ to later in this exposition. 


\subsubsection{Lattice-Based Zero Knowledge Proof Instantiation.}
\label{subsub:lwe} 
This approach is designed to align more closely with the system's underlying cryptographic assumptions. We first formally show that the LWE masking is inherently a commitment to vector $\bfx_i$ using randomness $\bfs_i,\bfe_i$. We then carefully build a lattice-based zero-knowledge proof specifically tailored to our intended protocol. This includes careful optimizations. All proofs are performed directly over the LWE modulus $q$, avoiding the need for a separate, larger group field. The proof guarantees are looser, proving only that $\|\bfx_i\|_\infty \leq \gamma \cdot t$, where $\gamma > 1$ represents a soundness slack, which is sufficient for most applications.

A central challenge in LWE-based aggregation is proving that the noise vector $\mathbf{e} \in \mathbb{Z}^L$ is small without revealing it. Standard lattice $\Sigma$-protocols operate monolithically: the prover masks the entire vector and performs rejection sampling on the full response $\mathbf{z}$.
This creates a harsh trade-off: to ensure the rejection probability is non-negligible for high dimensions (for instance $L \approx 10^5$), the masking noise $\sigma$ must be set extremely large. This introduces a polynomial "slack" ($\sqrt{L}$), yielding a bound that is far looser than the actual witness, thereby degrading model accuracy. $\sys$ adopts a block-wise rejection sampling paradigm that decouples the protocol's convergence from the vector dimension $L$.
\begin{itemize}
    \item \textbf{Decomposition \& Independent Restarts:} We partition the witness into small blocks of size $d \ll L$. Crucially, we treat the rejection condition for each block independently. If a specific block fails the rejection check (i.e., falls into the tail of the distribution), the prover restarts the protocol \emph{only for that block}, leaving successful blocks untouched. This transforms the convergence time from exponential in $L$ to linear in $L$, allowing us to target significantly tighter distributions. It is to be noted that this decomposition and independent parallel execution of $\Sigma$ could allow a malicious client to use a different $\bfs_j$, per block. However, for our intended application to building a two-server secure aggregation protocol, we have built in additional checks to remedy this attack. 
    \item \textbf{Bimodal Gaussian Targeting:} Leveraging the independent restart policy, we employ Bimodal Rejection Sampling~\cite{C:DDLL13}. By targeting a distribution that matches the shape of the witness-plus-noise, we reduce the required standard deviation $\sigma$. 
    \item \textbf{Fixing LZKSA:} We note that the recent LZKSA protocol~\cite{CCS:LuLu25} attempts a similar efficiency gain but fails to achieve soundness. LZKSA permits norm checks to wrap around modulo $q$, allowing adversaries to hide unbounded inputs. $\sys$ enforces strict integer-based checks over the decomposed blocks, providing the first sound realization of this high-efficiency proof strategy. We discuss more in Section~\ref{sub:lzksa}
\end{itemize}
As before, we need a crucial line of defense against malicious clients sending inconsistent $\bfs_i$ to server $S_2$. We observe that the $\Sigma$-protocol response can be viewed as a commitment to the secret $\bfs$. In particular, if server $S_2$ is given both $\bfs$ and $\bfr_s$, then it can compute the response $\bfz_s = \bfr_s + c \bfs$. $\bfz_s$ can now be hashed to produce a digest, which in turn, can be used by $S_1$ to validate against the choice of $\bfz_s$ it has received. Unfortunately, the $\Sigma$-protocol actually proceeds in $k$ blocks where $k=L/\lambda$ or $k\lambda=L$. In order for $S_2$ to compute each response $\bfz_{j,s}$ for $j=1,\ldots,k$, it will require $\bfr_{j,s}$ for $h=1,\ldots,k$. In other words, $S_2$ would require $k\cdot \lambda = L$ elements in total, a situation that we want to avoid. Instead, we adopt a randomization strategy: we use a randomized linear combination to compress the proof witnesses. The client aggregates randomness vectors across $k$ blocks using verifiably random binary coefficients $\alpha_{j}$ derived from the proof transcript. Specifically, the client computes $\bfr_s^\ast = \sum_{j=1}^{k}\alpha_j\cdot \bfr_s$ and provides $S_2$ with $\bfr_s^\ast$ and $\alpha_j$ values (which in turn can be generated from a single seed). $S_2$ can now compute $\bfz_{s2}^\ast = \bfr_s^\ast + \bfs_i\cdot \sum_{j=1}^{k} \alpha_j c_j $. Note that using $\bfz_{j,s}$, server $S_1$ can compute a $\bfz_{s1}^\ast=\sum_{j=1}^{k} \alpha_j \cdot \bfz_s$. A hash digest, as before, can be used to establish correctness. Thus, we only send one vector $\bfr_s^\ast$, of length $\lambda$, towards verification. 

\paragraph{Using Ajtai Commitments.}
Unfortunately, using the response of a $\Sigma$ protocol as a commitment to $\bfs$ fails when $S_1$ needs to verify if $S_2$ has behaved correctly. 
To resolve this, we leverage the homomorphic properties of the Short Integer Solution (SIS) problem~\cite{Ajtai}. 
We introduce a deterministic commitment (or homomorphic hash) $\bft := \bfB\bfs$, which is binding under the SIS assumption but generally not hiding.
This necessitates augmenting the client's $\Sigma$-protocol, which originally proved knowledge of the LWE secret $\bfs$, to also prove the correct computation of $\bft$ with respect to $\bfs$. Verification then proceeds in two steps. 
First, for \textit{consistency}, $S_2$ recomputes the expected tag $\bft^\ast := \bfB\bfs$ using the secret $\bfs$ received from the client. $S_2$ sends a hash digest of $\bft^\ast$ to $S_1$, who compares it against the digest of the received $\bft$. This ensures that any client participating in the aggregation has provided consistent inputs to both servers.
Second, for \textit{aggregation}, $S_1$ verifies the additive homomorphism by checking if $\sum_i \bft^{(i)} = \bfB\bfs^{\Sigma}$, where $\bfs^{\Sigma}$ is the aggregate sum returned by $S_2$. 
This binding mechanism prevents a malicious $S_2$ from forging an invalid sum; any deviation would break the SIS hardness. While this, as presented, is sufficient for security against malicious server 2, unfortunately for security against malicious server 1 we will need a stronger property of the commitment scheme - that of equivocation, which we will address next. 

\subsubsection{Security Against Malicious Server $S_1$.} 
\label{subsub:malicious} Proving privacy against a malicious server typically requires the simulator to manipulate the aggregate sum. This is a feature commonly observed in single-server protocols~\cite{CCS:BBGLR20,SP:MWAPR23,C:KarPol25,C:BGLRS25} where this is achieved by adding a secondary mask produced by a function $\mathcal{H}$, modeled as a programmable random oracle.  The simulator waits until the adversary chooses the corrupt set, then programs $\mathcal{H}(x)$ to inject a specific correction term into the sum. The core philosophy behind this approach is to ensure that the communication is independent of the vector length as $|x|\ll L$ where $L$ is the length of the vector being masked. 

This same approach can be undertaken for our asymmetric two-server case for malicious $S_1$. For our protocol, the new masking function is $\bfy=\bfA\bfs+\bfe+\Delta \bfx+\cH'(\term{seed})$ for some $\term{seed}$ that is expanded by $\cH'$. The client sends $\term{seed}$ to $S_2$. Once the online set is fixed, $S_2$ sends individual seeds to $S_1$. $S_1$ now applies $\cH'$ on each seed $\term{seed}$ to remove from the sum of the ciphertexts. If $S_1$ is malicious, the simulator waits until the online set is fixed to query the functionality and then correct, with the help of random oracle programming, to ensure that the sum of ciphertexts decrypts to the actual result expected by the adversary controlling $S_1$. However, this approach of using a second mask appears to collapse when clients are required to prove correctness of their inputs where the witness now includes $\term{seed}$ and the proof $\pi$ is to show that $\bfy$ is correctly formed which involves proof of correct computation of $\cH'(\term{seed})$. Implicitly, it appears to ask for a proof $\pi$ to show that there is a witness $x$ such that $\pi=\mathcal{H}(x)$. Leaving aside the fact that zero-knowledge proofs to prove correct computation of hash function~\cite{USENIX:GiaMadOrl16} are expensive, in our case after $S_1$ has verified $\pi$, it receives both the input $x$ and the newly programmed value $y'$ with the expectation that $\pi$ verifies even given witness $x$ and $y'$. Typically, $\pi$ contains a binding commitment to $y$ which might fail with $y'$. Prior works on single-server secure aggregation with input validation~\cite{USENIX:BGLLMR23,CCS:MGKP25} have constructions only based on security against semi-honest server.

In this work, we circumvent this barrier by introducing an equivocable commitment scheme, restoring the simulator's ability to program the oracle consistently. Further, we also avoid the expensive proof of correct computation of the hash function, by using the asymmetry. Instead of proving the computation of the hash function inside the ZK proof (which is both expensive and restrictive), we decouple the \textit{value} from its \textit{derivation}. We now look at the way in which we achieve privacy against malicious servers in both our variants:

\begin{itemize}
    \item \textbf{Group-Based Construction:} In our group based version (See Figure~\ref{fig:mal-bp-protocol}), the client commits to the output and not the preimage. In other words, the client commits directly to the mask value $\bfw$ using a Pedersen commitment $C_w = \Commit(\bfw;r_w)$. The ZK proof $\pi$ only attests that the ciphertext $\bfy$ is well-formed relative to the value inside $C_w$ (i.e., $\bfy = \dots + \bfw$). Crucially, $\pi$ implies nothing about how $\bfw$ was generated. $S_2$ is provided both $\term{seed}$ and $\bfw$ which it can use to generate $C_w^\ast$ on its own. Again, a digest allows $S_2$ to convince $S_1$ that the commitment $C_w$ is valid against $S_2$'s own inputs. Once the online set is fixed, malicious server $S_1$ now expects $\term{seed}$ along with $R_w=\sum r_w^{(i)}$. Fortunately, the Pedersen commitment is equivocal - specifically the simulator chooses the generators such that it knows the discrete logarithm of all the bases. If $\mathcal{H}(\term{seed}^{(i)}$ was programmed to be ${\bfw^{(i)}}^\ast$, for some client $i$, the simulator (which controls $S_2$) can compute the correct ${r_w^{(i)}}^\ast$ such that $\Commit(\bfw^{(i)};r_w^{(i)})=\Commit({\bfw^{(i)}}^\ast;{r_w^{(i)}}^\ast)$. This allows for the simulator to successfully equivocate on the commitment. 

\item \textbf{Lattice-Based Construction:} Recall that the basic Ajtai commitment, defined as $\bft:=\bfB\bfs$ where $\bfB$ is a public matrix $\bfs$ is the message, and $\bft$ is the commitment, is not equivocal. To achieve malicious security in the lattice setting, we face the same challenge: ensuring inputs are correctly formed (i.e., $\bfw = \mathcal{H}'(\term{seed})$) without binding the simulator to the hash output before the adversary is defined.
Standard approaches that prove the hash computation in Zero-Knowledge are incompatible with the Programmable Random Oracle Model (PROM) used for simulation.

We resolve this by employing an equivocal SIS-based commitment scheme that decouples the validity of the mask from its generation.
Instead of using a standard binding matrix $\bfB$, we construct a wide, underdetermined commitment matrix $\bfB = [\bfB_w | \bfB_\rho]$.
\begin{enumerate}
    \item \textbf{Trapdoor Setup:} The simulator generates $\bfB_\rho$ with an associated lattice trapdoor $\mathbf{T}_\rho$ (using algorithms like \cite{GPV08}). This trapdoor allows finding short preimages for any syndrome.
    \item \textbf{Equivocal Opening:} In the real protocol, clients simply commit to their mask $\bfw$ as $\bft_w = \bfB_w \bfw + \bfB_\rho \bfrho$. The ZK proof $\pi$ only attests that the ciphertext $\bfy$ is consistent with the committed $\bfw$, without proving anything about $\mathcal{H}'(\rho)$.
\end{enumerate}
As before, $S_2$ compares digests with $S_1$ to ensure that the $\term{seed}$ eceived by $S_2$ satisfies the commitment $\bft_w$ received by $S_2$. Crucially, this structure enables the simulator to "break" binding. 
It generates a dummy commitment $\bft_w$ at the start. Later, when it programs the oracle $\mathcal{H}'(\term{seed}) \to \bfw_{\text{target}}$ to fix the aggregate sum, it uses the trapdoor $\mathbf{T}_\rho$ to sample a short randomness $\bfrho'$ such that $\bfB_\rho \bfrho' = \bft_w - \bfB_s \bfw_{\text{target}}$.
This valid opening makes the transcript indistinguishable from an honest execution where the mask was genuinely derived from the seed, satisfying the simulation requirements.
\end{itemize}

\section{Preliminaries} 
\label{sec:prelims}
We consider a federated learning framework. There exist $n$ clients, with each client $C^{(i)}$ owning a dataset $D^{(i)}$. The server holds the ML model $\Theta$. 
In FL, the server first sends $\Theta$ to the clients, and each client trains on its local dataset $D^{(i)}$ to generate the updated weights $m^{(i)}$. Meanwhile, the server computes the average of these model updates $\set{m^{(i)}}$ to update its global model to $\Theta'$. In the next iteration, $\Theta'$ is sent back for the next update. The goal is to enable the server to compute the average and, thereby, the new model $\Theta'$, while keeping the model weights confidential. To this end, we employ an intermediary party, the second server.  

\paragraph{Communication and Threat Model.} 
$\sys$ operates within the traditional two-server setting with $n$ clients~\cite{SP:RSWP23,Prio,SCN:AGJOP22}, where the threat model assumes that the two servers cannot collude. We designate these as servers $S_1$ and $S_2$. In contrast to prior two-server protocols, our second server $S_2$ performs a lighter computational load than server $S_1$, introducing an asymmetry in the server roles. Additionally, unlike traditional two-server protocols where both servers receive client inputs proportional to the input length, $\sys$ reduces the communication received by $S_2$ to be independent of the input length. 






\paragraph{Modeling Security.} Like prior works, we identify two kinds of behavior. A semi-honest party is one that adheres to the protocol description but remains curious and attempts to compromise the inputs of the honest parties. A malicious/corrupted party is one that is allowed to arbitrarily deviate from the protocol description. $\sys$ is designed to offer the following feature: honest clients' inputs remain private as long as at least one server is honest and does not collude with the other server. Furthermore, once the servers begin interacting, we guarantee an identifiable abort, i.e., malicious clients (or the server) are detected. 

\paragraph{Notations.}
For a distribution $X$, we use $x\getsr X$ to denote that $x$ is a random sample
drawn from the distribution $X$. We denote by $\bfu$ a vector and by $\bfA$ a matrix. For a set $S$ we use $x\getsr S$ to denote
that $x$ is chosen uniformly at random from the set $S$. By $[n]$ for some integer $n$, we denote the set $\set{1,\ldots,n}$. 
Due to space constraints, we defer a refresh of some relevant cryptographic preliminaries to the appendix. They can be found in Section~\ref{sec:crypto-app}

\subsection{Lattice-based Masking}
\label{sub:lattice}
\label{sub:lwe}
Our masking scheme is based on the original LWE-based encryption Scheme~\cite{STOC:Regev05,ACM:Regev09}, albeit extended to the setting where the plaintext space is over $\bbZ_p$ for some integer $p<q$ with $q$ being the ciphertext modulus. In more detail, the scheme is parameterized by a security parameter $\lambda$, a plaintext modulus $p$, and
a ciphertext modulus $q$, and number of LWE samples $L$. Given a secret key $\bfs \xleftarrow{\$}
\bbZ_q^{\lambda}$, the encryption of a vector $\bfx \in \bbZ_p^L$ is 
\[ (\bfA, \bfc) := (\bfA, \bfA \bfs + \bfe + \Delta \cdot \bfx), \] where $\bfA
\xleftarrow{\$} \bbZ_q^{L\times \lambda}$ is a random matrix ($m>\lambda$), $ \bfe
\xleftarrow{\$} \chi^L$ is an error vector and $\chi$ is a discrete Gaussian
distribution, and $\Delta := \floor{q/p}$. Decryption is computed as $(\bfc -
\bfA \bfs) \bmod q$ and rounding each entry to the nearest multiples of
$\Delta$, and then divide the rounding result by $\Delta$. The decrypted result
is correct if entries in $\bfe$ are less than $\Delta/2$.

\subsection{Commitment Schemes and Zero-Knowledge Proofs}
\label{sub:com}
In this section, we review the necessary cryptographic building blocks, with particular emphasis on zero-knowledge proof systems and commitment schemes. 

\subsubsection{Group Based Constructions}
\allowdisplaybreaks

\paragraph{Pedersen Commitment.} 
We first review the scalar Pedersen Commitment~\cite{C:Pedersen91}. Let $\mathbb{G}$ be a group of prime order $q$ where the Discrete Logarithm problem is hard. Let $g,h \xleftarrow{\$} \mathbb{G}$. The commitment scheme $\Commit((g,h), x; r) = g^x \cdot h^r$ is conditionally binding (under the discrete logarithm assumption) and perfectly hiding. 

We extend this to the vector setting. For a randomly chosen generator vector $\mathbf{G}=(g_1, \ldots, g_L) \in \mathbb{G}^L$ and a base $h \in \mathbb{G}$, the commitment to a vector $\mathbf{x} = (x^{(1)}, \ldots, x^{(L)}) \in \mathbb{Z}_q^L$ with randomness $r \in \mathbb{Z}_q$ is defined as:
\[
\Commit((\mathbf{G}, h), \mathbf{x}; r) = h^r \cdot \prod_{i=1}^L {g_i}^{x_i}.
\]
Going forward, we suppress the public parameters and denote this commitment as $\Commit_{\mathbf{G}}(\mathbf{x};r)$, or simply $\Commit(\mathbf{x})$.

\paragraph{Inner-product Zero-Knowledge Proof.}
An inner product proof allows a prover to convince a verifier that $\langle \mathbf{a}, \mathbf{b} \rangle = c$ for committed vectors $\mathbf{a}, \mathbf{b}$ and a public scalar $c$. Bulletproofs~\cite{SP:BBBPWM18} provide a non-interactive protocol for this relation using the Fiat-Shamir transform, achieving $O(\log L)$ proof size and $O(L)$ prover/verifier cost for vector length $L$. We denote by $\pi_{\text{ip}}$ the inner product proof where both $\bfa$ and $\bfb$ are witnesses and by $\pi_{\text{linear}}$ we denote the version where only one of the vectors (say $\bfb$) is a witness. 



\paragraph{Group-Based Proof of Correct LWE Encryption.} 
Due to space constraints, we defer a review of the techniques to the appendix in Section~\ref{sub:lwe-app}. We obtain the following Lemma:
\begin{lemma}[Cost of Norm proofs]
\label{lem:norm-proofs}
\leavevmode
\begin{itemize}
  \item An $L_2$ norm bound requires one quadratic proof of length $(m+4)$ and one approximate proof of length $(m+4)$ (instantiated via a linear proof of length $m+4+\kappa$).
  \item An $L_\infty$ norm bound requires one quadratic proof of length $(5m)$ and one linear proof of length $(5m)$.
  \item With one-bit leakage~\cite{EC:GenHalLyu22}, lengths reduce to $(\ell+3)$ for $L_2$ and $(4\ell)$ for $L_\infty$.
\end{itemize}
\end{lemma}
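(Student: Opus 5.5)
The lemma is a cost accounting, so the plan is to write down each norm proof as a concrete reduction to the primitives $\pi_{\text{ip}}$ (quadratic) and $\pi_{\text{linear}}$ (linear) from Section~\ref{sub:com}, following the four-square technique of~\cite{EC:GenHalLyu22} recalled in the technical overview. For each case we count the length of the committed witness vector handed to each primitive. Throughout, $m$ is the dimension of the witness $\mathbf{w}$ whose norm is bounded, and $\ell$ is the corresponding dimension in the one-bit-leakage variant.

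\textbf{$L_2$ case.} By Lagrange's theorem, the prover writes $B^2-\|\mathbf{w}\|_2^2=\sum_{j=1}^4\alpha_j^2$ and commits to the extended vector $\mathbf{w}'=(\mathbf{w}\|\alpha_1,\ldots,\alpha_4)\in\mathbb{Z}_q^{m+4}$. The relation $\langle\mathbf{w}',\mathbf{w}'\rangle=B^2 \pmod q$ is then a single inner-product proof $\pi_{\text{ip}}$ of length $m+4$. To lift this congruence to an equality over $\mathbb{Z}$, we add the approximate smallness proof on the same $\mathbf{w}'$. The prover masks with $\mathbf{y}\in\mathbb{Z}^\kappa$, the verifier samples $\mathbf{R}\in\mathbb{Z}^{\kappa\times(m+4)}$, and the check $\langle \mathbf{R}^\top\mathbf{r},\mathbf{w}'\rangle+\langle\mathbf{r},\mathbf{y}\rangle=\langle\mathbf{z},\mathbf{r}\rangle$ is a linear relation in the witness $(\mathbf{w}'\|\mathbf{y})$. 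This gives $\pi_{\text{linear}}$ of length $m+4+\kappa$. We then choose $q$ larger than $(m+4)B'^2$, so that wrap-around is excluded and soundness holds.

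\textbf{$L_\infty$ case.} For every coordinate $i$, write $B^2-w_i^2=\sum_{j=1}^4\alpha_{i,j}^2$, which uses $4m$ auxiliary integers and hence a witness of length $5m$. We batch the $m$ quadratic equations with a verifier challenge into one $\pi_{\text{ip}}$ of length $5m$. The approximate smallness proof on the $5m$-vector gives the linear proof of length $5m$; the $\kappa$ masking term is absorbed or counted as in the statement.

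\textbf{Leakage case and main obstacle.} With one-bit leakage, the~\cite{EC:GenHalLyu22} trick removes one auxiliary square per relation. This yields $\ell+3$ for $L_2$ and $4\ell$ for $L_\infty$, namely three squares per coordinate plus the coordinate itself.

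The main obstacle is the batching step in the $L_\infty$ case. There we must argue that a random linear combination of $m$ quadratic constraints over $\mathbb{Z}_q$ can be expressed as one inner-product relation of length $5m$, for instance by pairing $\mathbf{w}'$ with a challenge-weighted copy of itself. We also need the soundness error of this combination to stay at most $m/q$. I would try the Schwartz--Zippel argument on the challenge polynomial first.
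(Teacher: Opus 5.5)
\textbf{Gap in the $L_\infty$ bullet.} Your $L_2$ bullet matches the paper's accounting: a quadratic proof of $\|\mathbf{v}\|^2=B^2 \bmod q$ on $\mathbf{v}=(\mathbf{w}\|\mathbf{u})\in\mathbb{Z}^{m+4}$, plus an approximate proof realized as a linear proof over $(\mathbf{v}\|\mathbf{y})$ of length $m+4+\kappa$. Your leakage arithmetic also matches what the paper states without derivation.

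The problem is where your linear proof of length $5m$ comes from. You attribute it to an approximate smallness proof on the $5m$-vector. By your own $L_2$ count, that proof is a linear proof of length $5m+\kappa$. Saying the $\kappa$ is ``absorbed'' does not justify the stated length, and the statement itself distinguishes $m+4+\kappa$ from $5m$.

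In the paper, the linear proof of length $5m$ has a different role, and it is what resolves the step you flag as the main obstacle. After random linearization with $\mathbf{r}=(1,r,\ldots,r^{m-1})$, the batched relation is $\langle \mathbf{r}'\circ\mathbf{z},\mathbf{z}\rangle=\langle\mathbf{r},\mathbf{b}\rangle$ with $\mathbf{r}'=\mathbf{r}\|\cdots\|\mathbf{r}$. But $\pi_{\text{ip}}$ needs both arguments to be committed vectors, and $\mathbf{r}'\circ\mathbf{z}$ is not one. The paper handles this in three steps:
\begin{itemize}
  \item the prover commits to an auxiliary vector $\mathbf{y}$ claimed to equal $\mathbf{r}'\circ\mathbf{z}$;
  \item it proves $\langle\mathbf{y},\mathbf{z}\rangle=\langle\mathbf{r},\mathbf{b}\rangle$ with the quadratic proof of length $5m$;
  \item it ties $\mathbf{y}$ to $\mathbf{z}$ with a fresh challenge $\boldsymbol{\alpha}$ and the check $\langle\boldsymbol{\alpha},\mathbf{y}-\mathbf{r}'\circ\mathbf{z}\rangle=0$, which is linear in the committed data. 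This is the linear proof of length $5m$.
\end{itemize}

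Schwartz--Zippel on the challenge polynomial does give you the $m/q$ batching error. It does not give you a commitment to the challenge-weighted copy, so as written your quadratic proof of length $5m$ is not yet well defined. If you add the consistency step and keep your approximate proof, you get one quadratic proof of length $5m$, one linear proof of length $5m$, and one of length $5m+\kappa$. That is not the statement.

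Two smaller points. Your per-coordinate relation $B^2-w_i^2=\sum_j\alpha_{i,j}^2$ differs from the paper's $b_i-x_i=\sum_j a_{i,j}^2$, but harmlessly. Both give a $5m$ witness, and yours is two-sided and purely quadratic. Your remark that wrap-around must also be excluded in the $L_\infty$ case is a fair criticism of the paper's accounting. It is not, however, how the stated costs arise, so it cannot be used to reach the lemma as stated.
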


Combining these components, we construct the proof system $\mathbb{CS}_{\text{enc}}$:
\begin{equation}
    \label{eq:cs-enc}
    \begin{aligned}
    \mathbb{CS}_{\text{enc}} : \{ &\mathsf{zk.io}: (\Commit(\mathbf{s}), \Commit(\mathbf{x}), \Commit(\mathbf{e})), \\
        &\mathsf{zk.stmt}: \mathbf{y} = \mathbf{A} \mathbf{s} + \mathbf{e} + \Delta \mathbf{x}, \ \|\mathbf{x}\|_2 < B_{\mathbf{x}}(L_2), \\
        &\qquad \qquad \|\mathbf{x}\|_\infty < B_{\mathbf{x}}(L_\infty), \ \|\mathbf{e}\|_\infty < B_{\mathbf{e}}(L_\infty), \\
        &\mathsf{zk.wit}: (\mathbf{s}, \mathbf{x}, \mathbf{e}) \}. 
    \end{aligned}
\end{equation}

\subsubsection{Lattice-Based Commitments and Zero-knowledge Proofs}
\paragraph{Lattice-based Commitment.}
We first formalize the LWE-based encryption from Section~\ref{sub:lwe} as a homomorphic commitment scheme.

\begin{definition}[Commitment Scheme]
Let $q,p \in \mathbb{Z}$ with $p < q$, and $L,\lambda \in \mathbb{N}$ be positive integers.
Let $\bfA \in \mathbb{Z}_q^{L \times \lambda}$ be a public matrix sampled uniformly at random.
Let $\Delta = \lfloor q/p \rfloor$ be the public scaling factor.
The message space is $\mathcal{X} = \mathbb{Z}_p^L$.
To commit to a message $\bfx \in \mathcal{X}$:
\begin{enumerate}
    \item Sample secret $\bfs \sample D_s$ and error $\bfe \sample D_e$ from distributions over short vectors, bounded by $\|\bfs\|_\infty \le B_s$ and $\|\bfe\|_\infty \le B_e$.
    \item Compute the commitment:
    \[ \bfc = \bfA\bfs + \bfe + \Delta \cdot \bfx \pmod{q}. \]
\end{enumerate}
The output is the commitment $\bfc \in \mathbb{Z}_q^L$. The opening is the tuple $(\bfx, \bfs, \bfe)$.
\end{definition}

We now have the following lemma (whose proof can be found in \hyperref[proof:com]{Section~\ref{sec:deferred}}), which states that the scheme satisfies the standard properties of \emph{computational hiding} and \emph{computational binding}.

\begin{restatable}[Hiding and Binding]{lemma}{commitmentSecurity}
\label{lem:commitment-security}
\label{lem:binding}
The commitment scheme is computationally hiding under the Decision-LWE assumption and computationally binding under the Short Integer Solution (SIS) assumption.
\end{restatable}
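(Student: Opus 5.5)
Hiding and binding are handled separately, since they rest on different assumptions.

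\textbf{Hiding.} I would use a two-step hybrid argument. Fix any two messages $\bfx_0,\bfx_1\in\mathbb{Z}_p^L$.
\begin{enumerate}
\item In the real experiment the adversary gets $(\bfA,\bfA\bfs+\bfe+\Delta\bfx_b)$.
\item In the first hybrid I replace $\bfA\bfs+\bfe$ by a uniform $\bfu\getsr\mathbb{Z}_q^L$. The distinguishing gap here is bounded by the Decision-LWE advantage. The reduction receives an LWE instance $(\bfA,\bfu)$, adds $\Delta\bfx_b$ to $\bfu$, and forwards the result.
\item Since $\bfu$ is uniform, $\bfu+\Delta\bfx_b$ is uniform and independent of $b$. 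Hence the adversary's advantage in the first hybrid is exactly zero.
\end{enumerate}
One caveat concerns the secret distribution. The definition samples $\bfs$ from a short distribution $D_s$, not uniformly from $\mathbb{Z}_q^\lambda$. I would therefore invoke the normal-form (short-secret) variant of LWE, which is as hard as standard LWE by the usual Applebaum et al.\ transformation. Alternatively, one can simply assume Decision-LWE with secret distribution $D_s$ and error distribution $D_e$ as the assumption of Section~\ref{sec:crypto-app}.

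\textbf{Binding.} I would reduce to SIS on the matrix $[\bfA\mid \bfI_L]$, which is equivalent to SIS on a uniform $\bfA'$ in Hermite normal form.
\begin{enumerate}
\item Suppose the adversary outputs two valid openings $(\bfx,\bfs,\bfe)\ne(\bfx',\bfs',\bfe')$ of the same $\bfc$.
\item Subtracting gives
\[ \bfA(\bfs-\bfs')+(\bfe-\bfe')+\Delta(\bfx-\bfx')\equiv \mathbf{0}\pmod q. \]
\item To handle the message term, I would fold $\Delta\bfx$ into the error part. Writing $\Delta(\bfx-\bfx')$ as $\Delta\mathbf{I}_L(\bfx-\bfx')$, we get a nonzero kernel vector $(\bfs-\bfs',\,\bfe-\bfe',\,\bfx-\bfx')$ of the matrix $[\bfA\mid\bfI_L\mid\Delta\bfI_L]$.
\end{enumerate}
This vector is nonzero as long as the two openings differ. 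Its naive norm is governed by $\Delta\|\bfx-\bfx'\|$, which is not small, so a plain SIS reduction is lossy here.

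This is the main obstacle: the message $\bfx$ is not short relative to $q$. My first attempt is a case analysis on whether $\bfx=\bfx'$.
\begin{enumerate}
\item \emph{Case $\bfx=\bfx'$.} Then $(\bfs-\bfs',\bfe-\bfe')$ is a nonzero kernel vector of $[\bfA\mid\bfI_L]$ with infinity norm at most $2\max(B_s,B_e)$. This is directly an SIS solution.
\item \emph{Case $\bfx\ne\bfx'$.} I would argue that binding holds even statistically when $2(\|\bfA(\bfs-\bfs')\|_\infty+B_e)<\Delta/2$, by the same rounding argument as decryption correctness in Section~\ref{sub:lwe}. Since $\bfA$ is uniform, however, $\bfA(\bfs-\bfs')$ is not small. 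So instead I reduce to SIS by a rounding trick. Multiplying through by $p$ gives $p\Delta\equiv -(q\bmod p)\pmod q$, a small quantity. Then $(p(\bfs-\bfs'),\,p(\bfe-\bfe')-(q\bmod p)(\bfx-\bfx'))$ is a short kernel vector of $[\bfA\mid\bfI_L]$, with norm bounded by roughly $p\cdot 2\max(B_s,B_e)+p^2$.
\end{enumerate}
This vector is nonzero unless $\bfs=\bfs'$. But if $\bfs=\bfs'$, the error-plus-message part forces $\bfx=\bfx'$ once $2B_e<\Delta$. Hence in the case $\bfx\ne\bfx'$ the vector is a valid SIS solution for parameter $\beta\approx p(2B_s+2B_e+p)$. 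I would state the lemma with this $\beta$ and check it against the parameter choices.
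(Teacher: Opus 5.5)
Your hiding argument is the same as the paper's: replace $\bfA\bfs+\bfe$ by a uniform $\bfu$ via Decision-LWE, after which $\bfu+\Delta\bfx_b$ is a one-time pad. Your caveat that the short secret distribution $D_s$ calls for normal-form LWE is a correct refinement that the paper leaves implicit.

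For binding you take a genuinely different route. The paper stops at your step~3. It declares $(\bfs-\bfs',\bfe-\bfe',\bfx-\bfx')$ an SIS solution for $\bfM=[\bfA\mid\bfI_L\mid\Delta\bfI_L]$, with $\ell_2$ norm at most $\sqrt{\lambda+2L}\cdot 2\max(B_s,B_e,B_x)$. That route is shorter and puts no factor $p$ into the bound. However, $\bfM$ is a structured, non-uniform matrix. It has the kernel vector $(\mathbf{0},(q\bmod p)\mathbf{u},p\,\mathbf{u})$ for any standard basis vector $\mathbf{u}$, of norm at most $\sqrt{2}\,p$, which is well inside that bound. So SIS on $\bfM$ is easy, and the paper's appeal to SIS yields no contradiction as written.

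Your stated reason for leaving that route is slightly off. The kernel vector contains $\bfx-\bfx'$, not $\Delta(\bfx-\bfx')$, so its norm is fine; the real defect is the structure of $\bfM$.

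Your multiply-by-$p$ step uses $p\Delta\equiv-(q\bmod p)\pmod q$ to eliminate the $\Delta\bfI_L$ block. It lands in a genuine normal-form SIS instance $[\bfA\mid\bfI_L]$ with uniform $\bfA$. The condition $2B_e<\Delta$, which you use to force $\bfs\neq\bfs'$ and hence nonzeroness, is exactly what excludes the trivial solutions that the paper's formulation admits.

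The price is a looser bound $\beta\approx p(2B_s+2B_e+p)$, which needs roughly $p^2\ll q$, plus the decryption-type condition $2B_e<\Delta$. In exchange you get an actual reduction with explicit parameter conditions, which the paper's proof lacks. Your case $\bfx=\bfx'$ is harmless but not needed for message binding.
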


\paragraph{Bimodal Rejection Sampling.}
To optimize the efficiency of our Zero-Knowledge proofs, we employ Bimodal Rejection Sampling~\cite{C:DDLL13}. This technique symmetrizes the rejection condition, allowing the standard deviation $\sigma$ of the masking noise to be set proportional to the witness norm, rather than exponentially larger.

\begin{lemma}[Bimodal Rejection Sampling]
\label{lem:bimodal}
Let $V \subseteq \mathbb{Z}^d$ be a set of vectors such that for all $\bfv \in V$, $\|\bfv\|_2 \le T$. Let $\sigma \ge \xi T$ be a standard deviation parameter for some constant $\xi$ (typically $\xi \approx 11$).
Consider the following algorithm for a prover who holds a shift vector $\bfv \in V$:
\begin{enumerate}
    \item Sample $\bfz \sample D_{\sigma}^d$.
    \item Compute acceptance probability $p(\bfz) = \frac{1}{M \exp(-\|\bfv\|^2/(2\sigma^2))} \cosh\left(\frac{\langle \bfz, \bfv \rangle}{\sigma^2}\right)$, where $M = \exp\left(\frac{T^2}{2\sigma^2}\right)$.
    \item Output $\bfz$ with probability $p(\bfz)$, otherwise restart.
\end{enumerate}
Then, the distribution of the output $\bfz$ is statistically close to the centered Gaussian distribution $D_{\sigma}^d$, independent of the shift $\bfv$. The expected number of trials is $M$.
\end{lemma}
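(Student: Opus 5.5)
The plan is to compute the output distribution of the procedure exactly as stated: a centered candidate $\bfz \sample D_\sigma^d$, accepted with probability $p(\bfz)$. We then compare that distribution with $D_\sigma^d$ and count the expected number of trials.

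\textbf{Step 1 (the key identity).} Write $\rho_\sigma(\bfu)=\exp(-\|\bfu\|^2/(2\sigma^2))$. Expanding the square gives
\[
\rho_\sigma(\bfz\mp\bfv)=\rho_\sigma(\bfz)\,e^{\pm\langle\bfz,\bfv\rangle/\sigma^2}\,e^{-\|\bfv\|^2/(2\sigma^2)},
\]
and hence
\[
\rho_\sigma(\bfz)\cosh\!\left(\tfrac{\langle\bfz,\bfv\rangle}{\sigma^2}\right)=\tfrac12\,e^{\|\bfv\|^2/(2\sigma^2)}\bigl(\rho_\sigma(\bfz-\bfv)+\rho_\sigma(\bfz+\bfv)\bigr).
\]
Multiplying by $p(\bfz)$'s prefactor $1/(M e^{-\|\bfv\|^2/(2\sigma^2)})$, the joint weight ``sample $\bfz$ and accept'' equals
\[
\frac{e^{\|\bfv\|^2/\sigma^2}}{M}\cdot\tfrac12\bigl(D_\sigma^d(\bfz-\bfv)+D_\sigma^d(\bfz+\bfv)\bigr).
\]
Summing over $\bfz$, the per-trial acceptance probability is $e^{\|\bfv\|^2/\sigma^2}/M$. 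The output law is therefore exactly the symmetric bimodal mixture $\tfrac12(D_{\sigma,\bfv}^d+D_{\sigma,-\bfv}^d)$, and the expected number of trials is $M e^{-\|\bfv\|^2/\sigma^2}\le M$.

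\textbf{Step 2 (well-definedness).} Before Step 1 is meaningful, we must check that $p(\bfz)\le 1$. Since $\cosh$ is unbounded, this can only hold outside a tail event. I would first use the Gaussian tail bound $|\langle\bfz,\bfv\rangle|\le \kappa\,\sigma\|\bfv\|$, which fails with probability at most $2e^{-\kappa^2/2}$. I would then try to show $\cosh(\kappa T/\sigma)\le M e^{-\|\bfv\|^2/(2\sigma^2)}$. I expect this to be the main obstacle: with $M=\exp(T^2/(2\sigma^2))$ and $\sigma\ge\xi T$, we have $M\approx 1+1/(2\xi^2)$, while $\cosh(\kappa/\xi)$ is larger for any meaningful $\kappa$. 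So either $M$ must be enlarged to about $\cosh(\kappa T/\sigma)$, or acceptance must be capped at $1$ and the resulting $2e^{-\kappa^2/2}$ deviation charged to the statistical distance.

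\textbf{Step 3 (closeness to $D_\sigma^d$).} It remains to bound the statistical distance between $\tfrac12(D_{\sigma,\bfv}^d+D_{\sigma,-\bfv}^d)$ and $D_\sigma^d$. By symmetry the first-order terms in $\bfv$ cancel. A second-order expansion, or a Rényi-divergence bound, should give a distance of order $\|\bfv\|^2/\sigma^2\le 1/\xi^2$. This is a constant (roughly $10^{-2}$ for $\xi\approx 11$), not a negligible quantity.

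I therefore expect that ``statistically close'' with negligible error cannot be proved for the procedure as literally written. If so, I would state the lemma with explicit error $O(T^2/\sigma^2)+2e^{-\kappa^2/2}$. Alternatively, I would read the candidate as the BLISS-style shifted sample. In that reading, Step 1's identity is exactly the calculation that makes the output exactly $D_\sigma^d$, up to the tail event of Step 2.
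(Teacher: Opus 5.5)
The paper gives no proof of this lemma. It is quoted from BLISS (Ducas--Durmus--Lepoint--Lyubashevsky) and has been mis-transcribed. So your conclusion is correct: the printed procedure cannot give an output that is at negligible distance from $D_\sigma^d$ and independent of $\bfv$. Your Step~1 identity and Step~3 estimate are also correct.

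Two points in your analysis are wrong. The first is in Step~2: the failure of $p(\bfz)\le 1$ is not a tail event. For $\|\bfv\|=T$ one has $Me^{-\|\bfv\|^2/(2\sigma^2)}=1$, so $p(\bfz)=\cosh(\langle\bfz,\bfv\rangle/\sigma^2)>1$ for every $\bfz$ with $\langle\bfz,\bfv\rangle\neq 0$. Your own per-trial acceptance $e^{\|\bfv\|^2/\sigma^2}/M$ already exceeds $1$ whenever $\|\bfv\|^2>T^2/2$. Capping at $1$ therefore changes the procedure on a set of constant probability, not on a $2e^{-\kappa^2/2}$ tail. The capped acceptance $\min\{1,\cosh(\langle\bfz,\bfv\rangle/\sigma^2)/(Me^{-\|\bfv\|^2/(2\sigma^2)})\}$ favors large $|\langle\bfz,\bfv\rangle|$, so the output is anisotropic along $\bfv$. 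For $\|\bfv\|=T$ it always accepts, which means one trial rather than $M$.

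The second and more important point is that your repair is only half of the correction. You take the candidate to be $\bfz=\bfu+(-1)^b\bfv$ with $\bfu\leftarrow D_\sigma^d$ and $b$ a uniform bit. In addition, the acceptance probability must be inverted, so that $\cosh$ sits in the denominator:
\[
p(\bfz)=\Bigl(M\exp\bigl(-\|\bfv\|^2/(2\sigma^2)\bigr)\cosh\bigl(\langle\bfz,\bfv\rangle/\sigma^2\bigr)\Bigr)^{-1}.
\]
With the shift but the printed $p$, the output is proportional to $\rho_\sigma(\bfz)\cosh^2(\langle\bfz,\bfv\rangle/\sigma^2)$, which is still not $D_\sigma^d$.

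With both changes, apply your Step~1 identity together with $\bfv\in\mathbb{Z}^d$, which makes shifted and centered discrete Gaussians share the normalizer $\rho_\sigma(\mathbb{Z}^d)$. This gives $\Pr[\text{candidate}=\bfz]\cdot p(\bfz)=D_\sigma^d(\bfz)/M$ exactly. Three consequences follow:
\begin{enumerate}
\item The output is exactly $D_\sigma^d$, not merely close to it.
\item Each trial accepts with probability exactly $1/M$. This is where the statement's ``expected number of trials is $M$'' comes from; your literal reading gives $Me^{-\|\bfv\|^2/\sigma^2}$ instead.
\item $p(\bfz)\le 1$ for every $\bfz$, because $\cosh\ge 1$ and $M\ge e^{\|\bfv\|^2/(2\sigma^2)}$.
\end{enumerate}
There is no tail event and no $\kappa$, so your closing phrase ``up to the tail event of Step 2'' is wrong for the BLISS reading. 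The hypothesis $\sigma\ge\xi T$ is needed only to make $M\le e^{1/(2\xi^2)}$ a small constant, not for correctness.
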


\paragraph{Block-wise Zero-Knowledge Protocol.}
We construct a Zero-Knowledge Argument of Knowledge (ZKAoK) for the relation underlying the commitment. To improve bandwidth efficiency and soundness, we decompose the large message $\bfx$ into $k$ blocks.

\begin{definition}[Relation $\mathcal{R}_{\term{Block}}$]
The prover proves knowledge of witness components $\{(\bfs_j, \bfe_j, \bfx_j)\}_{j=1}^k$ such that for all blocks $j \in [k]$:
\begin{enumerate}
    \item \textbf{Validity:} $\bfy_j \equiv \bfA_j\bfs_j + \bfe_j + \Delta \bfx_j \pmod q$.
    \item \textbf{Norms:} $\|\bfs_j\|_2 \le \beta_{s}, \|\bfe_j\|_2 \le \beta_{e}, \|\bfx_j\|_2 \le \beta_{x}$.
\end{enumerate}
\textit{Note:} While the honest prover uses a consistent secret $\bfs_j = \bfs$ across all blocks, this relation allows distinct $\bfs_j$. Global consistency is enforced by the separate 2-Server Consistency Check (see Protocol $\sys^{\text{LWE}}$).
\end{definition}

\begin{remark}[Trade-off: Communication vs. Modulus Size]
\label{rem:tradeoff}
Standard lattice proofs incur a soundness slack proportional to the square root of the witness dimension. For a model of size $L \approx 10^6$, a monolithic proof would require a large modulus $q$ to accommodate this slack, inflating the size of every ciphertext element.
By decomposing the proof into $k$ blocks of size $d=L/k$, we reduce the slack to $\propto \sqrt{d}$, allowing for a significantly smaller $q$.
The cost of this approach is the redundant transmission of the secret response $\bfz_{j,s}$ for each block (an overhead of $k \cdot \lambda$ elements). However, since the model dimension $L$ is orders of magnitude larger than the secret dimension $\lambda$ ($L \gg \lambda$), the bandwidth savings from reducing the modulus $q$ on the massive vector $\bfy$ far outweigh the cost of the extra secret responses.
\end{remark}

The protocol $\Pi_{\term{Enc}}$ is detailed in Figure~\ref{fig:optimized-proto}, a standard $\Sigma$-protocol, adapted to the setting of lattices. As before, let $\lambda$ be the LWE secret dimension and let $d = L/k$ be the block size. Let $C$ be the $L_\infty$ bound of the challenge $c_j$ (e.g., $c_j \in \{-1, 0, 1\}$ implies $C=1$). Let $\xi \approx 1$ be the bimodal tuning constant.
Further parameters are defined in Table~\ref{tab:params}.  Here, $\tau$ is the tail cut factor (typically 1.1) used to determine the acceptance bound $\beta$, ensuring that honest responses sampled from the Gaussian distribution are accepted with overwhelming probability.

\begin{table}[!htbp]
\caption{Parameters for the Zero-knowledge Argument of Knowledge}
\label{tab:params}
\centering
\renewcommand{\arraystretch}{1} 
\setlength{\tabcolsep}{9pt}      
\resizebox{0.75\textwidth}{!}{\begin{tabular}{@{} l c c c c @{}}
\toprule
\textbf{Component} & \textbf{Base} ($L_\infty$) & \textbf{Shift Norm} $(T)$ & \textbf{Sigma} $(\sigma = \xi T)$& \textbf{Verification} ($\beta$)\\
\midrule
Secret $\bfs$ 
    & $B_s$ 
    & $C \sqrt{n} B_s$ 
    & $\xi C \sqrt{\lambda} B_s$ 
    & $\tau \sqrt{\lambda} \sigma_s$ \\
Error $\bfe$ 
    & $B_e$ 
    & $C \sqrt{d} B_e$ 
    & $\xi C \sqrt{d} B_e$ 
    & $\tau \sqrt{d} \sigma_e$ \\
Message $\bfx$ 
    & $B_x$ 
    & $C \sqrt{d} B_x$ 
    & $\xi C \sqrt{d} B_x$ 
    & $\tau \sqrt{d} \sigma_x$ \\
\bottomrule
\end{tabular}}
\end{table}

\begin{figure*}[!t]
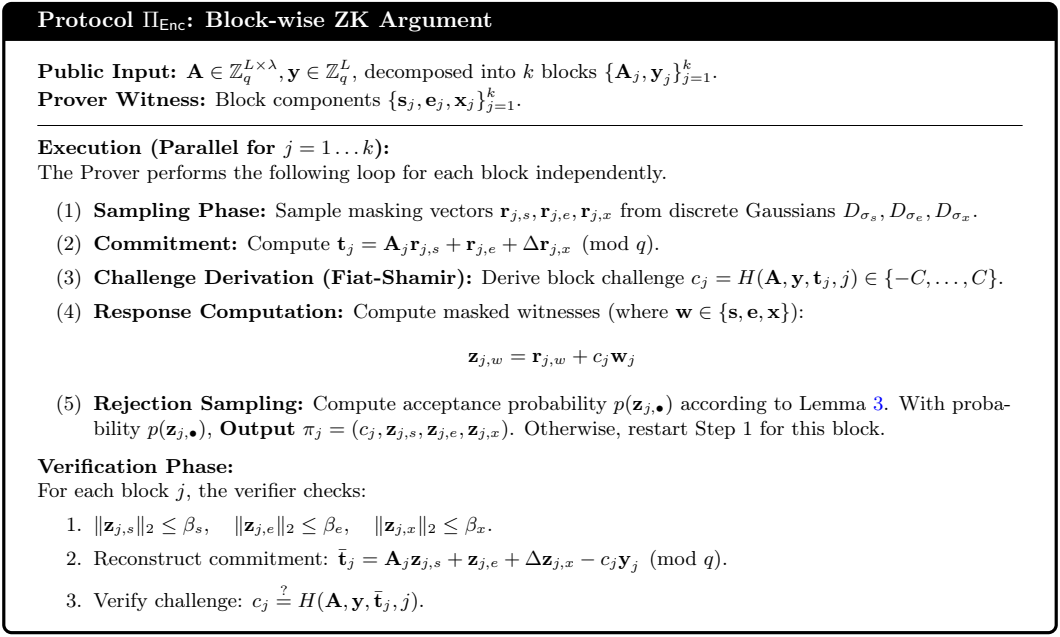

 \centering
 \resizebox{0.85\textwidth}{!}{\begin{tcolorbox}[colback=white, colframe=black, title={\textbf{Protocol $\Pi_{\term{Enc}}$: Block-wise ZK Argument}}]
 \centering
 \small
 \begin{varwidth}{\textwidth}
 \textbf{Public Input:} $\bfA \in \bbZ_q^{L \times \lambda}, \bfy \in \bbZ_q^L$, decomposed into $k$ blocks $\{\bfA_{j}, \bfy_{j}\}_{j=1}^k$. \\
 \textbf{Prover Witness:} Block components $\{\bfs_j, \bfe_{j}, \bfx_{j}\}_{j=1}^k$.
 \vspace{2mm}
 \hrule
 \vspace{2mm}
 
 \textbf{Execution (Parallel for $j = 1 \dots k$):} \\
 The Prover performs the following loop for each block independently.
 
 \begin{enumerate}[label=({\arabic*})]
     \item \label{step1} \textbf{Sampling Phase:}
     Sample masking vectors $\bfr_{j,s}, \bfr_{j,e}, \bfr_{j,x}$ from discrete Gaussians $D_{\sigma_s}, D_{\sigma_e}, D_{\sigma_x}$.
     
     \item \label{step2} \textbf{Commitment:}
     Compute $\bft_j = \bfA_{j}\bfr_{j,s} + \bfr_{j,e} + \Delta\bfr_{j,x} \pmod q$.
     
     \item \label{step3} \textbf{Challenge Derivation (Fiat-Shamir):}
     Derive block challenge $c_j = H(\bfA, \bfy, \bft_j, j) \in \{-C, \dots, C\}$.
     
     \item \label{step4} \textbf{Response Computation:}
     Compute masked witnesses (where $\bfw \in \{\bfs, \bfe, \bfx\}$):
     \[ \bfz_{j,w} = \bfr_{j,w} + c_j \bfw_j \]
     
     \item \label{step5} \textbf{Rejection Sampling:}
     Compute acceptance probability $p(\bfz_{j,\bullet})$ according to Lemma~\ref{lem:bimodal}.
     With probability $p(\bfz_{j,\bullet})$, \textbf{Output} $\pi_j = (c_j, \bfz_{j,s}, \bfz_{j,e}, \bfz_{j,x})$.
     Otherwise, restart Step 1 for this block.
 \end{enumerate}
 
 \textbf{Verification Phase:} \\
 For each block $j$, the verifier checks:
 \begin{enumerate}
     \item $\|\bfz_{j,s}\|_2 \le \beta_s, \quad \|\bfz_{j,e}\|_2 \le \beta_e, \quad \|\bfz_{j,x}\|_2 \le \beta_x$.
     \item Reconstruct commitment: $\bar{\bft}_j = \bfA_{j}\bfz_{j,s} + \bfz_{j,e} + \Delta\bfz_{j,x} - c_j \bfy_{j} \pmod q$.
     \item Verify challenge: $c_j \stackrel{?}{=} H(\bfA, \bfy, \bar{\bft}_j, j)$.
 \end{enumerate}
 \end{varwidth}
 \end{tcolorbox}}
 \caption{Lattice-based Zero-Knowledge Argument of Knowledge for Relation $\mathcal{R}_{\term{Block}}$. Here $C$ is the maximum value of the challenge set. We sample challenges between $\{-C,\ldots,C\}$.}
 \label{fig:optimized-proto}
\end{figure*}

\begin{restatable}[Security of $\Pi_{\term{Enc}}$]{theorem}{securityEnc}
The protocol $\Pi_{\term{Enc}}$ is a complete, Special Honest-Verifier Zero-Knowledge (SHVZK) Argument of Knowledge for the relation $\mathcal{R}_{\term{Block}}$.
\end{restatable}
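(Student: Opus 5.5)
We prove the three properties separately, arguing block by block and then combining over the $k$ independent blocks.

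\emph{Completeness.} Fix a block $j$. For an honest prover, $\bfz_{j,w} = \bfr_{j,w} + c_j \bfw_j$, so
\[
\bar{\bft}_j = \bfA_j\bfz_{j,s} + \bfz_{j,e} + \Delta\bfz_{j,x} - c_j\bfy_j = \bft_j \pmod q .
\]
Hence the Fiat--Shamir check $c_j = H(\bfA,\bfy,\bar\bft_j,j)$ passes. For the norm checks, each shift $c_j\bfw_j$ has $\ell_2$ norm at most the $T$ listed in Table~\ref{tab:params}. By Lemma~\ref{lem:bimodal}, the output $\bfz_{j,w}$ is statistically close to $D_{\sigma_w}$. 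The standard Gaussian tail bound $\Pr[\|\bfz\|_2 > \tau\sigma\sqrt{d}] \le (\tau e^{(1-\tau^2)/2})^{d}$ then shows that the check $\|\bfz_{j,w}\|_2 \le \beta_w$ fails with only negligible probability; the dimension is $\lambda$ for $\bfs$ and $d$ for $\bfe,\bfx$. A union bound over the $k$ blocks and three components finishes completeness. Because restarts are per block, the expected number of trials per block is $M$, so total expected work is $kM$ rather than $M^k$. This is where the Independent Restart strategy is used.

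\emph{SHVZK.} The simulator receives a challenge $c_j$ for each block and proceeds as follows:
\begin{enumerate}
\item sample $\bfz_{j,s}\sample D_{\sigma_s}^{\lambda}$, $\bfz_{j,e}\sample D_{\sigma_e}^{d}$, $\bfz_{j,x}\sample D_{\sigma_x}^{d}$;
\item set $\bft_j := \bfA_j\bfz_{j,s}+\bfz_{j,e}+\Delta\bfz_{j,x}-c_j\bfy_j$;
\item in the ROM, program $H(\bfA,\bfy,\bft_j,j):=c_j$.
\end{enumerate}
By Lemma~\ref{lem:bimodal}, the real accepted responses are statistically close to these Gaussians independently of the witness, and $\bft_j$ is determined identically in both worlds. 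The only other difference is the event that the programmed point was already queried. Since $\bft_j$ has high min-entropy (it contains $\bfz_{j,e}$), this event has negligible probability. A hybrid over the $k$ blocks gives statistical indistinguishability. Rejected attempts are never output, so they need not be simulated; we note that the bimodal lemma is applied to the joint vector $(\bfz_{j,s},\bfz_{j,e},\bfz_{j,x})$.

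\emph{Argument of knowledge.} This step I expect to be the main obstacle, because of Fiat--Shamir and the small challenge set. I would use rewinding, via the forking lemma applied independently to each block's random-oracle query. This yields two accepting transcripts for block $j$ with the same $\bft_j$ and distinct challenges $c_j \ne c_j'$. Subtracting the two verification equations gives
\[
\bfA_j\bar\bfs_j + \bar\bfe_j + \Delta\bar\bfx_j = \bar c_j\,\bfy_j ,
\]
where $\bar\bfw = \bfz_{j,w}-\bfz'_{j,w}$ satisfies $\|\bar\bfw\|_2\le 2\beta_w$, and $\bar c_j = c_j - c_j' \in \{\pm1,\dots,\pm 2C\}$.

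To obtain a witness for $\mathcal{R}_{\term{Block}}$ itself, rather than a relaxed opening with a factor $\bar c_j$, I would divide by $\bar c_j$. With $C=1$ we have $\bar c_j\in\{\pm1,\pm2\}$, and for odd $q$ the element $\bar c_j$ is invertible. However, the quotient may not be short, so I expect to have to state extraction with the norm bounds relaxed by the slack factor ($2\beta_w$, and an extra $\bar c_j$ in the relation). This relaxation is consistent with the $\gamma$-slack discussed in Section~\ref{subsub:lwe}.

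Soundness error per block is $1/(2C+1)$, so I would either repeat or enlarge the challenge set, and argue knowledge error via the expected-polynomial-time extractor. The extracted $\bfs_j$ may differ across blocks, which is exactly what $\mathcal{R}_{\term{Block}}$ permits.
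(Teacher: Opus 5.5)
Your three-part structure matches the paper's proof: linearity plus Gaussian tails for completeness, sampling the responses from the target Gaussians and back-solving $\bft_j$ for SHVZK, and two transcripts plus subtraction for extraction. The extraction step as you plan it, however, cannot work. The forking lemma applied to block $j$'s own oracle query gives nothing when the challenge set has size $2C+1=3$: its guarantee $\epsilon(\epsilon/Q-1/(2C+1))$ is vacuous once $Q\ge 2C+1$.

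This is not an artifact of the analysis. Since $c_j=H(\bfA,\bfy,\bft_j,j)$ depends only on block $j$'s commitment, a prover with no witness can:
sample $\bfz_{j,s},\bfz_{j,e},\bfz_{j,x}$ from the Gaussians;
set $\bft_j:=\bfA_j\bfz_{j,s}+\bfz_{j,e}+\Delta\bfz_{j,x}$;
and resample until $H(\bfA,\bfy,\bft_j,j)=0$.
After $2C+1$ queries in expectation it holds an accepting $\pi_j=(0,\bfz_{j,s},\bfz_{j,e},\bfz_{j,x})$ for any $\bfy$, including $\bfy$ outside the language.

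Your proposed fixes do not help. Repetitions whose challenges are hashed separately are ground in the same way. Repeating helps only if the first messages of all blocks and repetitions enter a single hash, which is exactly what the independent-restart strategy gives up. Enlarging $C$ to super-polynomial size makes $\sigma$, the extracted norms and $\bar c_j$ grow linearly in $C$.

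What can be proved is what the paper's proof actually argues. Take a single interactive run in which the verifier picks $c_j$ after seeing $\bft_j$, and the prover may not restart the block after seeing $c_j$ (otherwise it aborts until $c_j=0$). There, heavy-row rewinding yields two accepting transcripts with the same $\bft_j$, at a constant knowledge error per block. Note the tension: the per-block challenge derivation that makes independent restarts possible is exactly what the cheating prover exploits.

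Your caution about dividing by $\bar c_j$ is justified and exposes a point the paper's proof glosses over. The paper asserts that $(c_j-c'_j)^{-1}(\bfz_w-\bfz'_w)$ has norm at most $2\|\bfz\|$. This is false when $c_j-c'_j=\pm2$, because multiplying by $2^{-1}\bmod q$ destroys shortness. One must either extract the relaxed opening $\bar c_j\bfy_j=\bfA_j\bar{\bfs}_j+\bar{\bfe}_j+\Delta\bar{\bfx}_j$ (knowledge error $1/3$), or insist on a pair with $\bar c_j=\pm1$ (knowledge error $2/3$). Either way the result is a relaxed relation with norms up to $2\beta$, as you anticipated, not $\mathcal{R}_{\term{Block}}$ itself.

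Two smaller points, also inherited from the paper. First, bimodal rejection sampling (Lemma~\ref{lem:bimodal}) makes the output witness-independent only if the prover flips a random sign, $\bfz=\bfr+(-1)^b c_j\bfw_j$, and the verification equation is invariant under that sign. $\Pi_{\term{Enc}}$ has neither, and under the bimodal acceptance rule the accepted $\bfz_{j,w}$ has mean proportional to $c_j\bfw_j$ (nonzero whenever $c_j\neq0$). Your SHVZK step therefore needs standard (unimodal) rejection sampling, e.g.\ $\sigma\approx 11T$ with $M\approx 3$. Second, with $\tau\approx 1.1$ your tail bound is about $0.99^{\lambda}$ for the $\lambda$-dimensional $\bfs$-component, which is not negligible. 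Either take $\tau$ larger, or fold the norm check into the per-block rejection loop and have the simulator sample the correspondingly truncated Gaussian.
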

\noindent The proof is deferred to \hyperref[proof:sec-enc]{Section~\ref{sec:deferred}}.

\subsection{Simulation-Based Proof of Security}
\label{sub:sim}
\label{sec:functionality}
Our proof follows the standard \emph{simulation-based paradigm}~\cite{Goldreich,Lindell}. The objective is to show that any real-world adversary attacking our protocol can be simulated by an adversary~$\cS$ in an \emph{ideal world}, where a \emph{trusted party}~$\cT$ securely computes a function~$F$ on the clients' behalf. In our setting, this function simply aggregates (sums) the clients' input vectors.

We consider an adversary $\cA$ that can corrupt up to $\eta \cdot n$ clients and at most one server. The ideal-world execution proceeds as follows, specialized to the case where only the server receives the final output:

\begin{enumerate}[label=(\alph*)]\itemsep0em
    \item \label{stepa} Each honest client submits its input to the trusted party~$\cT$.
    \item \label{stepb} The simulator $\cS$ determines which corrupted clients provide valid inputs and which ones abort.
    \item \label{stepc} If a server is corrupted,~$\cS$ may decide either to abort the execution or to allow it to proceed.
    \item \label{stepd} If no abort occurs,~$\cT$ computes the function~$F(X)$ and sends the result to the server.
    \item \label{stepe} If the server is honest, it outputs whatever value it receives from $\cT$.
\end{enumerate}
We formally define the functionality in Figure~\ref{fig:ideal-func-secagg}. We allow for the following adversarial behavior:
\begin{itemize}\itemsep0em
    \item Clients can be malicious, i.e., deviate from protocol descriptions arbitrarily. 
    \item We require exactly one server to be honest, and the other server to be adversarially controlled. Looking ahead, we model this adversarial control to be either semi-honest or malicious and still offer privacy. Observe that a malicious server can ``forge'' or provide inputs on behalf of the honest clients. To avoid this trivial attack, we assume that clients' messages are signed. We assume that there exists a Public Key Infrastructure (PKI) that allows servers to verify and validate signatures. Later, we additionally leverage these signatures to achieve security with identifiable abort. 
\end{itemize}
\begin{figure}[!tb]
\centering
\begin{tcolorbox}[
    colback=white, 
    colframe=black, 
    boxrule=0.8pt, 
    arc=0pt, 
    left=6pt, right=6pt, top=6pt, bottom=6pt,
]
\centerline{\textbf{Functionality $\cF_{\text{SecAgg}}$}}
\vspace{4pt}

The functionality is parameterized by the maximum allowed dropout fraction $\delta \in [0,1)$ and corruption fraction $\eta\in [0,1)$ (subject to $\delta+\eta<1/3$) and the iteration identifier $\lab$. It interacts with clients $i \in [n]$ and a simulator $\cS$.

\vspace{4pt}
\hrule
\vspace{6pt}

\begin{description}[leftmargin=1em, style=nextline]

\item[Corruption Setup:] 
    Upon receiving a message \textsf{(Corrupt, $\lab$, $\cC_\text{corr}$, $\{\bfx^{(i)}\}_{i \in \cC_\text{corr}}$)} from $\cS$:
    Check if the size of the corrupted set $|\cC_\text{corr}| \le \eta n$. If true, record $C$ as the set of corrupted clients and record their respective adversarial inputs $\{\bfx^{(i)}\}_{i \in \cC_\text{corr}}$ into the input set $X$. 
    
    \item[Honest Input Collection:] 
    Upon receiving a message \textsf{(Input, $\lab$, $\bfx^{(i)}$)} from an honest client $i \in [n] \setminus \cC_\text{corr}$, record $\bfx^{(i)}$ in the set of client inputs $X$.

    \item[Dropout Specification:] 
    Upon receiving a message \textsf{(Drop, $\lab$, $D$)} from the $\cS$ (where $D \subseteq [n]$), record the set of dropped-out clients $D$. Here, we mark every client dropped out from the computation as ``corrupt''.\footnotemark

    \item[Aggregation \& Output:] 
    Upon receiving a message \textsf{(Aggregate, $\lab$)} from the servers, define: $\cC:=\{i|i\in[n]\wedge i\not \in D\wedge\term{Valid}(\bfx^{(i)})\}$ where $\term{Valid}$ is a boolean function that determines whether the input $\bfx^{(i)}$ satisfies predetermined, publicly known constraints (such as bounded $L_\infty$ and $L_2$ norms) and evaluate the aggregate based on the recorded inputs $X$ as:
    
    \begin{equation}
    \cF_{\text{SecAgg}}^{D,\delta} =
    \begin{cases}
        \displaystyle \sum_{i \in \cC} \bfx^{(i)}, & \text{if } |\cC| \geq (1-\delta-\eta) n,\\[1.5em]
        \bot, & \text{otherwise.}
    \end{cases}
    \end{equation}
    
    Output \textsf{(Result, $\lab$, $\cF_{\text{SecAgg}}^{D,\delta}$)} to the evaluating servers.
\end{description}
\end{tcolorbox}
\caption{The ideal functionality for Secure Aggregation handling dropouts and validity constraints.}
\label{fig:ideal-func-secagg}
\end{figure}
\footnotetext{Prior works~\cite{SP:RSWP23,Heli} have also treated honest clients who drop out or fail to participate as corrupt. While this approach is typical in secure computation literature, Fitzi et al.~\cite{C:FitHirMau98} propose a mixed model where an adversary can induce communication failures among honest clients without those clients being flagged as corrupt.}
\section{Designing $\sys$}
\label{sec:design}
We will now describe how we build $\sys$.

\subsection{Building $\sys^{\text{BP}}$}
\label{sub:g-zkp}
We now focus on constructing $\sys^{\text{BP}}$, which instantiates the proof system using Bulletproofs (BP). We first build the protocol secure against a semi-honest server and then show how to extend it for a malicious server.

\subsubsection{Semi-Honest Servers and Malicious Clients.}
In this section, we outline the protocol for semi-honest servers, while allowing for malicious clients. Figure~\ref{fig:sh-bp-protocol} describes the protocol $\sys_{\text{SH}}^{\text{BP}}$ for secure aggregation with semi-honest servers. Each client masks its input $\mathbf{x}^{(i)}$ using LWE encryption, computes commitments to its secrets, and generates a NIZK proof of correct encryption and bounded input. The client sends the ciphertext, proof, and commitments to Server 1, and the blinding vector with its opening to Server 2.

To ensure consistency, Server 2 hashes its commitment and sends the digest to Server 1, which verifies both the hash and the proof. Only clients passing both checks are included in the valid set $\mathcal{V}$. Server 2 aggregates the secrets from $\mathcal{V}$ and sends the sum to Server 1, which computes the final aggregate. The protocol requires that enough clients are valid to ensure robustness.

In the semi-honest version, the final communication of the sum from Server 1 to Server 2 is omitted from the protocol description. Observe that the server 1 receives $\bfs$ from Server 2 which can be viewed as a leakage on each individual LWE sample. Therefore, the earlier argument that $\bfA\bfs+\bfe_i$ is pseudorandom and can mask the input value is insufficient. This has been observed in prior works~\cite{USENIX:BGLLMR23,C:KarPol25,C:BGLRS25}. Intuitively, the Hint-LWE (See Definition~\ref{def:hint-lwe}) assumption states that $\mathbf{y}_0$ appears pseudorandom to an adversary, even when given randomized leakage on the secret and error vectors. Kim et al.~\cite{C:KLSS23b} demonstrate that solving Hint-LWE is no easier than solving the LWE problem. We formally have the following theorem statement:

\begin{restatable}[Simulatability under Hint-LWE]{theorem}{thmSimulatability}
\label{thm:simulatability}
Let $(\lambda, L, q, \chi)$ be a secure LWE instance where $\chi$ is a discrete Gaussian distribution with standard deviation $\sigma$, and let $\chi'$ be a discrete Gaussian distribution with standard deviation $\sigma' = \sigma/\sqrt{2}$. Let $\mathbf{A} \in \mathbb{Z}_q^{L \times \lambda}$ be a public matrix, $\mathbf{X} \in \mathbb{Z}_q^L$ be a message vector, and $n \geq 2$ be the number of parties.

Then there exists a probabilistic polynomial-time simulator $\textsf{Sim}$ such that for all probabilistic polynomial-time adversaries $\mathcal{A}$, the following distributions are computationally indistinguishable under the Hint-LWE assumption:
\begin{align*}
\mathcal{D}_R &= \left\{ \bfs, \mathbf{y}_1, \ldots, \mathbf{y}_n : \begin{array}{l}
\bfs_i \getsr \mathbb{Z}_q^\lambda, \mathbf{e}_i\getsr\chi^L \text{ for } i \in [n], \\
\bfs = \sum_{i=1}^n \bfs_i \bmod q, \\
\mathbf{y}_i = \mathbf{A}\bfs_i + \mathbf{e}_i + \Delta\mathbf{x}_i \text{ for } i \in [n], \\
\end{array} \right\}, & 
\mathcal{D}_{\textsf{Sim}} &= \left\{ \textsf{Sim}(\mathbf{A}, \mathbf{X}) \right\}
\end{align*}
That is, for all PPT adversaries $\mathcal{A}$:
\(
\left| \Pr[\mathcal{A}(\mathcal{D}_R) = 1] - \Pr[\mathcal{A}(\mathcal{D}_{\textsf{Sim}}) = 1] \right| \leq \textsf{negl}(\lambda)
\)
where the probability is taken over the randomness of the distributions and the adversary.
\end{restatable}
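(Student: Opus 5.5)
Here $\mathbf{X}=\sum_{i=1}^n \mathbf{x}_i$ is the aggregate, which the simulator is given. The goal is to produce, from $\mathbf{A}$ and $\mathbf{X}$ alone, a tuple $(\bfs,\mathbf{y}_1,\ldots,\mathbf{y}_n)$ distributed like the real one.

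\textbf{The simulator.} $\textsf{Sim}$ proceeds as follows:
\begin{enumerate}
\item Sample $\bfs\getsr\mathbb{Z}_q^\lambda$.
\item Sample $\mathbf{y}_1,\ldots,\mathbf{y}_{n-1}\getsr\mathbb{Z}_q^L$ uniformly.
\item Sample $\mathbf{e}\getsr\chi^{L}$, but with the variance of a sum of $n$ errors, i.e.\ the distribution of $\sum_i\mathbf{e}_i$.
\item Set $\mathbf{y}_n := \mathbf{A}\bfs+\mathbf{e}+\Delta\mathbf{X}-\sum_{i<n}\mathbf{y}_i$.
\end{enumerate}
By construction $\sum_i \mathbf{y}_i = \mathbf{A}\bfs+\mathbf{e}+\Delta \mathbf{X}$, matching the real distribution, where $\sum_i\mathbf{y}_i=\mathbf{A}\bfs+\sum_i\mathbf{e}_i+\Delta\mathbf{X}$.

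\textbf{The hybrid argument.} I would move from $\mathcal{D}_R$ to $\mathcal{D}_{\textsf{Sim}}$ through hybrids $H_0,\ldots,H_{n-1}$. In $H_k$, the first $k$ ciphertexts are uniform, and the remaining ones are honestly formed subject to the constraint that the total $\sum_i\mathbf{y}_i$ has the real form.

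\textbf{Step 1: reparametrize $\bfs$.} Because $\bfs_1,\ldots,\bfs_n$ are uniform, revealing $\bfs=\sum_i\bfs_i$ is equivalent to sampling $\bfs$ uniformly, sampling $\bfs_1,\ldots,\bfs_{n-1}$ independently and uniformly, and setting $\bfs_n=\bfs-\sum_{i<n}\bfs_i$. The leakage on party $k$ is then only through the sum.

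\textbf{Step 2: one hybrid step via Hint-LWE.} To go from $H_{k-1}$ to $H_k$, I would pair party $k$ with party $n$, which absorbs the constraint. Split $\bfs_k+\bfs_n$ and $\mathbf{e}_k+\mathbf{e}_n$ so that the reduction sees $\mathbf{y}_k=\mathbf{A}\bfs_k+\mathbf{e}_k+\Delta\mathbf{x}_k$ together with the pair sums $\bfs_k+\bfs_n$ and $\mathbf{e}_k+\mathbf{e}_n$. This is exactly a Hint-LWE instance, which gives an LWE sample together with a hint $(\bfs_k+\mathbf{r}, \mathbf{e}_k+\mathbf{f})$ for fresh Gaussian $\mathbf{r},\mathbf{f}$. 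Given such an instance, the reduction embeds it as $\mathbf{y}_k$ and sets $\mathbf{y}_n$ from the hint plus the known sum. The remaining parties it generates itself. Hint-LWE then says $\mathbf{y}_k$ is pseudorandom given the hint, so $H_{k-1}\approx H_k$. If the challenge is uniform, $\mathbf{y}_k$ is uniform and $\mathbf{y}_n$ is computed from the sum constraint exactly as in $H_k$.

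\textbf{Step 3: the endpoint.} After $n-1$ steps, $\mathbf{y}_n$ is determined by the constraint. The only remaining difference from $\mathcal{D}_{\textsf{Sim}}$ is that $\mathbf{e}$ should be sampled from the distribution of $\sum_i\mathbf{e}_i$, which the simulator matches.

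\textbf{Main obstacle.} The difficult step is matching parameters: the hint noise in Hint-LWE must have the right Gaussian width. This is where $\sigma'=\sigma/\sqrt2$ comes in. I would first try writing $\mathbf{e}_k=\mathbf{e}'+\mathbf{e}''$ with both halves of width $\sigma'$, using the Gaussian convolution lemma, so that the hint reveals only $\mathbf{e}''$-type randomness. The secret $\bfs_k$ is uniform rather than Gaussian, so I would use either the uniform-secret variant of Hint-LWE or a standard normal-form reduction. These details, and the resulting loss of a factor $n$ in advantage, would be checked against the definition of Hint-LWE (Definition~\ref{def:hint-lwe}) and the result of~\cite{C:KLSS23b}.
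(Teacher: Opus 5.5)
Your proposal is correct and follows essentially the same route as the paper: the same simulator (uniform $\mathbf{y}_1,\dots,\mathbf{y}_{n-1}$ with $\mathbf{y}_n$ fixed by the sum constraint) and a hybrid argument that replaces one ciphertext at a time via Hint-LWE, with party $n$ absorbing the constraint and the hint supplying the pair sums. The paper handles your ``main obstacle'' as you anticipate: it writes each $\chi$-error as a sum of two independent $\chi'$ samples, so party $k$'s error is the challenge error plus a fresh $\chi'$ half and party $n$'s error is the hint noise $\mathbf{f}$ plus a fresh $\chi'$ half; and its Hint-LWE definition already uses a uniform secret and uniform $\mathbf{r}$, so no normal-form reduction is needed.
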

\noindent The proof of Theorem~\ref{thm:simulatability} can be found in \hyperref[proof:sim]{Section~\ref{sec:deferred}}
\begin{restatable}{theorem}{shbp}
    \label{thm:atlas-lwe-sh}
    Let $\lambda$ and $\kappa$ be the computational and statistical security parameters, and let $n$ be the number of clients chosen for summation. Let $L$ be the vector length. The construction defined in Figure~\ref{fig:sh-client} securely realizes the functionality $\cF_{\text{SecAgg}}(X)$ (defined in Figure~\ref{fig:ideal-func-secagg}) achieving semi-honest security for $X=\{\bfx_{i}\}_{i\in[n]\setminus (\cC_{\text{corr}}\cup \cC_{\text{drop}}}$ where $\cC_{\text{corr}}$ is the set of corrupted clients and $\cC_{\text{drop}}$ denotes the set of dropped clients, under the Hint-\LWE\ assumption and the existence of SHVZK proofs and secure commitments.
\end{restatable}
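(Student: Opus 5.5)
We argue by simulation, treating separately the two ways the (single) corrupted server can be placed: corrupted $S_1$ together with any set $\cC_{\text{corr}}$ of malicious clients, and corrupted $S_2$ together with malicious clients. In both cases the servers are semi-honest, so the simulator $\cS$ runs the corrupted server's code honestly. For the corrupted clients, $\cS$ must extract their inputs and submit them to $\cF_{\text{SecAgg}}$. In each case we build a sequence of hybrids starting from the real execution and ending in the ideal one, and we bound each step by one of the stated assumptions.

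\textbf{Corrupted $S_2$.} $S_2$ sees, for each honest client, only the opening of a commitment to $\bfs_i$ (together with its randomness). It also sees the digests it computes itself and the final set $\cV$. None of these depends on $\bfx^{(i)}$. The simulator therefore samples fresh $\bfs_i$ for honest clients and plays $S_1$'s role in the validity check by using the functionality's view of which clients are valid. For corrupted clients, validity is determined by running the verifier on the NIZK proofs and performing the digest check, both of which $\cS$ can carry out because it sees everything the corrupted clients send. Indistinguishability is then essentially perfect, since the view of $S_2$ is independent of honest inputs.

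\textbf{Corrupted $S_1$.} This is the main case. The hybrids are as follows.
\begin{enumerate}
\item[(H1)] Replace every honest client's proof by the output of the SHVZK simulator (or the NIZK simulator in the ROM with a programmed oracle). This step relies on zero-knowledge.
\item[(H2)] Replace the honest commitments to $\bfs_i,\bfe_i,\bfx^{(i)}$ by commitments to zero. This step relies on hiding, which is perfect for Pedersen commitments. The digest check still passes because $S_2$ is simulated, and $\cS$ sends the hash of whichever commitment it handed to $S_1$.
\item[(H3)] For corrupted clients that pass verification, extract $(\bfs_i,\bfe_i,\bfx^{(i)})$ using knowledge-soundness of the proof system. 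Binding of the commitment, combined with collision resistance of the digest, guarantees that the $\bfs_i$ sent to $S_2$ equals the extracted one. $\cS$ then submits these $\bfx^{(i)}$, declares the clients that fail as dropped, and receives $\sum_{i\in\cC}\bfx^{(i)}$.
\item[(H4)] Replace the honest ciphertexts $\bfy_i$ together with the aggregate key $\sum_{i\in\cV}\bfs_i$ handed to $S_1$ by the output of $\Sim$ from Theorem~\ref{thm:simulatability}. The input to $\Sim$ is the honest partial sum $\bfX=\sum_{i\in\cC}\bfx^{(i)}-\sum_{i\in\cC_{\text{corr}}\cap\cV}\bfx^{(i)}$, and $\cS$ shifts the simulated sum key by the known corrupted $\bfs_i$.
\end{enumerate}
Theorem~\ref{thm:simulatability} is applied to the honest clients in $\cV$. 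It requires at least two of them, which follows from $|\cC|\ge(1-\delta-\eta)n$ and $\delta+\eta<1/3$. The extra Gaussian noise already supplied by the honest $\bfe_i$ lets decryption still give $\bfX$, so the final output matches.

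\textbf{Main obstacle.} I expect the main obstacle to be the H4 step. The set $\cV$ is fixed only after the ciphertexts are sent, so $\cS$ must commit to simulated $\bfy_i$ before it learns which honest clients survive. Since all honest clients' inputs are only needed through their sum, my plan is to guess nothing. Instead, $\cS$ invokes $\Sim$ on the honest clients that were not dropped at upload time, because drops after upload are decided by $\cS$ itself in the semi-honest case. A secondary check is that extraction in H3 must be straight-line, or must use the forking lemma while rewinding each client independently so that the running time stays polynomial. Here I would rely on simulation-extractability of the Fiat–Shamir NIZK.
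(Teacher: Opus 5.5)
Your skeleton matches the paper's. In the corrupted-$S_2$ case, $S_2$ only sees uniform $\bfs_i$. In the corrupted-$S_1$ case you use SHVZK-simulated proofs, then hiding for the commitments, then the Hint-LWE simulator of Theorem~\ref{thm:simulatability} on the honest partial sum, with $\cS$ (playing $S_2$) returning a key shifted by the corrupted clients' $\bfs_i$. What you add is an extraction hybrid for malicious clients, and that is where the argument fails as written.

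The honest $\bfy_i$ are Phase~1 messages delivered to the corrupted $S_1$. A rushing adversary can therefore choose the corrupted clients' uploads (ciphertexts, proofs, and the $\bfs_i$ they send to $S_2$) after seeing them. Your $\cS$ can form $\bfX$ only after extracting those corrupted inputs and querying $\cF_{\text{SecAgg}}$. Yet H4 needs $\bfX$ to produce the very honest ciphertexts that have already been sent.

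Nothing later in the protocol repairs this. The only subsequent honest messages to $S_1$ are digests and $\bfs^\Sigma\in\bbZ_q^\lambda$. For a fixed $\sum_i\bfy_i$, varying $\bfs^\Sigma$ can make $\mathsf{Decode}(\sum_i\bfy_i-\bfA\bfs^\Sigma)$ equal at most $q^\lambda$ of the $p^L$ possible sums. This is exactly why the paper's malicious-$S_1$ variant adds the programmable mask $\bfw=\cH'(\term{seed})$ with equivocable commitments. Your ``main obstacle'' paragraph resolves only which honest clients survive, not this dependence on the corrupted clients' inputs.

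The paper's proof avoids the circularity by the usual semi-honest convention. Its $\cS$ reads the corrupted clients' true inputs and the drop set from $\cA$ at the outset, queries $\cF_{\text{SecAgg}}$ first, and only then emits the simulated honest ciphertexts. This is consistent with the theorem's hypotheses, which list only SHVZK proofs and secure commitments, with no extractability. Your appeal to simulation-extractability of the Fiat--Shamir NIZK is an extra assumption.

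If you adopt that convention, delete H3 and the extractability assumption. Your remaining hybrids then coincide with the paper's and go through; the corrupted $\bfs_i$ needed to shift the key are seen directly by $\cS$ acting as $S_2$. If you genuinely want malicious, rushing clients against a semi-honest $S_1$, extraction alone is not enough. You would need either a non-rushing assumption, so corrupted uploads can be extracted before honest ciphertexts are produced, or an equivocation mechanism like the one above.
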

\noindent The proof of Theorem~\ref{thm:atlas-lwe-sh} can be found in \hyperref[proof:sh-bp]{Section~\ref{sec:deferred}}.

\begin{figure*}[!t]
\centering
\resizebox{0.85\textwidth}{!}{\begin{tcolorbox}[
    colback=white, 
    colframe=black, 
    boxrule=0.8pt, 
    arc=2pt, 
    left=4pt, right=4pt, top=4pt, bottom=4pt,
    title={\textbf{Protocol $\sys_{\text{SH}}^{\text{BP}}$: Group-Based Aggregation (Semi-Honest)}},
    fonttitle=\bfseries\small,
    colbacktitle=gray!15!white, 
    coltitle=black
]
\footnotesize
\textbf{Setup:} Matrix $\mathbf{A} \in \mathbb{Z}_q^{L \times \lambda}$
\hfill
\textbf{Inputs:} Client $i$ has $\mathbf{s}^{(i)}, \mathbf{e}^{(i)}, \mathbf{x}^{(i)}$.

\tcbline

\textbf{Phase 1: Client Computation}
\begin{itemize}[nosep, leftmargin=*]
     \item \textbf{Masking \& Commitments:} Client $i$ computes:
        \( \bfy_i = \bfA\cdot\bfs_i + \bfe_i + \Delta \cdot \bfx_i \bmod q \)
        Compute commitments: $C_s^{(i)} = \Commit(\bfs_i; r_s^{(i)})$, $C_x^{(i)} = \Commit(\bfx_i; r_x^{(i)})$, $C_e^{(i)} = \Commit(\bfe_i; r_e^{(i)})$.
        
        \item \textbf{Proof Generation:} Generate NIZK proof $\pi_{\text{Enc}}^{(i)}$ satisfying Eq~\ref{eq:cs-enc}.
        \item \textbf{Distribution:}
        \begin{itemize}
            \item \textbf{To Server 1:} Sends $\bfy_i$, Proof $\pi_{\text{Enc}}^{(i)}$, and Commitments $\{C_s^{(i)}, C_x^{(i)}, C_e^{(i)}\}$.
            \item \textbf{To Server 2:} Sends Blinding Factor $\bfs_i$ and Opening $\{r_s^{(i)}\}$.
        \end{itemize}
\end{itemize}

\tcbline

\textbf{Phase 2: Consistency Check}
\begin{itemize}[nosep, leftmargin=*]
    \item \textbf{$S_2 \to S_1$ (Digest):} $S_2$ computes $z^{(i)} = \mathcal{H}(\Commit(\bfs_i;r_s^{(i)})$ and sends list $\{z^{(i)}\}$ to $S_1$.
    \item \textbf{$S_1$ Verification:} $S_1$ checks $z^{(i)} \stackrel{?}{=} \mathcal{H}(C_s^{(i)})$ and verifies proof $\pi^{(i)}$.
    \item \textbf{Filtering:} Define valid set $\mathcal{V} = \{ i \mid \text{Checks Pass} \}$. Send $\mathcal{V}$ to $S_2$.
\end{itemize}

\tcbline

\textbf{Phase 3: Aggregation}
\begin{itemize}[nosep, leftmargin=*]
    \item \textbf{$S_2 \to S_1$:} Compute aggregate secret $\mathbf{s}^{\Sigma} = \sum_{i \in \mathcal{V}} \mathbf{s}^{(i)}$ and send to $S_1$.
    \item \textbf{$S_1$ Finalize:} Compute $\mathbf{x}^{\Sigma} = \mathsf{Decode}\left( (\sum_{i \in \mathcal{V}} \mathbf{y}^{(i)}) - \mathbf{A} \mathbf{s}^{\Sigma} \right)$.
\end{itemize}
\end{tcolorbox}}
\caption{Protocol $\Sys_{\text{SH}}^{\text{BP}}$ for Semi-Honest Servers. Let $L,\lambda,q$ be integers such that they are secure parameter choices of Hint LWE Assumption.}
\label{fig:sh-bp-protocol}
\label{fig:sh-server}
\label{fig:sh-client}
\end{figure*}

       \subsubsection{Security Against Malicious Server.}
\label{sub:malicious}
In a two-server setting, at most one server can be malicious. However, 
the asymmetry of $\sys{}$ implies the following:

\begin{itemize}
    \item The client's input is handled solely by $S_1$. Therefore, a malicious server $S_1$, while attempting to leverage the corrupt clients and the semi-honest server $ S_2$, attempts to compromise the privacy of the inputs of the honest clients. Additionally, it can provide an incorrect sum of the inputs it has recovered. 
    \item Meanwhile, a malicious server $S_2$ can provide inconsistent inputs to server 1, affecting the output of the computation. 
\end{itemize}
We leverage this asymmetry to propose a protocol in which any communication from $S_2$ can be validated by $S_1$, with the aid of the client. However, a semi-honest $S_1$ cannot determine whether the inconsistency is due to $S_2$'s maliciousness or the client's. $S_1$ can preemptively abort or drop the client from the computation. This approach is similar to those taken by prior works, such as Elsa~\cite{SP:RSWP23}. However, in a major improvement, we can guarantee security with identifiable abort once the servers begin interacting. The protocol is formally described in Figure~\ref{fig:mal-bp-protocol}. 

\begin{figure*}[!ht]
\centering
\resizebox{0.85\textwidth}{!}{\begin{tcolorbox}[
    colback=white, 
    colframe=black, 
    boxrule=0.8pt, 
    arc=2pt,
    left=2pt, right=2pt, top=2pt, bottom=2pt,
    title={\textbf{Protocol $\sys_{\text{MAL}}^{\text{BP}}$: Group-Based Aggregation (Malicious)}},
    fonttitle=\bfseries\small,
    colbacktitle=gray!15!white, 
    coltitle=black
]
\footnotesize

\textbf{Parameters:} Matrix $\mathbf{A}$, Group $\mathbb{G}$, Hash $\mathcal{H}'$. 
\hfill 
\textbf{Input:} Client $i$ has $\mathbf{s}^{(i)}, \mathbf{e}^{(i)}, \mathbf{x}^{(i)}$.

\tcbline

\textbf{Phase 1: Client Computation}
\begin{itemize}[nosep, leftmargin=*]
    \item \textbf{Setup:} Sample seed $\term{seed}^{(i)} \in \{0,1\}^\lambda$. Derive mask $\mathbf{w}^{(i)} = \mathcal{H}'(\term{seed}^{(i)})$.
    \item \textbf{Masking:} Compute $\mathbf{y}^{(i)} = \mathbf{A}\mathbf{s}^{(i)} + \mathbf{e}^{(i)} + \Delta \mathbf{x}^{(i)} + \mathbf{w}^{(i)} \bmod q$.
    \item \textbf{Commitments:} Compute $C_s^{(i)}, C_x^{(i)}, C_e^{(i)}$ (as above) and $C_w^{(i)} = \Commit(\mathbf{w}^{(i)}; r_w^{(i)})$.
    \item \textbf{Proof:} Generate $\pi^{(i)}$ proving well-formedness of $\mathbf{y}^{(i)}$ and consistency of commitments.
    \item \textbf{Submission:}
    \begin{itemize}[nosep, leftmargin=1.5em]
        \item \textbf{To $S_1$:} $\mathbf{y}^{(i)}$, $\pi^{(i)}$, $\{C_s^{(i)}, C_x^{(i)}, C_e^{(i)}, C_w^{(i)}\}$.
        \item \textbf{To $S_2$:} $\mathbf{s}^{(i)}$, $\term{seed}^{(i)}$, Openings $\{r_s^{(i)}, r_w^{(i)}\}$.
    \end{itemize}
\end{itemize}

\tcbline

\textbf{Phase 2: Consistency Check}
\begin{itemize}[nosep, leftmargin=*]
    \item \textbf{$S_2 \to S_1$:} Recompute commitments $C_s' = \Commit(\mathbf{s}^{(i)}; r_s^{(i)})$ and $C_w' = \Commit(\mathcal{H}'(\term{seed}^{(i)}); r_w^{(i)})$. Send digests $z_s^{(i)} = \mathcal{H}(C_s')$ and $z_w^{(i)} = \mathcal{H}(C_w')$ to $S_1$.
    \item \textbf{$S_1$ Verify:} Check $z_s^{(i)} \stackrel{?}{=} \mathcal{H}(C_s^{(i)})$ and $z_w^{(i)} \stackrel{?}{=} \mathcal{H}(C_w^{(i)})$. If valid, add $i$ to $\mathcal{V}$.
\end{itemize}

\tcbline

\textbf{Phase 3: Secure Aggregation}
\begin{itemize}[nosep, leftmargin=*]
    \item \textbf{$S_2 \to S_1$:} Aggregate $\mathbf{s}^{\Sigma} = \sum \mathbf{s}^{(i)}$ and send seeds $\{\term{seed}^{(i)}\}_{i \in \mathcal{V}}$. Also send aggregate openings $R_s = \sum r_s^{(i)}$ and $R_w = \sum r_w^{(i)}$.
    \item \textbf{$S_1$ Verify:} Check homomorphic property:
    \[ \Commit(\mathbf{s}^{\Sigma}; R_s) \stackrel{?}{=} \prod_{i \in \mathcal{V}} C_s^{(i)} \quad \text{and} \quad \Commit(\sum \mathcal{H}'(\term{seed}^{(i)}); R_w) \stackrel{?}{=} \prod_{i \in \mathcal{V}} C_w^{(i)} \]
    \item \textbf{Finalize:} $\mathbf{x}^{\Sigma} = \mathsf{Decode}(\sum \mathbf{y}^{(i)} - \mathbf{A}\mathbf{s}^{\Sigma} - \sum \mathbf{w}^{(i)})$.
\end{itemize}
\end{tcolorbox}}
\caption{Protocol $\sys_{\text{MAL}}^{\text{BP}}$. Adds mask $\mathbf{w}$ and homomorphic verification for malicious security.}
\label{fig:mal-bp-protocol}
\label{fig:combined-mal-protocol}
\label{fig:mal-server}
\label{fig:mal-client}
\end{figure*}
We focus on the malicious security against server 1. The security against malicious server 2 follows from the binding property of commitments and collision-resistance of hash functions. A malicious server $S_2$ could either collude with corrupt clients in the digest phase where it provides convincing digests to convince $S_1$. However, due to the binding property of commitments, $S_2$ will not be able to provide correct openings to the malicious $\bfs{(i)}$ or $\term{seed}^{(i)}$. We have the following theorem whose proof can be found in \hyperref[proof:mal-bp]{Section~\ref{sec:deferred}}.

\begin{restatable}[Security Against Malicious Server 1]{theorem}{maliciousSOne}
\label{thm:malicious-s1}
Assuming the hardness of the hint Learning With Errors (LWE) problem, the existence of simulation-sound Non-Interactive Zero-Knowledge Arguments of Knowledge (NIZK-AoK) in the Random Oracle Model, and the security of perfectly hiding, computationally binding, additively homomorphic commitments (e.g., Pedersen), and collision-resistant hash functions, the protocol $\sys^{\text{BP}}$ (Figure~\ref{fig:mal-client}) securely realizes the ideal functionality $\cF_{\text{SecAgg}}$ (see Figure~\ref{fig:ideal-func-secagg}) against a malicious Server 1 and a subset of malicious clients, provided Server 2 is semi-honest and non-colluding.
\end{restatable}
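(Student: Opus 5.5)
We build a simulator $\cS$ that plays the honest clients and the semi-honest $S_2$ against the malicious $S_1$ (and the corrupt clients it controls), and we move from the real execution to the simulated one through a short sequence of hybrids. $\cS$ owns the random oracles $\mathcal{H}'$ and $\mathcal{H}$ and the NIZK simulator/extractor. It also generates the Pedersen parameters itself, so that it knows $\log_h g_\ell$ for every generator. This knowledge is the equivocation trapdoor. Since the parameters are identically distributed, giving it to $\cS$ changes nothing in $S_1$'s view.

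For every honest client $i$, $\cS$ sends $S_1$ a dummy ciphertext $\bfy^{(i)}$ with $\bfx^{(i)}=0$. It sends commitments $C_s^{(i)},C_x^{(i)},C_e^{(i)},C_w^{(i)}$ to uniformly random values; these are perfectly hiding. It attaches a simulated proof $\pi^{(i)}$, and forwards digests consistent with the $C_s^{(i)},C_w^{(i)}$ it produced.

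For corrupt clients, $\cS$ reads off what they send to the (simulated) $S_2$, namely $\bfs^{(i)},\term{seed}^{(i)}$ and the openings. It also runs the NIZK extractor on the proofs submitted to $S_1$. Simulation soundness guarantees that extraction works even after $S_1$ has seen simulated proofs. From the extracted witnesses $\cS$ recovers $\bfx^{(i)}$. It marks a corrupt client invalid exactly when its proof fails or its digests mismatch, and sends \textsf{(Corrupt)} and \textsf{(Drop)} to $\cF_{\text{SecAgg}}$ accordingly. Once $S_1$ announces $\mathcal{V}$, $\cS$ queries the functionality and learns $\bfX=\sum_{i\in\mathcal{V}\setminus\cC_{\text{corr}}}\bfx^{(i)}$, the honest part of the aggregate.

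The hybrids are as follows.
\begin{enumerate}
\item Replace the honest clients' proofs by simulated proofs. This step relies on zero knowledge.
\item Replace the honest clients' commitments by commitments to random values. This is a perfect change, by perfect hiding.
\item Handle the masks. For honest $i$, set $\bfw^{(i)}=\mathcal{H}'(\term{seed}^{(i)})$ to be lazily sampled. Since $\bfw^{(i)}$ is uniform, $\bfy^{(i)}$ is uniform, provided $S_1$ never queries $\term{seed}^{(i)}$ before Phase~3. That happens with probability at most $q_H/2^\lambda$.
\item Program the oracle at Phase~3. Up to this point $\bfy^{(i)}$ is sampled uniformly. At Phase~3, $\cS$ chooses $\bfs^{(i)}$ honestly and fresh errors $\bfe^{(i)}$. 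It uses the bound $\bfX$ to pick masks
\[
\bfw^{(i)*}=\bfy^{(i)}-\bfA\bfs^{(i)}-\bfe^{(i)}-\Delta\bfx^{(i)*},
\]
where the $\bfx^{(i)*}$ are any vectors summing to $\bfX$, for instance $\bfX$ placed on one honest client and $0$ on the others. It programs $\mathcal{H}'(\term{seed}^{(i)}):=\bfw^{(i)*}$.
\item Equivocate the openings. Using the trapdoor, $\cS$ computes openings $r_w^{(i)*}$ with $\Commit(\bfw^{(i)*};r_w^{(i)*})=C_w^{(i)}$, and similarly $r_s^{(i)*}$ for $C_s^{(i)}$. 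The aggregates $R_s,R_w$ then pass the homomorphic checks.
\end{enumerate}

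The main obstacle is step~4, specifically arguing that the revealed $\bfs^\Sigma$ together with the individual programmed masks leaks nothing beyond $\bfX$. Revealing all seeds means $S_1$ sees $\bfy^{(i)}-\bfw^{(i)}=\bfA\bfs^{(i)}+\bfe^{(i)}+\Delta\bfx^{(i)}$ for each $i$. If these were honestly distributed, that alone would leak the inputs. The point of programming is that the masks $\bfw^{(i)*}$ are uniform and independent: each is an offset of a uniform $\bfy^{(i)}$. Hence the joint view, consisting of the uniform $\bfy^{(i)}$, the uniform $\bfw^{(i)*}$ and $\bfs^\Sigma$, is determined up to distribution by the sum $\sum_i(\bfy^{(i)}-\bfw^{(i)*})$ and $\bfs^\Sigma$.

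We handle this residual distribution with Theorem~\ref{thm:simulatability}. The Hint-LWE leakage $\bfs^\Sigma$ of individual samples is exactly what that theorem simulates given only $\bfA$ and $\bfX$. We would formalize this as a reduction that embeds the challenge of Theorem~\ref{thm:simulatability} into the sums $\bfy^{(i)}-\bfw^{(i)*}$. Collision resistance of $\mathcal{H}$ is needed so that the corrupt clients' digest matches imply consistent commitments. Binding then ensures that the extracted $\bfs^{(i)}$ coincide with what $S_2$ aggregates, which makes the functionality's output equal the real decrypted sum. Correct decoding holds because the honest errors are small.
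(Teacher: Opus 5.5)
Your proposal is correct and follows essentially the paper's proof. The shared ingredients are a discrete-log trapdoor on the Pedersen generators to equivocate commitment openings, simulated honest NIZKs with extraction for corrupt clients, and programming $\mathcal{H}'$ on the honest seeds once $\mathcal{V}$ is fixed so that the unmasked honest ciphertexts carry $\mathbf{X}$ on one client and $\mathbf{0}$ elsewhere. Like the paper, you then use Theorem~\ref{thm:simulatability} to absorb the revealed $\mathbf{s}^\Sigma$. Your ordering, which makes $\mathbf{y}^{(i)}$ uniform via the mask and defers the LWE layer until after $\mathcal{V}$ is chosen, handles the adaptive choice of $\mathcal{V}$ more explicitly than the paper's H3/H4.

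Two small fixes are needed. First, the view exposes each difference $\mathbf{y}^{(i)}-\mathbf{w}^{(i)*}$ individually, not only their sum; this is exactly why that step is computational (via Hint-LWE) rather than perfect. Second, your Hybrid~2 is a perfect change only when combined with the trapdoor-equivocated openings of Hybrid~5, because $R_s$ and $R_w$ are revealed later.
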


\subsection{Building $\sys^\text{LWE}$}

We now focus on building $\sys^{\text{LWE}}$, which instantiates the proof system by relying on lattice-based proof systems. We first build the protocol secure against a semi-honest server and then show how to extend it for a malicious server.
\subsubsection{Security Against Semi-Honest Servers and Malicious Clients.}
We define the semi-honest protocol in Figure~\ref{fig:sh-lattice}. The process flow is similar to the bulletproof but with critical technical challenges. 

A core security requirement is ensuring \emph{input consistency}: a malicious client must not be able to prove correctness of a valid masking secret $\bfs$ to Server~1 ($S_1$) via the ZK-AoK, while revealing a different, incorrect secret $\bfs' \neq \bfs$ to Server~2 ($S_2$). A standard mitigation, having $S_2$ recompute the full set of ZK responses $\{\bfz_{j,u,s}\}$ to verify consistency against $S_1$'s view, would require the client to send the underlying randomness $\{\bfr_{j,u,s}\}$ for all $k$ blocks. Since $k = L/\lambda$, this would result in communication complexity scaling linearly with the vector dimension $L$, negating the protocol's efficiency goals. We now detail two approaches towards circumventing this bottleneck:

\paragraph{Folding Argument via Randomized Aggregation.} To circumvent this bottleneck, we employ a \emph{randomized aggregation technique}. Instead of verifying block-wise consistency, we check consistency over a random linear combination of the blocks. 
\begin{enumerate}
    \item \textbf{Randomized Compression:} The client aggregates the randomness vectors across all $k$ blocks into a single compact vector $\bfr_{u,i}^* = \sum_{j=1}^k \alpha_{j,u} \cdot \bfr_{j,u,s}^{(i)}$ for each repetition $u$. The coefficients $\alpha_{j,u}$ are derived from a PRG seeded by the proof transcript $\rho_i$, ensuring they are verifiably random and bound to the proof sent to $S_1$.
    \item \textbf{Split Verification:} 
    \begin{itemize}
        \item $S_1$ computes the \emph{aggregated proof response} $\bfz_{u, \text{s1}}^*$ by applying the same linear combination to the proof components $\bfz_{j,u,s}^{(i)}$ it received.
        \item $S_2$ computes the \emph{expected aggregate response} $\bfz_{u, \text{s2}}^*$ using the compressed randomness $\bfr_{u,i}^*$ and the secret $\bfs_i$ it received.
    \end{itemize}
    \item \textbf{Consistency:} The servers verify that $\bfz_{u, \text{s1}}^* = \bfz_{u, \text{s2}}^*$ by using a hash function to produce a digest.
\end{enumerate}
This reduces the consistency check traffic from $O(\kappa L)$ to $O(\kappa \cdot \lambda)$, making it independent of the vector dimension $L$ while retaining binding security due to the unpredictability of $\alpha_{j,u}$. We argue that a malicious client cannot successfully satisfy the consistency check at $S_1$ and $S_2$ simultaneously while providing different inputs.

\begin{restatable}[Folding via Randomized Aggregation]{lemma}{bindinglemma}
\label{lem:binding-aggregation}
Let $\Delta \mathbf{s} = \mathbf{s} - \mathbf{s}' \neq \mathbf{0}$ be the difference between the secret used in the proof to $S_1$ and the secret sent to $S_2$. For a client to pass the consistency check for all $t$ repetitions, they must present fake aggregated randomness $\mathbf{r}^{*\prime}_{u,i}$ to $S_2$ satisfying:
\begin{equation}
    \label{eq:fake-rand}
    \mathbf{r}^{*\prime}_{u,i} = \mathbf{r}^*_{u,i} + c^*_{u,i} \cdot \Delta \mathbf{s}
\end{equation}
where $\mathbf{r}^*_{u,i}$ is the valid aggregated randomness and $c^*_{u,i} = \sum_{j=1}^k \alpha_{j,u} c_{j,u}$ is the aggregated challenge. The probability that the norm bound check $\|\mathbf{r}^{*\prime}_{u,i}\|_2 \le \tau \sigma_s \sqrt{k \lambda}$ holds for all $u \in [t]$ is negligible in $\kappa$.
\end{restatable}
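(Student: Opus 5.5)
Our plan has two parts. First we derive Eq.~\eqref{eq:fake-rand} from the consistency equations alone. Then we show that the required shift pushes the fake randomness outside the Gaussian norm ball, except with probability negligible in $\kappa$.

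\textbf{Step 1: deriving Eq.~\eqref{eq:fake-rand}.} Fix a repetition $u$. Server $S_1$ holds block responses $\bfz_{j,u,s}^{(i)}$ that verify in $\Pi_{\term{Enc}}$. By the knowledge soundness of $\Pi_{\term{Enc}}$ for $\mathcal{R}_{\term{Block}}$, these have the form $\bfz_{j,u,s}=\bfr_{j,u,s}+c_{j,u}\bfs_j$. For simplicity we take $\bfs_j=\bfs$; the per-block case is handled by the same argument applied to $\sum_j\alpha_{j,u}c_{j,u}\bfs_j$. Then
\[
\bfz^*_{u,\text{s1}}=\sum_j\alpha_{j,u}\bfz_{j,u,s}=\bfr^*_{u,i}+c^*_{u,i}\bfs.
\]
Server $S_2$ computes $\bfz^*_{u,\text{s2}}=\bfr^{*\prime}_{u,i}+c^*_{u,i}\bfs'$. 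Equality of the digests implies, by collision resistance of $\mathcal{H}$, that $\bfz^*_{u,\text{s1}}=\bfz^*_{u,\text{s2}}$. Rearranging gives exactly Eq.~\eqref{eq:fake-rand}, so the first claim is a direct computation.

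\textbf{Step 2: the norm bound.} By Lemma~\ref{lem:bimodal}, the honest aggregated randomness $\bfr^*_{u,i}$ is a sum of independent Gaussian vectors weighted by the $\alpha_{j,u}$. Its $\ell_2$ norm is concentrated near $\sigma_s\sqrt{\lambda\sum_j\alpha_{j,u}^2}\le\sigma_s\sqrt{k\lambda}$. The key point is that $\bfr^*_{u,i}$ is already determined by the transcript $\rho_i$ before the client can adapt. The coefficients $\alpha_{j,u}$ are pseudorandom bits derived from $\rho_i$, which we model as a random oracle, and the $c_{j,u}$ are Fiat--Shamir challenges. So $\Delta\bfs$, $\bfr^*_{u,i}$, and the block commitments are all fixed before $c^*_{u,i}$ is sampled. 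We will show that the integer $c^*_{u,i}=\sum_j\alpha_{j,u}c_{j,u}$ has magnitude $\Omega(\sqrt{k})$ with constant probability. It is a sum of $k$ independent bounded symmetric variables, so the Paley--Zygmund inequality or an anti-concentration estimate (Littlewood--Offord) gives this. On that event,
\[
\|\bfr^{*\prime}_{u,i}\|\ge|c^*_{u,i}|\,\|\Delta\bfs\|-\|\bfr^*_{u,i}\|,
\]
and this exceeds $\tau\sigma_s\sqrt{k\lambda}$ whenever $\|\Delta\bfs\|$ is large compared with $\sigma_s\sqrt{\lambda}$.

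\textbf{Main obstacle: small $\Delta\bfs$.} The difficult case is a short nonzero $\Delta\bfs$, where the shift may fit inside the norm slack. For this case we would instead use a \emph{cancellation/guessing} argument. To pass with a given $\bfr^{*\prime}$, the client must commit to $\bfr^{*\prime}$ (equivalently to $\Delta\bfs$ and the coefficient it cancels) before learning $c^*_{u,i}$. For fixed $\Delta\bfs\neq\mathbf{0}$, the relation $\bfr^{*\prime}-\bfr^*=c^*\Delta\bfs$ pins down $c^*$ uniquely. Anti-concentration of $c^*$ bounds the probability of any single value by $O(1/\sqrt{k})$, or by at most a constant $<1$ in general.

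\textbf{Final step: independence across repetitions.} The $t$ repetitions use independent challenges and coefficients, because they are derived from distinct random-oracle inputs. Hence the success probability multiplies to $\epsilon^t$ for some constant $\epsilon<1$. Choosing $t=O(\kappa)$ makes this $2^{-\Omega(\kappa)}$. A union bound over the adversary's random-oracle queries, via a standard forking-free counting argument, absorbs the polynomial factor.
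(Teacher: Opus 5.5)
\textbf{There is a genuine gap: your argument does not cover short nonzero $\Delta\mathbf{s}$, and for such $\Delta\mathbf{s}$ the stated conclusion fails.}

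Your Step~1 is fine. Your main claim in Step~2, however, is that $\Delta\mathbf{s}$ and $\mathbf{r}^{*\prime}_{u,i}$ are fixed before $c^*_{u,i}$. That is false for this protocol. The client itself computes $\rho^{(i)}=\mathcal{H}(\mathbf{A},\mathbf{y}^{(i)},\pi^{(i)})$, the coefficients $\alpha_{j,u}$ from $\term{PRG}(\rho^{(i)})$, and the Fiat--Shamir challenges $c_{j,u}$. Only afterwards does it send $\mathbf{s}'$ and $\mathbf{r}^{*\prime}_{u,i}$ to $S_2$, and neither value is bound into $\rho^{(i)}$. A cheating client therefore just sets $\mathbf{r}^{*\prime}_{u,i}:=\mathbf{z}^*_{u,\text{s1}}-c^*_{u,i}\mathbf{s}'$, knowing $c^*_{u,i}$ exactly. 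So there is no guessing event for your cancellation argument to bound. With it go the per-repetition bound $\epsilon<1$ and the $\epsilon^t$ amplification.

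The small-$\Delta\mathbf{s}$ case is not just hard to prove; the check passes with probability close to $1$ in every repetition. Take $\Delta\mathbf{s}=\pm$ a unit vector. Then $|c^*_{u,i}|\le k$, and
$\|\mathbf{r}^*_{u,i}+c^*_{u,i}\Delta\mathbf{s}\|_2^2-\|\mathbf{r}^*_{u,i}\|_2^2\le k^2+2k\|\mathbf{r}^*_{u,i}\|_\infty$.
By your own estimate, the honest norm is about $\sigma_s\sqrt{\lambda\sum_j\alpha_{j,u}^2}$, roughly $\sigma_s\sqrt{k\lambda/2}$ for binary $\alpha_{j,u}$. That leaves a gap of order $\sigma_s^2k\lambda$ below $(\tau\sigma_s\sqrt{k\lambda})^2$, which swamps the increase whenever $k\ll\sigma_s^2\lambda$, as in the paper's parameter regime.

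Your use of Lemma~\ref{lem:bimodal} also assumes too much. That lemma describes the honest prover only. A malicious client can choose its masks $\mathbf{r}_{j,u,s}$ tiny, even $\mathbf{0}$, and still pass the upper-bound norm checks of $\Pi_{\term{Enc}}$, which only widens the slack.

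The norm check can therefore at best exclude differences with $|c^*_{u,i}|\,\|\Delta\mathbf{s}\|_2$ beyond roughly $\sigma_s\sqrt{k\lambda}$. That is your large-$\Delta\mathbf{s}$ case, and it too needs the honest-sampling assumption just described. The paper's own sketch rests on the same thin-shell heuristic for a ``significant'' shift and does not supply the missing piece either. Binding against every nonzero $\Delta\mathbf{s}$ needs an algebraic check on $\mathbf{s}$ itself, such as comparing digests of the SIS tag $\mathbf{B}\mathbf{s}$ across the two servers as in the malicious variant. Otherwise the lemma must be restricted to large $\Delta\mathbf{s}$ and honestly sampled randomness.
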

We defer the proof to Section~\ref{sec:deferred}.

\paragraph{Seed-based Digest Approach.}Instead of algebraic folding, we utilize a deterministic reconstruction approach. The Client samples a seed $\rho$, which is expanded to produce the masking vectors:
\begin{equation*}
    \mathbf{r}_{j,u,s} := \mathcal{H}_{r}(\rho, u, j, \text{attempt})
\end{equation*}
where $\text{attempt}$ is the index of the first successful rejection sampling iteration for block $j$, in iteration $u$. With $\rho$ and the attempt values, $S_2$ can independently reconstruct the client's state.

\begin{itemize}
    \item \textbf{Protocol:} The Client sends to $S_2$ the secret $\mathbf{s}^{(i)}$, seed $\rho$, challenges $\{c_{j,u}\}$, and an attempt matrix $\mathbf{Att}$. $S_2$ reconstructs $\mathbf{z}_{j,u,s} := \mathbf{r}_{j,u,s} + c_{j,u} \cdot \mathbf{s}^{(i)}$ and computes a digest $h^{(i)} := \mathcal{H}(\{\mathbf{z}_{j,u,s}, c_{j,u}\}_{j,u})$. $S_1$ computes a matching digest using the values received directly from the Client.
    \item \textbf{Complexity:} We eliminate the overhead of random linear combinations and complex norm-bound checks. While this introduces an $O(\kappa \cdot L/\lambda)$ communication cost for the matrix $\mathbf{Att}$ (as opposed to simply sending $O(\kappa \lambda)$ in the previous approach), this matrix is typically sparse and highly compressible, ensuring minimal impact on total bandwidth.
\end{itemize}
The formal proof of security in the semi-honest setting is largely similar to that of the bulletproof-based counterpart. The key idea is that the simulator can emulate the ciphertexts of the honest users by querying the functionality to obtain the sum of their inputs. Further, using the special-honest verifier zero-knowledge proofs, the zero-knowledge proofs can be simulated.

\begin{figure*}[!t]
\centering
\resizebox{0.85\textwidth}{!}{\begin{tcolorbox}[
    colback=white, 
    colframe=black, 
    boxrule=0.8pt, 
    arc=2pt, 
    left=4pt, right=4pt, top=4pt, bottom=4pt,
    title={\textbf{Protocol $\sys_{\text{SH}}^{\text{LWE}}$: LWE Aggregation (Semi-Honest Servers)}},
    fonttitle=\bfseries\small,
    colbacktitle=gray!15!white, 
    coltitle=black
]
\footnotesize
\textbf{Setup:} Public matrix $\mathbf{A} \in \mathbb{Z}_q^{L \times \lambda}$ (parsed as $k$ blocks), Hash $\mathcal{H}$, $\term{PRG}:\{0,1\}^\lambda \to \{0,1\}^{kt}$.
\\\hfill
\textbf{Inputs:} Client $i$ has secret $\mathbf{s}^{(i)} \in \mathbb{Z}_q^\lambda$, error $\mathbf{e}^{(i)}$, input $\mathbf{x}^{(i)}$.

\tcbline

\textbf{Phase 1: Client Computation (Local)}
\begin{itemize}[nosep, leftmargin=*]
    \item \textbf{Masking:} Compute LWE ciphertext $\mathbf{y}^{(i)} = \mathbf{A}\mathbf{s}^{(i)} + \mathbf{e}^{(i)} + \Delta \mathbf{x}^{(i)} \bmod q$.
    \item \textbf{Proof Generation:} :
        Generate $\pi^{(i)}=\left\{ c_{j,u}, \bfz_{j,u,s}^{(i)}, \bfz_{j,u,e}^{(i)}, \bfz_{j,u,x}^{(i)} \right\}_{j,u}$ (as described in Figure~\ref{fig:optimized-proto}) for blocks $j \in \{1,\dots,k\}$ and repetitions $u \in \{1,\dots,t\}$ where $\bfz_{j,u,\bullet}:=\bfr_{j,u,\bullet}+c_{j,u}\cdot \bullet_j$ and $\bullet\in\{s,e,x\}$
    \item \textbf{Aggregation Coefficients:} Derive $\rho^{(i)} = \mathcal{H}(\mathbf{A}, \mathbf{y}^{(i)}, \pi^{(i)})$ and get: $\{\alpha_{j,u}\}_{j,u}\gets\term{PRG}(\rho^{(i)})$
    \item For each repetition $u \in [t]$, aggregate randomness over blocks $j$:
             \( {\bfr_u^{(i)}}^\ast = \sum_{j=1}^k \alpha_{j,u} \cdot \bfr_{j,u,s}^{(i)} \)
             Restart if $\|{\bfr_u^{(i)}}^\ast\|_2>\tau \sigma_s \sqrt{k \lambda}$. 
    \item \textbf{Submission:}
    \begin{itemize}[nosep, leftmargin=1.5em]
        \item \textbf{To Server 1 ($S_1$):} Ciphertext $\mathbf{y}^{(i)}$, Proof commitment $\pi^{(i)}$, Seed $\rho^{(i)}$..
        \item \textbf{To Server 2 ($S_2$):} Secret $\mathbf{s}^{(i)}$, Compact response ${\bfr_u^{(i)}}^\ast$, Seed $\rho^{(i)}$, Challenges $\{c_{j,u}\}$.
    \end{itemize}
\end{itemize}

\tcbline

\textbf{Phase 2: Verification \& Filtering}
\begin{itemize}[nosep, leftmargin=*]
    \item \textbf{$S_2$ Verification:} Reconstruct challenge $\alpha_{j,u}$ from $\rho^{(i)}$. Compute $c_{u,i}^* = \sum_{j=1}^k \alpha_{j,u} c_{j,u}$ for $u \in [t]$.
    \item  If $\|{\bfr_{u}^{(i)}}^*\|_2 \le \tau \sigma_s \sqrt{k \lambda}$, then compute: $\bfz_{u, \text{s2}}^* = {\bfr_{u}^{(i)}}^* + c_{u, i}^* \bfs^{(i)}$.
        \item Send digest $h^{(i)} = \cH(\rho^{(i)},\{c_{j,u}\}_{j,u},\{ \bfz_{u, \text{s2}}^* \}_{u=1}^t)$ to $S_1$.
    \item \textbf{$S_1$ Verification:} $S_1$ does: (1) $\pi^{(i)}$ verification, and (2) computes $\{\bfz_{u,s1}^\ast\}_u$ from $\bfz_{j,u,s}^{(i)}$ and verifies against $h^{(i)}$ received from $S_2$. If valid, add $i \text{ to }\mathcal{V}$.
\end{itemize}

\tcbline

\textbf{Phase 3: Final Aggregation}
\begin{itemize}[nosep, leftmargin=*]
    \item \textbf{Aggregation:} $S_2$ computes $\mathbf{s}^{\Sigma} = \sum_{i \in \mathcal{V}} \mathbf{s}^{(i)}$ and sends to $S_1$.
    \item \textbf{Decryption:} $S_1$ computes $\mathbf{x}^{\Sigma} = \mathsf{Decode}\left( (\sum_{i \in \mathcal{V}} \mathbf{y}^{(i)}) - \mathbf{A} \mathbf{s}^{\Sigma} \right)$.
\end{itemize}
\end{tcolorbox}}
\caption{The Semi-Honest Protocol $\sys_{\text{SH}}^{\text{LWE}}$. Here $t,L,d,k$ are integers such that $L=dk$. Further, $L,\lambda,q$ satisfies the Hint LWE Assumption.} 
\label{fig:sh-lattice}
\end{figure*}

\subsubsection{Security Against Malicious Servers.}
 Before we present the malicious version of the protocol it is necessary
to discuss the additional properties needed in this version. They are:
\begin{itemize}
    \item As shown in the malicious version of $\sys^{\text{BP}}$, we need a secondary mask $\bfw^{(i)}$ that is the output of a hash function $\cH'$ on $\term{seed}^{(i)}$. The purpose of this is two fold: (a) use this to program the random oracle model once the online set is finalized by server $S_1$, and (b) to ensure that the communication from $S_2$ to $S_1$ is independent of the vector length. 
    \item Similar to the other malicious version, we need server $S_2$ to prove to $S_1$ that it has provided the correct $\bfs_{\text{agg}}$, with respect to the online set. Unfortunately, with the information provided to $S_1$, it cannot be convinced. To this end, we need a second commitment that is additively homomorphic, lattice-based and compatible with the rest of the proof protocol. 
    \item Finally, we need an equivocable commitment. This is critical because while we provide a commitment to $\bfw^{(i)}$ and prove that $\bfy^{(i)}$ is correctly computed with respect to this commitment, $S_1$ needs to be convinced that indeed the $\term{seed}^{(i)}$ provided by $S_2$ when expanded opens the commitment to $\bfw^{(i)}$. We need this level of equivocation. To this end we will use the Ajtai commitment~\cite{Ajtai,C:LyuNguPla22}. One can commit to $\bfs_1$ using randomness $\bfs_2$ as $\bfB \bfs_1 + \bfC \bfs_2$ where we require that norm of $\|\bfs_1\|$ and $\|\bfs_2\|$ are small. This is binding under the SIS assumption and hiding aspect of the commitment scheme is based on the indistinguishability of $\bfC,\bfC\bfs_2$ from uniform. A key artifact that will be useful for our protocol is that the dimension of the message $\bfs_1$ does not increase the commitment size. Further, since SIS hardness is fairly independent of the size of the solution, increasing the length of $\bfs_1$ does not impact the security. However, we do require that the norm of $\bfs_1$ is bounded. 
\end{itemize}
We first show how to extend Figure~\ref{fig:optimized-proto} to now work for our desired malicious setting, specifically for the following relation:

\begin{definition}[Relation $\mathcal{R}_{\term{Enc}}^\ast$]
\label{def:relation}
The prover $\mathcal{P}$ proves knowledge of a witness 
$\mathfrak{w} = (\mathbf{s}, \mathbf{e}, \mathbf{x}, \mathbf{w}, \boldsymbol{\rho}_s, \boldsymbol{\rho}_w)$ 
satisfying the following conditions with respect to the public instance 
$\mathfrak{x} = (\mathbf{A}, \mathbf{B}_{s}, \bfB_w,\bfB_\rho, \mathcal{C})$:

\begin{enumerate}\itemsep0em
    \item \textbf{Ciphertext Validity (Payload Masking):} $\mathbf{ct} \equiv \mathbf{A}\mathbf{s} + \mathbf{e} + \Delta (\mathbf{x} + \mathbf{w}) \pmod q$
    where $\Delta = \lfloor q/p \rfloor$.

    \item \textbf{Commitment Binding:} $\cC \equiv \bfB_s \bfs + \bfB_w \bfw + \bfB_\rho \bfrho \pmod q$
    \item \textbf{Norm Constraints:} $\|\mathbf{s}\|_2\leq \beta_s, \|\mathbf{e}\|_2 \le \beta_{s}, 
        \|\mathbf{x}\|_2 \le \beta_{x},
        \|\mathbf{w}\|_2 \le \beta_{w} ,
        \|\boldsymbol{\rho}\| \le \beta_{\rho}$

\end{enumerate}
\end{definition}
The proposed protocol $\Pi^\ast$ (defined in Figure~\ref{fig:unified-proto}) extends the standard lattice-based $\Sigma$-protocol defined from Figure~\ref{fig:optimized-proto}. We introduce two critical enhancements:
(1) We augment the relation to include a witness $\bfw$ (client weights/metadata), proving it satisfies the linear relation $\bfy = \bfA\bfs + \bfe + \Delta \bfw + \Delta \bfx$ without revealing $\bfw$. Looking ahead $\bfw$ will be the output of the programmable hash function. 
(2) To ensure the prover is bound to their specific weight $\bfw$ and gradient $\bfs$ across the protocol, we add an auxiliary compressing Ajtai commitment $\cC = \bfB_s \bfs + \bfB_w \bfw + \bfB_\rho \bfrho$. This binds the ephemeral LWE secrets to a long-term identity or pre-declared metadata.
\begin{figure*}[!h]
\centering
\resizebox{0.95\textwidth}{!}{%
\begin{tcolorbox}[
    colback=white, colframe=black, 
    title={\textbf{Protocol $\Pi^\ast$: Extended $\Sigma$ Protocol}},
    fonttitle=\bfseries\scriptsize,
    colbacktitle=gray!15!white, coltitle=black,
    boxrule=0.7pt, arc=2pt, left=2pt, right=2pt, top=2pt, bottom=2pt
]
\scriptsize
\textbf{Public Parameters (CRS):} 
LWE Matrix $\bfA \in \bbZ_q^{L \times \lambda}$ (blocks $\{\bfA^{(j)}\}$); 
Commitment Matrices $\bfB_{s}\in \bbZ_q^{\lambda \times \lambda}$, $\bfB_{w}\in\bbZ_q^{\lambda \times d}$, $\bfB_{\rho} \in \bbZ_q^{\lambda \times m}$; 
Static Commitments $\{\cC^{(j)}\}_{j=1}^k$, $\cC^{(j)} \in \bbZ_q^\lambda$.

\textbf{Prover Witness:} 
$\bfs \in \bbZ^{\lambda}$, $\{\bfe^{(j)}, \bfx^{(j)}, \bfw^{(j)}\}_{j=1}^k$, $\{\bfrho^{(j)} \in \bbZ^m\}_{j=1}^k$

 \vspace{2mm}
 \hrule
 \vspace{2mm}

\textbf{Execution (for $j = 1 \dots k$ in parallel):}
\begin{enumerate}[itemsep=0pt, leftmargin=*, label=({\arabic*})]
    \item \textbf{Sample:} $\bfr_{j,s} \gets D_{\sigma_s}^\lambda$, $\bfr_{j,e} \gets D_{\sigma_e}^d$, $\bfr_{j,x} \gets D_{\sigma_x}^d$, $\bfr_{j,w} \gets D_{\sigma_w}^d$, $\bfr_{j, \rho} \gets D_{\sigma_\rho}^m$
    \item \textbf{Prover Msgs:} $\bft_{j} = \bfA^{(j)}\bfr_{j,s} + \bfr_{j,e} + \Delta\bfr_{j,x} + \Delta \bfr_{j,w} \pmod q$; $\bfv_{j} = \bfB_{s}\bfr_{j,s} + \bfB_{w}\bfr_{j,w} + \bfB_{\rho}\bfr_{j, \rho} \pmod q$
    \item \textbf{Challenge:} $c_j = H(\bfA, \bfB_{s}, \bfB_{w}, \bfB_{\rho}, \cC^{(j)}, \bft_{j}, \bfv_{j}, j) \in \{-1, 0, 1\}$
    \item \textbf{Response:} 
    $\bfz_{j,s} = \bfr_{j,s} + c_j \bfs$, 
    $\bfz_{j,e} = \bfr_{j,e} + c_j \bfe^{(j)}$, 
    $\bfz_{j,x} = \bfr_{j,x} + c_j \bfx^{(j)}$, 
    $\bfz_{j,w} = \bfr_{j,w} + c_j \bfw^{(j)}$, 
    $\bfz_{j, \rho} = \bfr_{j, \rho} + c_j \bfrho^{(j)}$
    \item \textbf{Rejection Sampling:} If $\|\bfz_{\bullet}\|$ too large, reject and resample; else output $\pi_j = (c_j, \bfz_{j,s}, \bfz_{j,e}, \bfz_{j,x}, \bfz_{j,w}, \bfz_{j, \rho})$
\end{enumerate}

 \vspace{2mm}
 \hrule
 \vspace{2mm}

\textbf{Verification (for $j = 1 \dots k$):}
\begin{enumerate}[itemsep=0pt, leftmargin=*, label=({\arabic*})]
    \item $\bar{\bft}_{j} = \bfA^{(j)}\bfz_{j,s} + \bfz_{j,e} + \Delta\bfz_{j,x} + \Delta \bfz_{j,w} - c_j \bfy^{(j)} \pmod q$
    \item $\bar{\bfv}_{j} = \bfB_{s}\bfz_{j,s} + \bfB_{w}\bfz_{j,w} + \bfB_{\rho}\bfz_{j, \rho} - c_j \cC^{(j)} \pmod q$
    \item Check $c_j \stackrel{?}{=} H(\bfA, \bfB_{s}, \bfB_{w}, \bfB_{\rho}, \cC^{(j)}, \bar{\bft}_{j}, \bar{\bfv}_{j}, j)$ and $\|\bfz_{\bullet}\|_2 \le \beta_{\bullet}$
\end{enumerate}
\end{tcolorbox}
}
\caption{Extended $\Sigma$-protocol for the relation in Definition~\ref{def:relation} using Ajtai Commitment.}
\label{fig:unified-proto}
\end{figure*}

\begin{restatable}[Security Against Malicious Server 1]{theorem}{maliciousSOneLat}
\label{thm:malicious-s1-lat}
Assuming the hardness of the {Hint Learning With Errors (Hint-LWE)} problem, the {Short Integer Solution (SIS)} problem (for the binding of Ajtai commitments), and the existence of a {Random Oracle} $\cH'$, the protocol $\sys^{\text{LWE}}$ securely realizes the ideal functionality $\cF_{\text{SecAgg}}$ in the presence of a malicious Server 1 controlling a static subset of corrupt clients $\bfK$, provided that Server 2 is semi-honest and non-colluding.
\end{restatable}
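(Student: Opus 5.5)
We will prove Theorem~\ref{thm:malicious-s1-lat} by the hybrid method. We build a simulator $\cS$ that plays the honest $S_2$ and all honest clients toward a malicious $S_1$ that controls the corrupt clients $\bfK$. $\cS$ programs two things: the random oracle $\cH'$ and the Fiat--Shamir oracle $H$ used in $\Pi^\ast$. It also generates the commitment matrix $\bfB_\rho$ together with a GPV trapdoor $\mathbf{T}_\rho$. The simulation proceeds as follows.
\begin{enumerate}
\item For every honest client $i$, $\cS$ sends $S_1$ a ciphertext $\bfy^{(i)}$ that is uniform in $\bbZ_q^L$, a uniform commitment $\cC^{(i)}$, and a proof $\pi^{(i)}$ produced by the SHVZK simulator of $\Pi^\ast$. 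The simulator samples the responses $\bfz_{j,\bullet}$ from the centered Gaussians of Lemma~\ref{lem:bimodal}, samples each $c_j$ uniformly, and programs $H$ at the reconstructed $(\bar\bft_j,\bar\bfv_j)$.
\item For corrupt clients, $\cS$ receives $S_2$'s share directly, since it plays $S_2$. It extracts their inputs with the knowledge extractor of $\Pi^\ast$ (rewinding or straight-line extraction in the ROM). Consistency between the extracted $\bfs$ and the $\bfs$ sent to $S_2$ is guaranteed by the digest and folding check, via Lemma~\ref{lem:binding-aggregation}, and by SIS-binding of the Ajtai commitment (Lemma~\ref{lem:commitment-security}). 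Client-side blocks may a priori carry different $\bfs_j$; we reconcile them through the tag $\cC$, which binds a single $\bfs$.
\item Once malicious $S_1$ fixes the valid set $\mathcal{V}$, $\cS$ sends \textsf{(Corrupt)}/\textsf{(Drop)} messages to $\cF_{\text{SecAgg}}$ and receives $\bfX=\sum_{i\in\mathcal{V}}\bfx^{(i)}$.
\end{enumerate}

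Next, $\cS$ must produce $\bfs^\Sigma$, the seeds, and the aggregate opening so that $S_1$'s decryption yields exactly $\bfX$ and every homomorphic check passes. To do this, $\cS$ samples $\bfs^{(i)}$ for the honest clients freely, and sets $\bfs^\Sigma$ to their sum plus the corrupt clients' extracted secrets. It then chooses one honest client $i^\ast\in\mathcal{V}$ and fresh seeds for all honest clients. It programs $\cH'(\term{seed}^{(i)})$ to random short $\bfw^{(i)}$ for $i\neq i^\ast$. For $i^\ast$, it programs $\bfw^{(i^\ast)}$ so that
\[
\sum_{i\in\mathcal{V}}\bfy^{(i)}-\bfA\bfs^\Sigma-\Delta\textstyle\sum_i\bfw^{(i)}
\]
decodes to $\bfX$. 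Finally, for each honest $i$ it uses $\mathbf{T}_\rho$ to sample a short $\bfrho^{(i)}$ with
\[
\bfB_\rho\bfrho^{(i)}=\cC^{(i)}-\bfB_s\bfs^{(i)}-\bfB_w\bfw^{(i)},
\]
and it answers $S_1$'s consistency digests accordingly.

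The hybrids, in order, are:
\begin{itemize}
\item[$\Hybrid_1$] Generate $\bfB_\rho$ with a trapdoor. This is statistically close to the real distribution by GPV.
\item[$\Hybrid_2$] Replace honest proofs by simulated ones, using SHVZK from the security theorem for $\Pi_{\term{Enc}}$ extended to $\Pi^\ast$, together with programmability of $H$.
\item[$\Hybrid_3$] Replace honest openings $\bfrho^{(i)}$ by preimage samples. Trapdoor preimages are distributed like the real Gaussian randomness conditioned on the commitment.
\item[$\Hybrid_4$] Replace honest ciphertexts, with their masks, by uniform values, except for the correction term. We invoke Theorem~\ref{thm:simulatability}, since Hint-LWE covers the leakage of $\bfs^\Sigma$ and of the $\bfz_{j,s}$ responses. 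The commitments $\cC^{(i)}$ become uniform by LWE-hiding of $\bfB_\rho\bfrho$.
\item[$\Hybrid_5$] Program $\cH'$ lazily after $\mathcal{V}$ is fixed. This is undetectable except when $S_1$ queried a seed in advance, which happens with probability at most $Q/2^\lambda$.
\end{itemize}

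The main obstacle is the interplay between $\Hybrid_3$ and $\Hybrid_5$. The programmed $\bfw^{(i^\ast)}$ has to satisfy two requirements at once.
\begin{itemize}
\item It must be short, so that $\|\bfw\|\le\beta_w$ is consistent with the already published simulated proof.
\item It must correct an arbitrary target.
\end{itemize}
We intend to handle this by working modulo $\Delta$-encoding: the correction needs only the plaintext offset in $\bbZ_p^L$, scaled by $\Delta$, and the residual error is absorbed into the noise budget. If this shortness constraint fails, we will fall back on spreading the correction over all honest $\bfw^{(i)}$ and argue that the resulting distribution is statistically close to the real one.

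Finally, a malicious $S_1$ deviating in its own decryption only alters its output, which the ideal adversary may also choose. Its abort decisions are mapped to \textsf{(Drop)} messages to $\cF_{\text{SecAgg}}$.
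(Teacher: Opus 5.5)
Your architecture matches the paper's: a trapdoored $\mathbf{B}_\rho$, simulated proofs, extraction for corrupt clients, querying $\cF_{\text{SecAgg}}$ only after $S_1$ fixes $\mathcal{V}$, programming $\cH'$ on one honest seed, and trapdoor preimages. However, your simulation of the honest ciphertexts fails. You send uniform $\mathbf{y}^{(i)}$ in Round~0 and only afterwards sample the honest $\mathbf{s}^{(i)}$ ``freely''. From its view ($\mathbf{s}^\Sigma$, all seeds in $\mathcal{V}$, the decoded output), malicious $S_1$ can compute
\[
\mathbf{r} := \sum_{i\in\mathcal{V}}\mathbf{y}^{(i)}-\mathbf{A}\mathbf{s}^\Sigma-\Delta\sum_{i\in\mathcal{V}}\mathbf{w}^{(i)}-\Delta\mathbf{X}.
\]
In the real protocol this equals $\sum_{i\in\mathcal{V}}\mathbf{e}^{(i)}$, a short vector. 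In your simulation, $\sum_{i\in\mathcal{V}}\mathbf{y}^{(i)}-\mathbf{A}\mathbf{s}^\Sigma$ is uniform on $\bbZ_q^L$. Your only later knob, $\Delta\mathbf{w}^{(i^\ast)}$, shifts it only by multiples of $\Delta$, so the best you can achieve is $\mathbf{r}$ essentially uniform on $(-\Delta/2,\Delta/2]^L$. Decoding can be forced to return $\mathbf{X}$, but a norm test on $\mathbf{r}$ distinguishes with overwhelming advantage. No choice of the $\mathbf{s}^{(i)}$ rescues this: for $L\gg\lambda$, a counting argument shows a uniform vector is far from every $\mathbf{A}\mathbf{s}+\Delta\mathbf{v}$. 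So ``the residual error is absorbed into the noise budget'' presupposes a small residual that uniform ciphertexts never give you. Nor can the ``correction term'' in your $\Hybrid_4$ be the correlated last ciphertext of the simulator in Theorem~\ref{thm:simulatability}. That ciphertext would have to be sent in Round~0, before $\mathcal{V}$ and $\mathbf{X}$ are known.

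The missing idea is to keep the LWE structure of the honest ciphertexts relative to secrets the simulator knows, and to move only the plaintext offset into $\cH'$; this is what the paper's hybrids $H_3$--$H_4$ do.
\begin{itemize}
\item Send $\tilde{\mathbf{y}}^{(i)}=\mathbf{A}\tilde{\mathbf{s}}^{(i)}+\tilde{\mathbf{e}}^{(i)}+\Delta\tilde{\mathbf{w}}^{(i)}$, an encryption of $\mathbf{0}$ under a uniform dummy mask.
\item Return the true $\mathbf{s}^\Sigma$.
\item Program $\cH'(\term{seed}^{(i^\ast)}):=\tilde{\mathbf{w}}^{(i^\ast)}-\mathbf{X}_H \bmod p$, where $\mathbf{X}_H$ is the honest part of the output.
\end{itemize}
Then $\mathbf{r}=\sum\tilde{\mathbf{e}}^{(i)}$, up to the standard rounding term when $p\nmid q$. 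The simulated world is a real run in which $i^\ast$ holds $\mathbf{X}_H$ and the other honest clients hold $\mathbf{0}$. This is indistinguishable from the true inputs under Hint-LWE with only the aggregate key leaked.

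This also dissolves the obstacle you single out. The programmed value is uniform on $\bbZ_p^L$, exactly like an honest $\cH'$ output, so it meets whatever bound honest masks meet. Moreover, $S_1$ never receives $\bfrho^{(i)}$ or the individual $\mathbf{s}^{(i)}$, so nothing in its view tests $\mathbf{w}^{(i^\ast)}$ against the published proof or commitment; the paper notes that the $\bfrho$-equivocation is not strictly needed. Your fallback of spreading the correction is therefore unnecessary.

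Two minor points. The malicious protocol enforces consistency through the Ajtai-commitment digest, not the folding check of Lemma~\ref{lem:binding-aggregation}. And uniformity of $\mathbf{B}_\rho\bfrho$ is statistical (leftover hash, with $m=\omega(\lambda\log q)$), not LWE-based.
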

\noindent The proof is deferred to \hyperref[proof:mal-lat]{Section~\ref{sec:deferred}}.

\begin{figure*}[!ht]
\centering
\resizebox{0.85\textwidth}{!}{\begin{tcolorbox}[
    colback=white, 
    colframe=black, 
    boxrule=0.8pt, 
    arc=2pt,
    left=4pt, right=4pt, top=4pt, bottom=4pt,
    title={\textbf{Protocol $\sys_{\text{LWE}}^{\text{MAL}}$: LWE Aggregation (Malicious Security)}},
    fonttitle=\bfseries\small,
    colbacktitle=gray!15!white, 
    coltitle=black
]
\footnotesize

\textbf{Parameters:} LWE Matrix $\bfA \in \bbZ_q^{L \times \lambda}$, Commitment Matrices $\bfB_{s}\in \bbZ_q^{\lambda \times \lambda}, \bfB_{w}\in\bbZ_q^{\lambda \times d}, \bfB_{\rho} \in \bbZ_q^{\lambda \times m}$, Hashes $\cH:\{0,1\}^\ast\to \bbZ_q,\cH':\{0,1\}^\ast \to \bbZ_p^L$. 
\hfill
\textbf{Inputs:} Client $i$ has $\mathbf{s}^{(i)}, \mathbf{e}^{(i)}, \mathbf{x}^{(i)}$.

\tcbline

\textbf{Phase 1: Client Computation}
\begin{itemize}[nosep, leftmargin=*]
    \item \textbf{Setup:} Sample $\term{seed}^{(i)}\getsr \{0,1\}^\lambda, \bfrho^{(i)} \getsr D_\sigma^{km}$. Derive vector $\mathbf{w}^{(i)} \leftarrow \cH'(\term{seed}^{(i)})$.
           \item \textbf{LWE Encryption:} Compute $\bfy^{(i)} = \bfA\bfs^{(i)} + \bfe^{(i)} + \Delta \bfx^{(i)} + \Delta \bfw^{(i)} \bmod q$.
        \item \textbf{Commitment:} Compute $\cC_j^{(i)} = \bfB_s \bfs^{(i)} + \bfB_w \bfw_j^{(i)} + \bfB_\rho \bfrho_j^{(i)}$ where $\bfw^{(i)}=[\bfw_1^{(i)}|\ldots|\bfw_k^{(i)}]$ and $\bfrho^{(i)}=[\bfrho_1^{(i)}|\ldots|\bfrho_k^{(i)}]$
        \item \textbf{Proof Generation:} Generate ZK proof (based on Figure~\ref{fig:unified-proto}) $\pi^{(i)} = \left\{ c_{j,u}, \bfz_{j,u,s}^{(i)}, \bfz_{j,u,e}^{(i)}, \bfz_{j,u,x}^{(i)},\bfz_{j,u,w}^{(i)}, \bfz_{j,u, \rho}^{(i)} \right\}_{j,u} $ for the relation between $\bfy^{(i)}$ and $\cC_j^{(i)}$.
        \item \textbf{Submission:}
        \begin{itemize}[leftmargin=*]
            \item \textbf{To $S_1$:} Ciphertext $\bfy^{(i)}$, Commitments $\{\cC_j^{(i)}\}$, Proof $\pi^{(i)}$. 
            \item \textbf{To $S_2$:} Secret $\bfs^{(i)}$, Randomness $\{\bfrho_j^{(i)}\}$, $\term{seed}^{(i)}$.
        \end{itemize}
\end{itemize}

\tcbline

\textbf{Phase 2: Consistency Check ($S_2$-aided)}
\begin{itemize}[nosep, leftmargin=*]
            \item \textbf{$S_2$ Verification:} Reconstruct $\bar{\bfw}_j^{(i)} = \cH'(\term{seed}^{(i)})$. Recompute $\bar{\cC}_j^{(i)} = \bfB_s \bfs^{(i)} + \bfB_w \bar{\bfw}_j^{(i)} + \bfB_\rho \bfrho_j^{(i)}$.
        \item \textbf{Digest Exchange:} $S_2$ computes individual digests $h_{S2}^{(i)} = \cH(\{\bar{\cC}_{j}^{(i)}\}_{j})$ for all $i$ and sends list $\{h_{S2}^{(i)}\}_i$ to $S_1$.
        \item \textbf{$S_1$ Verification:} $S_1$ computes local digests $h_{S1}^{(i)} = \cH(\{\cC_{j}^{(i)}\}_{j})$.
    \item \textbf{Filtering:} Construct valid set $\mathcal{V} = \{ i \mid \text{Digests match} \land \text{Verify}(\pi^{(i)}) = 1 \}$.
\end{itemize}

\tcbline

\textbf{Phase 3: Secure Aggregation}
\begin{itemize}[nosep, leftmargin=*]
    \item \textbf{$S_2 \to S_1$:} Send aggregated secret $\mathbf{s}^\Sigma = \sum_{i \in \mathcal{V}} \mathbf{s}^{(i)}$ and seeds $\{\term{seed}^{(i)}\}_{i \in \mathcal{V}}$.
    \item \textbf{$S_1$ Finalize:}
    \begin{itemize}[nosep, leftmargin=1em]
        \item Reconstruct aggregate mask $\mathbf{w}^\Sigma = \sum \mathcal{H}'(\term{seed}^{(i)})$.
        \item Decrypt: $\mathbf{x}^{\Sigma} = \mathsf{Decode}\left( \sum_{i \in \mathcal{V}} \mathbf{y}^{(i)} - \mathbf{A} \mathbf{s}^\Sigma - \Delta \mathbf{w}^\Sigma \right)$.
    \end{itemize}
\end{itemize}
\end{tcolorbox}}
\caption{Protocol $\sys_{\text{LWE}}^{\text{MAL}}$. Secure against Malicious server. Let $L,k,m,d,t,\lambda$ be integers such that such that $L=k\cdot d,m=\omega(\lambda \log q)$. and $q,L,\lambda$ satisfy the Hint-LWE Assumption. }
\label{fig:mal-protocol}
\end{figure*}

\subsection{Malicious Security with Identifiable Abort}
\label{sec:identifiable-abort}

Existing efficient aggregation protocols, such as Elsa~\cite{SP:RSWP23}, typically achieve only \emph{Security with Abort} (SwA). In these systems, if a server detects an inconsistency (e.g., a checksum failure), it must abort to preserve privacy, but it cannot cryptographically prove \emph{who} caused the failure. This creates a ``framing'' vulnerability: a malicious server can falsely report a failure to halt the protocol, effectively censoring honest clients without being identified.

We demonstrate that $\sys$ achieves the stronger property of \emph{Security with Identifiable Abort} (IA) by incorporating digital signatures. This ensures that any abort is accompanied by cryptographic evidence identifying the cheater (whether a client or a server).

\paragraph{Ideal Functionalities: SwA vs. IA.}
Let $n$ be the number of clients and $\mathcal{A}$ be the adversary controlling a subset of corrupt clients $\mathcal{C} \subset [n]$.

\begin{itemize}
    \item \textbf{Standard Security with Abort ($\mathcal{F}_{\text{SwA}}$):} 
    If $\mathcal{A}$ sends a malformed input, the functionality outputs $\bot$ to all parties.
    \emph{Deficiency:} Honest parties learn that \emph{someone} cheated, but learn nothing about the identity of the malicious party.
    
    \item \textbf{Identifiable Abort ($\mathcal{F}_{\text{IA-Agg}}$):} 
    If the functionality detects an invalid input or protocol violation from client $j \in \mathcal{C}$ (resp. from server $S_i$), it halts and outputs $(\bot, j)$ (resp. $(\bot,S_i)$
    \emph{Advantage:} The server $S_{1-i}$ (and other honest parties) obtains the identity. If it is client $j$, then the servers can evict the malicious client and then restart the protocol.
\end{itemize}

\paragraph{The Mechanism.} 
The core intuition is to enforce a \emph{Binding Agreement Phase} before aggregation. Clients sign their inputs, and servers must commit to the set of valid signatures they have received \emph{before} processing data. This prevents a malicious server from initially accepting a client's data and later claiming it was malformed to force an abort. Let $\Sigma = (\mathsf{KeyGen}, \mathsf{Sign}, \mathsf{Verify})$ be an EU-CMA secure signature scheme. The protocol proceeds as follows:

\begin{enumerate}[leftmargin=*]
    \item \textbf{Authenticated Submission.} 
    Client $C_i$ signs their payload. For a message $m_{i, \ell}$ destined for Server $\ell$, the client sends $\bar{m}_{i, \ell} = (m_{i, \ell}, \sigma_{i, \ell})$, where $\sigma_{i, \ell} \leftarrow \mathsf{Sign}(sk_i, m_{i, \ell} || \ell)$.

    \item \textbf{The Handshake (Binding Agreement).} 
    Before exchanging any user data, the servers exchange their sets of locally valid clients, $\mathcal{V}_1$ and $\mathcal{V}_2$.
    \begin{itemize}
        \item They compute the intersection $\mathcal{V} = \mathcal{V}_1 \cap \mathcal{V}_2$.
        \item \emph{Critical Step:} Each server signs and exchanges the set $\mathcal{V}$.
    \end{itemize}
    By signing $\mathcal{V}$, Server $\ell$ explicitly attests: ``I possess a validly signed message from every client in $\mathcal{V}$.''

    \item \textbf{Blame Attribution.} 
    During aggregation, if a consistency check fails for Client $i \in \mathcal{V}$, the servers enter a \emph{Blame Phase} by exchanging the signed transcripts $(m_{i,1}, \sigma_{i,1})$ and $(m_{i,2}, \sigma_{i,2})$.
    
    \begin{itemize}
        \item \textbf{Case A: Malicious Client (Equivocation).} 
        If both signatures $\sigma_{i,1}, \sigma_{i,2}$ are valid but the messages are inconsistent (e.g., $m_{i,1}$ and $m_{i,2}$ do not sum to a valid vector), Client $i$ is identified as malicious. The signed conflicting messages serve as proof of misbehavior.
        
        \item \textbf{Case B: Malicious Server (Framing/Forgery).} 
        If Server 1 claims an error but cannot produce a valid signature $\sigma_{i,1}$ for Client $i$, Server 1 is identified as malicious (having violated the Handshake agreement). Similarly, if Server 1 provides a message that is actually \emph{consistent} with Server 2's message (contradicting the error claim), Server 1 is identified as malicious for raising a false alarm.
    \end{itemize}
\end{enumerate}

By leveraging the non-repudiation of signatures, $\sys$ ensures that a malicious server cannot simply assert a fault; it must cryptographically prove it.
\begin{table}[!htbp]
\centering
\renewcommand{\arraystretch}{1.2}
\setlength{\tabcolsep}{3pt}
\caption{\textbf{Scalability Analysis ($L=2^{16}$).} 
Direct comparison of computation time (in seconds) between Elsa and our proposed protocol as the number of clients ($n$) increases. 
\textbf{Bold} indicates the best performance in each category.}
\label{tab:scalability_comparison_restructured}
\resizebox{0.5\textwidth}{!}{\begin{tabular}{c|cc|cc|cc}
\toprule
\multirow{2}{*}{\textbf{Clients ($n$)}} & \multicolumn{2}{c|}{\textbf{Client (s)}} & \multicolumn{2}{c|}{\textbf{Server 1 (s)}} & \multicolumn{2}{c}{\textbf{Server 2 (s)}} \\ 
 & Elsa & \textbf{Ours} & Elsa & \textbf{Ours} & Elsa & \textbf{Ours} \\ \midrule
10  & 0.550 & \textbf{0.197} & 0.908 & \textbf{0.563} & 0.912 & \textbf{0.015} \\ 
20  & 0.805 & \textbf{0.200} & 1.178 & \textbf{0.762} & 1.180 & \textbf{0.031} \\ 
30  & 1.003 & \textbf{0.193} & 1.384 & \textbf{0.778} & 1.380 & \textbf{0.048} \\ 
40  & 1.650 & \textbf{0.198} & 1.752 & \textbf{1.388} & 1.751 & \textbf{0.063} \\ 
50  & 1.707 & \textbf{0.193} & 2.098 & \textbf{1.433} & 2.098 & \textbf{0.077} \\ 
100 & 3.139 & \textbf{0.202} & 3.743 & \textbf{2.270} & 3.734 & \textbf{0.162} \\ \bottomrule
\end{tabular}}
\end{table}
\section{Experiments}
\label{sec:exp}
In this section, we microbenchmark the performance of $\sys_{\text{MAL}}^{\text{LWE}}$. We implemented our protocol in Rust (v1.75) to leverage Elsa's codebase~\cite {SP:RSWP23}. All experiments were conducted on a single \textbf{AWS c5.9xlarge} instance equipped with 36 vCPUs (Intel Xeon Platinum 8000-series @ 3.0 GHz) and 72 GiB of RAM. The statistical security parameter was fixed at $\kappa = 128$ bits. The challenges were drawn from $\{-1,0,1\}$. The computational security parameter was also set at 128. 

We choose Elsa~\cite{SP:RSWP23}
as our baseline. This protocol is the current ``best'' protocol offering
malicious privacy as a security guarantee while also being efficient for large vector lengths $L$. Additionally, we use the current state-of-the-art Rust implementation Prio~\cite{Prio} dubbed ``prio3''~\cite{libprio_rs} to create an implementation to measure the running time. We observe that $\sys_{\text{MAL}}^{\text{LWE}}$ naively offers $L_2$ norm bound check, albeit with soundness slack. We compare only with the Elsa version that does not provide this additional $ L_2$-norm bound. As our experiments indicate, even without the additional processing, Elsa performs significantly worse than our construction. 

\paragraph{Performance with respect to vector length $L$.} 
Table~\ref{tab:performance_comparison} presents a comprehensive comparison of computation time between Elsa, Prio, and our proposed protocol for a cohort of $n=100$ clients across varying input vector sizes $L \in \{2^{16}, 2^{17}, 2^{18}\}$. 

\begin{itemize}
    \item \textbf{Client Efficiency.} Our client-side proving time is significantly reduced, requiring only $0.789$\,s for dimension $2^{18}$, compared to Elsa's $13.775$\,s and Prio's $73.377$\,s. This represents a $\approx 17.5\times$ speedup over Elsa and a staggering $\approx 93\times$ speedup over Prio. The poor performance of Prio is largely due to the $O(L \log L)$ complexity of polynomial interpolation and evaluation over large fields, whereas our \textit{independent restart strategy} maintains a much lower constant overhead.
    
    \item \textbf{Server 1 Scalability via Batching.} Server 1 achieves consistent performance gains through randomized batch verification. For $L=2^{18}$, our system processes the cohort in $10.231$\,s, outperforming Elsa ($15.354$\,s) and Prio ($15.036$\,s). While Prio and Elsa scale linearly with vector size, our system optimizes the linear lattice verification by compressing $n$ proofs into vectorized matrix operations, yielding superior throughput.
    
    \item \textbf{Asymmetric Server Load.} The most critical architectural advantage is observed in Server 2. While Elsa and Prio impose symmetric or near-symmetric computational burdens ($\approx 15.3$\,s and $\approx 16.7$\,s respectively at $L=2^{18}$), our asymmetric model reduces Server 2's role to lightweight consistency checks, requiring only $0.521$\,s. This represents a $30\times$ reduction compared to Prio, effectively rendering the second server's computational cost negligible.
\end{itemize}


    
    

\begin{table}[!tb]
\centering
\renewcommand{\arraystretch}{1.2}
\setlength{\tabcolsep}{10pt}
\caption{\textbf{Computation Time Comparison ($n=100$ clients).} 
Running times (in seconds) for Elsa and Pri
vs. Ours across varying vector dimensions ($L$). 
\textit{Ours} uses batched verification on Server 1 and lightweight auditing on Server 2. 
\textbf{Bold} indicates the best performance.}
\label{tab:performance_comparison}
\label{tab:scalability_comparison}
\ifdefined\IsTPMPC\else\resizebox{0.75\textwidth}{!}{\fi
\begin{tabular}{llccc}
\toprule
\textbf{Vector Size} & \textbf{System} & \textbf{Client (s)} & \textbf{Server 1 (s)} & \textbf{Server 2 (s)} \\ 
\midrule

\multirow{3}{*}{\textbf{$2^{16}$}} 
 & Elsa & 3.139 & 3.743 & 3.734  \\ 
 & Prio & 13.282 & 2.62 & 3.070\\
 & \textbf{Ours} & \textbf{0.202} & \textbf{2.572} & \textbf{0.162}\\ 
 \addlinespace[0.5em]

\multirow{3}{*}{\textbf{$2^{17}$}} 
 & Elsa & 6.920 & 7.914 & 7.913  \\ 
 & Prio & 35.345 & 7.651& 8.427\\
 & \textbf{Ours} & \textbf{0.395} & \textbf{5.150} & \textbf{0.282} \\ 
 \addlinespace[0.5em]

\multirow{3}{*}{\textbf{$2^{18}$}} 
 & Elsa & 13.775 & 15.354 & 15.350 \\ 
 & Prio & 73.377 & 15.036 & 16.721\\
 & \textbf{Ours} & \textbf{0.789} & \textbf{10.231} & \textbf{0.521}\\ 

\bottomrule
\end{tabular}
\ifdefined\IsTPMPC\else}\fi
\end{table}
\paragraph{Performance with Increasing Client Cohort ($n$).} 
To evaluate the system's scalability, we measured the total running time as we varied the number of participating clients, $n \in \{10, \dots, 100\}$, with a fixed vector dimension $L=2^{16}$.

\begin{itemize}
    \item \textbf{Client-Side Scalability.} A defining characteristic of our architecture is the independence of client proof generation. As shown in Table~\ref{tab:scalability_comparison}, our client computational cost remains effectively constant at $\approx 0.20$\,s regardless of $n$. In contrast, Elsa exhibits super-linear growth, rising from $0.550$\,s at $n=10$ to $3.139$\,s at $n=100$. This confirms that our \textit{independent restart} strategy allows for fully parallelized, $O(1)$ client execution.

    \item \textbf{Server 1: Efficiency via Batching.} Server 1 benefits from randomized batch verification, which amortizes the cost of lattice verification across the cohort. We observe a non-linear relationship between $n$ and verification time; for instance, increasing the cohort from $20$ to $30$ clients only increases latency by $0.016$\,s. At $n=100$, our Server 1 requires only $2.270$\,s, representing a $\approx 1.6\times$ improvement over Elsa's $3.743$\,s.

    \item \textbf{Server 2: The Asymmetric Advantage.} The most significant gain is observed in the secondary server. While Elsa's Server 2 mirrors the heavy workload of Server 1 ($3.734$\,s at $n=100$), our Server 2 is restricted to lightweight auditing. Consequently, Server 2 scales with minimal overhead, requiring only $0.162$\,s for 100 clients—a $23\times$ reduction that effectively eliminates the dual-server computational bottleneck.
\end{itemize}

\ifdefined\IsTPMPC 
\else
\section{Conclusion}

In this work, we presented \textbf{TAPAS}, a novel two-server secure aggregation protocol that fundamentally redefines the resource allocation in privacy-preserving federated learning. By enforcing an \textit{asymmetric} design, TAPAS decouples the communication and computation costs of the secondary server from the vector dimension $L$ and scales instead with the number of clients $n$. This addresses a critical scalability bottleneck in prior symmetric protocols such as Prio and Elsa, making TAPAS uniquely suitable for aggregating high-dimensional models (e.g., deep learning parameters) where $L \gg n$.

Our construction leverages standard lattice-based assumptions (LWE and SIS), ensuring post-quantum security without relying on trusted setups or pre-processing. We introduced a tailored Zero-Knowledge proof system that uses block-wise rejection sampling, which mitigates the soundness slack typically associated with lattice proofs while maintaining efficient runtime. Furthermore, TAPAS achieves \textit{Identifiable Abort} and full malicious security, ensuring that protocol deviations are cryptographically attributable to the malicious party, whether client or server.

Experimental microbenchmarks suggest that TAPAS offers an efficient path for deploying secure aggregation at scale, combining the rigorous security guarantees of lattice cryptography with the efficiency required for modern machine-learning workloads. Future work may focus on further optimizing the constants in the lattice-based proof system and exploring client-side compression techniques compatible with LWE encapsulation.
\fi
\ifdefined\IsSub
\else
\ifdefined\IsTPMPC
\else
{
\subsubsection*{Disclaimer.} This paper was prepared for informational purposes by the Artificial Intelligence Research group of JPMorgan Chase \& Co and its affiliates (“J.P. Morgan”) and is not a product of the Research Department of J.P. Morgan.  J.P. Morgan makes no representation and warranty whatsoever and disclaims all liability, for the completeness, accuracy or reliability of the information contained herein.  This document is not intended as investment research or investment advice, or a recommendation, offer or solicitation for the purchase or sale of any security, financial instrument, financial product or service, or to be used in any way for evaluating the merits of participating in any transaction, and shall not constitute a solicitation under any jurisdiction or to any person, if such solicitation under such jurisdiction or to such person would be unlawful.

\noindent \textcopyright 2026 JPMorgan Chase \& Co. All rights reserved.
}
\fi
\fi
\clearpage
%
%
%
\ifdefined\IsTPMPC
\else

\newcommand{\etalchar}[1]{$^{#1}$}

\appendix
\fi
\section{Cryptographic Preliminaries}
\label{sec:crypto-app}

\subsection{Learning with Errors Assumption}
\label{sub:cons-prflwe}
\begin{definition}[Learning with Errors Assumption (LWE)] 
\label{def:lwe} 
Consider integers $\kappa,L,~q\in\bbN$ that are functions of the security parameter $\rho$,
and a probability distribution $\chi$ on $\bbZ_q$, typically taken to be a normal
distribution that has been discretized. Then, the $\LWE_{\kappa,L,q,\chi}$ assumption states
that for all PPT adversaries $\cA$, there exists a negligible function $\negl$ such that: 
\[
\Pr\left[b=b'~~
\begin{array}{|c}
      \bfA\getsr \bbZ_q^{L \times \kappa},\bfx\getsr \bbZ_q^\kappa,\bfe\getsr \chi^L\\
      \bfy_0:=\bfA\bfx+\bfe\\
      \bfy_1\getsr\bbZ_q^L,
      b\getsr\{0,1\},b'\getsr\cA(\bfA,\bfy_b)
\end{array}
\right]=\frac{1}{2} + \negl(\lambda)
\]
\end{definition}

\begin{definition}[Hint-LWE~\cite{ACISP:CKKLSS21,EC:DKLLMR23}]
\label{def:hint-lwe}
Consider integers $\kappa, L, q$
and a probability distribution $\chi$ on $\bbZ_q$, typically taken to be a normal distribution that has been discretized. Then, the Hint-LWE assumption\footnote{Kim~\etal~\cite{C:KLSS23b} demonstrates that the Hint-LWE assumption is computationally equivalent to the standard LWE assumption. This assumption posits that $\mathbf{y}_0$ retains its pseudorandomness from an adversary's perspective, even when given certain randomized information about the secret and error vectors. Consider a secure LWE instance defined by parameters $(\kappa, L, q, \chi)$, where $\chi$ represents a discrete Gaussian distribution with standard deviation $\kappa$. The corresponding Hint-LWE instance, characterized by $(\kappa, L, q, \chi')$, where $\chi'$ denotes a discrete Gaussian distribution with standard deviation $\kappa'$, remains secure under the condition $\kappa'=\kappa/\sqrt{2}$. As a result, we can decompose any $\mathbf{e}\in\chi$ into the sum $\mathbf{e}_1+\mathbf{e}_2$, where both $\mathbf{e}_1$ and $\mathbf{e}_2$ are drawn from $\chi'$.} states
that for all PPT adversaries $\mathcal{A}$, there exists a negligible function $\mathsf{negl}$ such that: 
\[
\resizebox{!}{!}{$\Pr\left[b=b'~~
\begin{array}{|c}
      \bfA\getsr \bbZ_q^{L\times \kappa},\bfk\getsr \bbZ_q^{\kappa},\bfe\getsr \chi'^L\\
      \bfr\getsr\bbZ_q^{\kappa},\bff\getsr\chi'^L\\
      \bfy_0:=\bfA \bfk+\bfe,
      \bfy_1\getsr\bbZ_q^L,
      b\getsr\{0,1\}\\
      b'\getsr\cA(\bfA,(\bfy_b,\bfk+\bfr,\bfe+\bff))
\end{array}
\right]=\frac{1}{2} + \mathsf{negl}(\lambda)$}
\]
Where $\rho$ is the security parameter. 
\end{definition}

\begin{remark}[Short-Secret LWE]
    As shown by prior work, LWE Encryption is still secure if $\bfs\getsr\chi^\lambda$. Indeed, it has been shown that $\bfs\getsr\set{-1,0,1}^\lambda$ is sufficient for security to hold. We will take advantage of this in our actual construction, reducing further communication. Further, in practice, one requires that the discrete Gaussian distribution has width $\gg\sqrt{\lambda}$. This helps us set much smaller
    $\chi$ distribution. 
\end{remark}
\begin{definition}[Short Integer Solution (SIS) Problem]
Let $q$ be a positive integer, $n$ and $m$ be positive integers, and $\beta > 0$ be a bound. Given a uniformly random matrix $A \in \mathbb{Z}_q^{n \times m}$, the SIS problem asks to find a non-zero integer vector $\mathbf{x} \in \mathbb{Z}^m$ such that
$$
A \mathbf{x} = \mathbf{0} \pmod{q},\quad
\|\mathbf{x}\| \leq \beta,
$$
where $\|\cdot\|$ denotes the Euclidean norm.
\end{definition}
\subsection{Cryptographic Primitives}
\label{sub:primitives}
\begin{definition}[Security of Commitment Scheme]
A non-interactive commitment scheme $\com$ is secure if:
\begin{itemize}
    \item Hiding Property: for all PPT adversaries $\cA$, 
        \begin{equation*}
	\Pr\left[ \;\;b=b'
	\begin{array}{c|c}
	& (\pp)\getsr\coms(1^\lambda)\\
	& (x_0,x_1,st)\getsr\cA(\pp)\\
	& b\getsr\{0,1\},\alpha\getsr\cR\\
	& com=\comc_\pp(x_b;\alpha)\\
	& b'\getsr \cA(com,st)
	\end{array}
	\right] \leq \frac{1}{2} + \negl(\lambda)
    \end{equation*}
    where the probability is over $b,\alpha,\coms,\cA$.
        \item Binding Property: for all PPT adversaries $\cA$, 
        \begin{gather*}
    \Pr\left[\begin{array}{c|c}
	com_0:=\comc_\pp(x_0;\alpha_0) & (\pp)\getsr\coms(1^\lambda)\\
    com_1:=\comc_\pp(x_1;\alpha_1) & (\set{(x^{(i)},\alpha^{(i)})}_{i=0,1},st)\getsr\cA(\pp)\\
    com_0=com_1\\
	x_0\neq x_1
	\end{array}
	\right] \leq \negl(\lambda)
    \end{gather*}
     where the probability is over $\coms,\cA$. If the
     probability is 0, then we say that the scheme is perfectly binding. 
\end{itemize}
\end{definition}
\begin{definition}[Simulation-Extractable NIZK]
Let $R$ be an NP relation with language $L_R=\{x \mid \exists w:(x,w)\in R\}$.\allowdisplaybreaks
A tuple of PPT algorithms $(\Setup,\Prove,\allowbreak\Verif,\allowbreak\Sim,\allowbreak\Ext)$ is called a
\emph{simulation-extractable non-interactive zero-knowledge proof system} (SE-NIZK) 
for $R$ if the following hold.

\paragraph{Algorithms.}
\begin{itemize}
  \item $(\mathsf{crs},\mathsf{td}) \leftarrow \Setup(1^\lambda)$ outputs a common reference string (CRS) and a trapdoor.
  \item $\pi \leftarrow \Prove(\mathsf{crs},x,w)$ outputs a proof for statement $x$ and witness $w$.
  \item $b \leftarrow \Verif(\mathsf{crs},x,\pi)$ outputs $1$ (accept) or $0$ (reject).
  \item $\pi^\star \leftarrow \Sim(\mathsf{crs},\mathsf{td},x)$ outputs a simulated proof using $\mathsf{td}$.
  \item $w^\star \leftarrow \Ext(\mathsf{crs},\mathsf{td},\mathcal{T},\mathsf{st}_{\mathcal{A}})$ is the extractor.
\end{itemize}

\paragraph{Completeness.}
For all $(x,w)\in R$ and for all $(\mathsf{crs},\mathsf{td})\leftarrow\Setup(1^\lambda)$,
\[
\Pr\big[\Verif(\mathsf{crs},x,\Prove(\mathsf{crs},x,w))=1\big] = 1.
\]

\paragraph{Soundness.}
For every PPT adversary $\mathcal{A}$,
\[
\Pr\left[
\begin{array}{l}
(\mathsf{crs},\mathsf{td})\leftarrow\Setup(1^\lambda);\\
(x,\pi)\leftarrow\mathcal{A}(\mathsf{crs});\\
\Verif(\mathsf{crs},x,\pi)=1 \wedge x\notin L_R
\end{array}
\right] \le \mathsf{negl}(\lambda).
\]
That is, no efficient adversary can convince the verifier of a false statement.

\paragraph{Special Honest-Verifier Zero-Knowledge (SHVZK).}
There exists a simulator $\Sim$ that can generate a valid proof 
(i.e., an accepting transcript) without knowing a witness, using the trapdoor 
information from the setup. More precisely, there exists a PPT simulator 
$\Sim$ such that for all non-uniform PPT adversaries 
$\cA$, there exists a negligible function $\negl$ satisfying:
\begin{gather*}
\Pr\left[b = b'~~
\begin{array}{|l}
 (\mathsf{crs}, \mathsf{td}) \leftarrow \Setup(1^\lambda);~~ b \getsr \{0,1\} \\
 (\stmt,\wit) \getsr \cA(\mathsf{crs}) \\
 \pi_0 = \Prove(\mathsf{crs}, \stmt, \wit) \\
 \pi_1 = \Sim(\mathsf{crs}, \mathsf{td}, \stmt) \\
 b' \getsr \cA(\pi_b)
\end{array}
\right] \le \frac{1}{2} + \negl(\lambda)
\end{gather*}
That is, the adversary cannot distinguish a real proof $\pi_0$ produced using a 
witness from a simulated proof $\pi_1$ generated without it, even when adaptively 
choosing the statement $\stmt$ after seeing the common reference string 
$\mathsf{crs}$.

When $\negl(\lambda) = 0$, the NIZK proof system satisfies 
\textit{perfect zero-knowledge}.
\end{definition}
\paragraph{Simulation Soundness.} Informally, this requires soundness even if a cheating prover is allowed to see simulated proofs for potentially false statements chosen by the cheating prover. Recall that the simulated proofs are generated by the NIZK simulator. We use the definition from the work of Faust~\etal~\cite{INDOCRYPT:FKMV12} for Simulation Soundness and the eventual theorem on which we will rely for the security of our construction. 
\begin{definition}
\label{def:simsound}
Let $\cR$ be a decidable polynomial-time relation. Consider a NIZK proof system denoted by $(\cP^\cH,\cV^\cH)$ for $\cR$. Furthermore, let $S$ denote the NIZK simulator and let $\cO_{S_1},\cO_{S_2}$ be the oracle queries such that: $\cO_{S_1}(q^{(i)})$ returns $h^{(i)}$ where
$(h^{(i)},st)\gets S(1,st,q^{(i)})$ and $\cO_{S_2}(x)$ returns $\pi$ where $(\pi,st)\gets S(2,st,x)$ (even if $x\not\in\cR$). Then, we say that the NIZK proof system is (unbounded) simulation-sound with respect to $S$ in the random oracle model if for any PPT adversary $\cA$, there exists a negligible function $\negl(\kappa)$ such that: 
\begin{gather*}
	\Pr\left[\begin{array}{l}
	     (x^\ast,\pi^\ast)\not\in \cT\wedge
	     x^\ast\not\in\cR\\
	     \cV^{\cO_{S_1}}(x^\ast,\pi^\ast)=1
	\end{array}
	\begin{array}{|c}
	(\aokcrs)\getsr\aoks(1^\kappa)\\
 (x^\ast,\pi^\ast)\getsr\cA^{\cO_{S_1}(\cdot),\cO_{S_2}(\cdot)}(\aokcrs)\\
	\end{array}
	\right]  \leq \negl(\kappa)
\end{gather*}
where $\cT$ is the list of pairs $(x^{(i)},\pi^{(i)})$ such that $\pi^{(i)}$ was the response to
$\cO_{S_2}(x^{(i)})$.
\end{definition}

\subsection{Group-Based LWE Encryption Proof}
\label{sub:lwe-app}
Recall that a client $i$ must prove knowledge of secrets $\mathbf{s}^{(i)}, \bfx^{(i)}, \mathbf{e}^{(i)}$ satisfying:
\[
\bfc^{(i)} = \mathbf{A} \mathbf{s}^{(i)} + \mathbf{e}^{(i)} + \Delta \cdot \bfx^{(i)} \pmod q \in \bbZ_q^m
\]
Naively, proving this for each of the $m$ elements in $m$ elements in $\bfc^{(i)}$ would require $m$ independent proofs. We can efficiently collapse these into a single statement using the Schwartz-Zippel Lemma~\cite{Schwartz,Zippel}. To verify the equality of two vectors of length $m$ over $\mathbb{Z}_q$, we treat them as coefficient vectors of degree-$(m-1)$ polynomials evaluated at a random point. If the vectors differ, their evaluations will differ with probability at least $1 - m/q$.

Let the verifier (server) sample a random challenge $r \in \mathbb{Z}_q$ and define the vector $\mathbf{r} = (r^0, r^1, \ldots, r^{m-1})$. The client proves the following identity:
\[
c = \langle \mathbf{A}^\top \mathbf{r} \ \|\ \mathbf{r} \ \|\ \Delta \mathbf{r},\ \mathbf{s}^{(i)} \ \|\ \mathbf{e}^{(i)} \ \|\ \mathbf{x}^{(i)} \rangle \pmod q,
\]
where $c = \langle \bfc^{(i)}, \mathbf{r} \rangle$ is computable by both parties. The identity is derived as follows:
\[
\begin{aligned}
\langle \mathbf{c}^{(i)}, \mathbf{r} \rangle 
&= \langle \mathbf{A} \mathbf{s}^{(i)}, \mathbf{r} \rangle + \langle \mathbf{e}^{(i)}, \mathbf{r} \rangle + \Delta \langle \mathbf{x}^{(i)}, \mathbf{r} \rangle \\
&= \langle \mathbf{A}^\top \mathbf{r}, \mathbf{s}^{(i)} \rangle + \langle \mathbf{r}, \mathbf{e}^{(i)} \rangle + \langle \Delta \mathbf{r}, \mathbf{x}^{(i)} \rangle \\
&= \langle \mathbf{A}^\top \mathbf{r} \ \|\ \mathbf{r} \ \|\ \Delta \mathbf{r},\ \mathbf{s}^{(i)} \ \|\ \mathbf{e}^{(i)} \ \|\ \mathbf{x}^{(i)} \rangle.
\end{aligned}
\]
Since the vector $(\mathbf{A}^\top \mathbf{r} \ \|\ \mathbf{r} \ \|\ \Delta \mathbf{r})$ is public, this reduces to a linear proof $\Pi_{\text{linear}}$ over the witness $(\mathbf{s}^{(i)} \ \|\ \mathbf{e}^{(i)} \ \|\ \mathbf{x}^{(i)})$.

\paragraph{Approximate Proof of Smallness.} 
Following Gentry et al.~\cite{EC:GenHalLyu22}, we employ an efficient proof system to show that a committed vector $\mathbf{w}$ satisfies a relaxed norm bound. Specifically, given $\|\mathbf{w}\|_\infty < B$, one can efficiently prove $\|\mathbf{w}\|_\infty < B'$ for some $B' \gg B$. Let $\kappa$ be the statistical security parameter. The soundness gap $\gamma := B'/B$ must satisfy $\gamma > 19.5\kappa\sqrt{m}$.

\begin{enumerate}
    \item The prover sends $\Commit(\mathbf{w})$ to the verifier.
    \item The prover samples $\mathbf{y} \xleftarrow{\$} [ \pm \lceil B'/(2(1+1/\kappa)) \rceil ]^\kappa$ and sends $\Commit(\mathbf{y})$.
    \item The verifier samples $\mathbf{R} \leftarrow \mathcal{D}^{\kappa \times m}$ (where $\mathcal{D}(0) = 1/2, \mathcal{D}(\pm 1) = 1/4$) and sends it to the prover.
    \item The prover computes $\mathbf{u} := \mathbf{R} \mathbf{w}$ and $\mathbf{z} := \mathbf{u} + \mathbf{y}$. If $\|\mathbf{u}\|_\infty > B'/(2\kappa)$ or $\|\mathbf{z}\|_\infty > B'/2$, the prover restarts from Step 2 (Rejection Sampling).
    \item The prover sends $\mathbf{z}$ to the verifier.
    \item The verifier samples a random $r$ and sends it to the prover.
    \item They execute an inner-product proof for:
    \[
    \langle \mathbf{R}^\top \mathbf{r}, \mathbf{w} \rangle + \langle \mathbf{r}, \mathbf{y} \rangle = \langle \mathbf{z}, \mathbf{r} \rangle,
    \]
    where $\mathbf{r} = (r^0, \ldots, r^{\kappa-1})$.
\end{enumerate}
Note that $\langle \mathbf{z}, \mathbf{r} \rangle$ is public. This protocol ensures that if $\|\mathbf{w}\|_\infty > B'$, the prover fails with overwhelming probability.

\paragraph{$L_2$ Proofs of Smallness.} 
To prove $\|\mathbf{w}\|_2 \leq B$, it suffices to show that $B^2 - \|\mathbf{w}\|^2 \ge 0$. By Lagrange's four-square theorem, there exist non-negative integers $\alpha, \beta, \gamma, \delta$ such that $B^2 - \|\mathbf{w}\|^2 = \alpha^2 + \beta^2 + \gamma^2 + \delta^2$. The prover commits to these auxiliary integers and proves (a) the algebraic relation modulo $q$, and (b) that all components are sufficiently small to avoid modular wrap-around using the approximate proof described above~\cite{EC:GenHalLyu22}.

\begin{enumerate}
    \item Let $B$ be the public bound. Assume $B < \sqrt{q}/(3536(m + 4))$.
    \item The prover computes $\alpha, \beta, \gamma, \delta$ such that $\alpha^2 + \beta^2 + \gamma^2 + \delta^2 = B^2 - \|\mathbf{w}\|^2$.
    \item Define $\mathbf{u} := (\alpha, \beta, \gamma, \delta)$ and the concatenated witness $\mathbf{v} := (\mathbf{w} \| \mathbf{u}) \in \mathbb{Z}^{m+4}$. The prover commits to $\mathbf{u}$.
    \item The prover provides a ZK proof that $\|\mathbf{v}\|^2 = B^2 \pmod{q}$.
    \item The prover provides an approximate $L_\infty$ proof showing $\|\mathbf{v}\|_\infty < \sqrt{q/(2(m + 4))}$.
\end{enumerate}
\begin{remark}
\label{rem:2}
Note that $\|\mathbf{w}\|_2 \leq B$ implies $B/\sqrt{m} \leq \|\mathbf{w}\|_\infty \leq B$.
\end{remark}

\paragraph{$L_\infty$ Proofs of Smallness.} 
To strictly prove $\|\mathbf{x}\|_\infty \leq B$, we equivalently require that every entry of $\mathbf{b} - \mathbf{x}$ is non-negative, where $\mathbf{b} := B \cdot \mathbf{1}$. Utilizing the four-square theorem component-wise, this is encoded by the existence of vectors $\mathbf{a}_1, \ldots, \mathbf{a}_4$ such that:
\[
  \mathbf{x} + \mathbf{a}_1 \circ \mathbf{a}_1 + \cdots + \mathbf{a}_4 \circ \mathbf{a}_4 = \mathbf{b}.
\]

\begin{description}
  \item[Step 1. Random linearization.]  
  Let $\mathbf{r} := (1, r, r^2, \ldots, r^{m-1}) \in \mathbb{Z}_q^L$ for a random challenge $r$. Taking the inner product with $\mathbf{r}$ yields:
  \[
    \langle \mathbf{r}, \mathbf{x} + \mathbf{a}_1 \circ \mathbf{a}_1 + \cdots + \mathbf{a}_4 \circ \mathbf{a}_4 \rangle = \langle \mathbf{r}, \mathbf{b}\rangle.
  \]

  \item[Step 2. Inner-product form.]  
  Using the identity $\langle \mathbf{r}, \mathbf{u}\circ \mathbf{u}\rangle = \langle \mathbf{r}\circ \mathbf{u}, \mathbf{u}\rangle$, we rewrite the relation as:
  \[
    \langle \mathbf{r}\circ \mathbf{x}, \mathbf{x}\rangle + \sum_{i=1}^4 \langle \mathbf{r}\circ \mathbf{a}^{(i)}, \mathbf{a}^{(i)}\rangle = \langle \mathbf{r}, \mathbf{b}\rangle.
  \]
  Defining $\mathbf{z} := \mathbf{x} \mid \mathbf{a}_1 \mid \mathbf{a}_2 \mid \mathbf{a}_3 \mid \mathbf{a}_4$ and $\mathbf{r}' := \mathbf{r} \mid \mathbf{r} \mid \mathbf{r} \mid \mathbf{r} \mid \mathbf{r}$, the relation becomes $\langle \mathbf{r}' \circ \mathbf{z}, \mathbf{z}\rangle = \langle \mathbf{r}, \mathbf{b}\rangle$.

  \item[Step 3. Commitments \& Consistency.]  
  The prover commits to $\mathbf{z}$ and an auxiliary vector $\mathbf{y}$ claimed to be $\mathbf{r}' \circ \mathbf{z}$. The verifier checks:
  \begin{enumerate}
    \item $\langle \mathbf{y}, \mathbf{z}\rangle = \langle \mathbf{r}, \mathbf{b}\rangle$ (via Inner Product Proof).
    \item $\mathbf{y} = \mathbf{r}' \circ \mathbf{z}$. This is verified by sampling a fresh challenge $\alpha$ and checking $\langle \boldsymbol{\alpha}, \mathbf{y} - \mathbf{r}' \circ \mathbf{z} \rangle = 0$, which reduces to a linear proof on the commitment of $\mathbf{y}-\mathbf{z}$.
  \end{enumerate}
\end{description}


\section{Comparison with Prior Work~\cite{CCS:LuLu25}}
\label{sub:lzksa}
Recent work by \cite{CCS:LuLu25} proposed a secure aggregation scheme for federated learning which uses lattice-based proof systems. Unfortunately, their constructions have security gaps.

To begin with, the defined commitment scheme cannot be binding, because the protocol validates the commitment relation without enforcing constraints on randomness, allowing a malicious prover to algebraically compute a noise term that effectively opens the commitment to any arbitrary message. Although the authors specify distributions ostensibly to address these concerns, they are never formally defined in the remainder of the paper. As shown in Lemma~\ref{lem:binding}, we eliminate this vulnerability by grounding binding strictly in the hardness of the Short Integer Solution (SIS) problem. Unlike LZKSA, our protocol enforces computational binding by requiring a Zero-Knowledge Proof of Knowledge (ZKPoK) for the randomness vectors $\bfs,\bfe$ with strict norm bounds when the commitments are used later to build zero-knowledge proofs, a feature missing from LZKSA.

\paragraph{Background: $\Sigma$-Protocols in DL vs. Lattices.}
In classical discrete-logarithm protocols (e.g., Schnorr), the response $z = r + c x \pmod q$ is reduced modulo $q$. This modular reduction perfectly hides the witness $x$ (ensuring Zero-Knowledge) and allows the challenge $c$ to be drawn from the full field $\mathbb{Z}_q$ (ensuring negligible soundness error).

In contrast, lattice-based arguments rely on the witness having a small norm. Consequently, the response $\mathbf{z} = \mathbf{r} + c \mathbf{s}$ must be computed \emph{over the integers} ($\mathbb{Z}$), not modulo $q$. This introduces two fatal constraints that LZKSA ignores:
\begin{enumerate}
    \item \textbf{Small Challenge Space:} To prevent norm explosion, $c$ must be small (e.g., binary), which yields a high soundness error (1/2) in a single execution.
    \item \textbf{Leakage via Distribution:} Without modular reduction, the distribution of $\mathbf{z}$ depends on $\mathbf{s}$. Standard constructions require Rejection Sampling (\cite{C:LyuNguPla22,C:DDLL13,EC:Lyubashevsky12,PKC:Lyubashevsky08}) to decouple $\mathbf{z}$ from $\mathbf{s}$. This is done by sampling $\bfr$ from a much larger space so that it ``floods'' $c\bfs$. 
\end{enumerate}
A rigorous inspection of the LZKSA protocol \cite{CCS:LuLu25} reveals that it fails to satisfy the fundamental requirements of lattice-based Zero-Knowledge arguments. We identify two fatal flaws in their adaptation of discrete-logarithm $\Sigma$-protocols to the lattice setting:

\begin{itemize}
    \item \textbf{Soundness:} LZKSA leaves the challenge space $\mathcal{C}$ undefined. If $\mathcal{C}$ is small (to allow extraction), a single execution yields a non-negligible cheating probability of $1/2$. If $\mathcal{C}$ is large (to ensure soundness), the extracted witness norm explodes, preventing a valid reduction to SIS. LZKSA performs only a single execution, rendering the proofs unsound.
    \item \textbf{Zero-Knowledge:} LZKSA omits Rejection Sampling entirely, outputting the raw vector $\mathbf{z} = \mathbf{r} + c\mathbf{s}$. Consequently, the conditional distribution of $\mathbf{z}$ shifts by $\mathbf{s}$. A verifier can recover the secret witness $\mathbf{s}$ by simply averaging the difference between the response and the expected noise distribution, completely breaking the privacy of the aggregation scheme.
\end{itemize}
$\sys$ resolves these issues by implementing a fully rigorous lattice ZK argument. We fix the challenge space to $\mathcal{C} = \{-1, 0, 1\}$ and perform parallel repetitions to drive the soundness error below $2^{-\lambda}$. Furthermore, we employ Bimodal Rejection Sampling~\cite{C:DDLL13} to ensure the response distribution is statistically independent of the witness, guaranteeing Zero-Knowledge while maintaining compact parameters.

\paragraph{LZKSA and Norm Bounds.} Proposition 1 of LZKSA asserts that for positive integers, bounded output implies bounded input. Specifically, checking $\|\boldsymbol{\mu} + c\bfx\|_\infty \leq (c+1)B_\infty$ supposedly guarantees $\|\bfx\|_\infty \leq B_\infty$ if $c>B_\infty$. However, the authors miss the key detail of the proof that it relies on $\|\bfmu\|_\infty\leq B_\infty$ and $\bfmu$ is actually sampled by the prover. It is clear that this breaks even for a vector of length 1. Let $B_\infty = 10$. Let $x = 11$ and clearly $x>B_\infty$. Let $c=11$. Then, we can set $\mu$ such that $\mu + 11\cdot 11 \leq (12)\cdot 10 $ which is satisfies by $\mu=-1$. 

Let the challenge $c>B_\infty$ and the bound $B_\infty$ be public parameters. The verification checks the norm of the response vector $\mathbf{z} = \boldsymbol{\mu} + c\mathbf{x}$ over the integers.

\begin{enumerate}
    \item \textbf{Select Malicious Input:} The prover chooses an input vector $\mathbf{x}^*$ with arbitrarily large norm $\|\mathbf{x}^*\|_\infty \gg B_\infty$.
    
    \item \textbf{Compute Compensatory Mask:} Instead of sampling $\boldsymbol{\mu}$ from the small valid distribution, the prover algebraically constructs a "compensatory" mask $\boldsymbol{\mu}^*$ to force the response into the valid range $[0, (c+1)B_\infty]$.
    $$ \boldsymbol{\mu}^* := \mathbf{z}_{target} - c\mathbf{x}^* $$
    where $\mathbf{z}_{target}$ is any valid small vector (e.g., the zero vector $\mathbf{0}$).
    
    \item \textbf{Proof Execution:}
    The prover sends the commitment to $\mathbf{x}^*$.
    Upon receiving challenge $c$, the prover sends the response $\mathbf{z} = \boldsymbol{\mu}^* + c\mathbf{x}^*$.
    By construction, $\mathbf{z} = \mathbf{0}$.
    
    \item \textbf{Verification:}
    The verifier checks $\|\mathbf{z}\|_\infty \le (c+1)B_\infty$.
    Since $\|\mathbf{0}\|_\infty = 0$, the check passes.
\end{enumerate}

\textbf{Outcome:} The verifier accepts the proof. However, the input $\mathbf{x}^*$ is unbounded. The safety of Proposition 1 is entirely bypassed because the mask $\boldsymbol{\mu}^*$ (which was implicitly huge and negative: $\boldsymbol{\mu}^* = -c\mathbf{x}^*$) absorbed the malicious input's magnitude.

A similar attack can be mounted on their Proposition 2, which focuses on $L_2$ norm, but again relies on $\bfmu$ satisfying a constraint. This renders their entire proof system vacuous. 

\section{Deferred Proofs}
\label{sec:deferred}

\commitmentSecurity*
\begin{proof}
\label{proof:com}
\textbf{Hiding:}
We construct a reduction $\mathcal{R}$ that breaks the Decision-LWE assumption given a distinguisher $\mathcal{D}$ for the commitment scheme.
Let $(\bfA, \bfb)$ be an LWE challenge where $\bfb$ is either $\bfA\bfs + \bfe$ (LWE) or $\bfu$ (Uniform).
Given two messages $\bfx_0, \bfx_1$, $\mathcal{R}$ picks $b \sample \{0,1\}$ and computes $\bfc = \bfb + \Delta\bfx_b \pmod q$.
If $\bfb$ is an LWE sample, $\bfc$ is a valid commitment to $\bfx_b$.
If $\bfb$ is uniform, $\bfc$ is perfectly uniform (acting as a one-time pad), and thus statistically independent of $\bfx_b$.
Any advantage $\mathcal{D}$ has in distinguishing commitments translates directly to distinguishing the LWE source from uniform.

\textbf{Binding:}
Assume toward contradiction that an efficient adversary $\mathcal{A}$ outputs two valid openings $(\bfs, \bfe, \bfx)$ and $(\bfs', \bfe', \bfx')$ for the same commitment $\bfc$, such that $\bfx \ne \bfx'$.
Since both are valid openings for $\bfc$, we have:
\begin{align*}
    \bfA\bfs + \bfe + \Delta\bfx &= \bfc \pmod q \\
    \bfA\bfs' + \bfe' + \Delta\bfx' &= \bfc \pmod q
\end{align*}
Subtracting the two equations yields:
\[ \bfA(\bfs - \bfs') + (\bfe - \bfe') + \Delta(\bfx - \bfx') \equiv \mathbf{0} \pmod q \]
Let $\bar{\bfs} = \bfs - \bfs'$, $\bar{\bfe} = \bfe - \bfe'$, and $\bar{\bfx} = \bfx - \bfx'$.
We can rewrite this linear relation as a matrix-vector product. Let $\bfM = [\bfA \mid \bfI_L \mid \Delta \bfI_L] \in \mathbb{Z}_q^{L \times (\lambda+2L)}$. Let $\bfz = (\bar{\bfs}^\top, \bar{\bfe}^\top, \bar{\bfx}^\top)^\top$.
\[ \bfM \bfz \equiv \mathbf{0} \pmod q \]
Since $\bfx \ne \bfx'$, we have $\bar{\bfx} \ne \mathbf{0}$, implying $\bfz$ is a non-zero vector in the kernel of $\bfM$.
The components of $\bfz$ are differences of short vectors. Specifically:
\[ \|\bar{\bfs}\|_\infty \le 2B_s, \quad \|\bar{\bfe}\|_\infty \le 2B_e, \quad \|\bar{\bfx}\|_\infty \le 2B_x. \]
The Euclidean norm of $\bfz$ is bounded by $\|\bfz\|_2 \le \sqrt{\text{dim}(\bfz)} \cdot 2\max(B_s, B_e, B_x)$.
If parameters are chosen such that $\|\bfz\|_2$ is within the hardness regime of SIS (typically $\|\bfz\|_2 < q$), finding such a vector $\bfz$ constitutes a solution to the SIS problem, which is assumed computationally hard.
\end{proof}

\bindinglemma*

\begin{proof}[Proof Sketch]
    The consistency check requires $S_2$'s digest to match $S_1$'s computed view. This enforces the algebraic constraint in Equation~\ref{eq:fake-rand}. The term $\mathbf{v}_u = c^*_{u,i} \cdot \Delta \bfs$ acts as a "shift" vector. Since the aggregation coefficients $\alpha_{j,u}$ are derived from a random oracle (via $\rho_i$), the aggregated challenge $c^*_{u,i}$ is a random scalar.
    
    In high-dimensional lattice settings, the honest randomness $\bfr^*_{u,i}$ is concentrated on a thin shell of radius $\sigma_s \sqrt{k \lambda}$. The shift vector $\mathbf{v}_u$ is independent of $\bfr^*_{u,i}$. By the properties of high-dimensional geometry, adding a significant non-zero vector $\mathbf{v}_u$ to $\bfr^*_{u,i}$ results in a vector with squared Euclidean norm:
    \[ \|\bfr^{*\prime}_{u,i}\|^2 \approx \|\bfr^*_{u,i}\|^2 + \|\mathbf{v}_u\|^2 \]
    Since the honest norm $\|\bfr^*_{u,i}\|$ is already close to the acceptance threshold (bounded by factor $\tau \approx 1.1$), the addition of the positive term $\|\mathbf{v}_u\|^2$ pushes the resulting norm $\|\bfr^{*\prime}_{u,i}\|$ beyond the threshold $\tau \sigma_s \sqrt{k \lambda}$ with high probability. 
    
    By requiring this check to pass independently across $t$ parallel repetitions, the probability of the adversary successfully forging valid randomness for all iterations decreases exponentially in $t$.
\end{proof}
\securityEnc*

\begin{proof}
\label{proof:sec-enc}
\textbf{1. Completeness:}
Follows from the construction. An honest prover samples $\bfr$ and outputs $\bfz$ only if rejection sampling succeeds. The verification equation holds by linearity:
\begin{align*}
\bfA_j \bfz_{j,s} + \bfz_{j,e} + \Delta \bfz_{j,x} &= \bfA_j(\bfr_{j,s} + c_j\bfs_j) + (\bfr_{j,e} + c_j\bfe_j) + \Delta(\bfr_{j,x} + c_j\bfx_j) \\
&= (\bfA_j\bfr_{j,s} + \bfr_{j,e} + \Delta\bfr_{j,x}) + c_j(\bfA_j\bfs_j + \bfe_j + \Delta\bfx_j) \\
&= \bft_j + c_j \bfy_j \pmod q
\end{align*}
This matches the reconstructed $\bar{\bft}_j$. The norms $\|\bfz\|_2$ are bounded by $\beta \approx \sqrt{d}\sigma$ with overwhelming probability for Gaussian samples.

\textbf{2. Special Honest-Verifier Zero-Knowledge (SHVZK):}
We construct a Simulator $\mathcal{S}$ that, given public input $(\bfA, \bfy)$ and a challenge vector $\vec{c}$, outputs a transcript indistinguishable from a real execution without knowing the witness.
\begin{enumerate}
    \item For each block $j$, $\mathcal{S}$ samples response vectors $\bfz_{j,s}, \bfz_{j,e}, \bfz_{j,x}$ directly from the target Gaussian distributions $D_{\sigma_s}, D_{\sigma_e}, D_{\sigma_x}$.
    \item $\mathcal{S}$ computes the commitment $\bft_j$ to satisfy the verification equation:
    \[ \bft_j = \bfA_j \bfz_{j,s} + \bfz_{j,e} + \Delta \bfz_{j,x} - c_j \bfy_j \pmod q \]
    \item Output the transcript $\pi_j = (c_j, \bfz_{j,s}, \bfz_{j,e}, \bfz_{j,x})$.
\end{enumerate}
\textit{Indistinguishability:} In the real protocol, Bimodal Rejection Sampling (Lemma~\ref{lem:bimodal}) ensures the output $\bfz$ follows the exact distribution $D_\sigma$, independent of the witness shift. Thus, the simulated $\bfz$ is identically distributed to the real $\bfz$. Since $\bft_j$ is uniquely determined by $(\bfz, c, \bfy)$, the simulated commitment is also identically distributed.

\textbf{3. Knowledge Soundness (Extraction):}
We use the rewinding technique for $\Sigma$-protocols.
Suppose an adversary $\mathcal{A}$ convinces the verifier for a block $j$ with probability $\epsilon > 1/|\mathcal{C}|$. By the Heavy Row Lemma, we can rewind $\mathcal{A}$ to the same state (same commitment $\bft_j$) and obtain two accepting transcripts with distinct challenges $c_j \ne c'_j$:
\[ (\bft_j, c_j, \bfz) \quad \text{and} \quad (\bft_j, c'_j, \bfz') \]
Subtracting the verification equations for these two transcripts:
\[ \bfA_j(\bfz_{j,s} - \bfz'_{j,s}) + (\bfz_{j,e} - \bfz'_{j,e}) + \Delta(\bfz_{j,x} - \bfz'_{j,x}) = (c_j - c'_j)\bfy_j \pmod q \]
Let $\Delta c = (c_j - c'_j)$. Since the challenge space is small (e.g., $\{-1, 0, 1\}$), $\Delta c$ is invertible or we can define the witness over the ring extension.
We extract the witness components as $\bar{\bfw} = (\Delta c)^{-1}(\bfz_w - \bfz'_w)$ for $w \in \{s, e, x\}$.
The extracted witness $\bar{\bfw}$ satisfies the linear relation for $\bfy_j$. Its norm is bounded by $\|\bar{\bfw}\| \le 2 \|\bfz\|$, which fits within the relaxed extraction bounds $\gamma \beta$.
\end{proof}

\begin{lemma}[Extraction Slack \& Efficiency]
\label{lem:efficiency}
The protocol runs in expected time $O(k \cdot M)$ with $M\approx 2$. The extracted message witness $\bar{\bfx}$ satisfies the global Euclidean bound:
\[ \|\bar{\bfx}\|_2 \le \gamma \cdot \|\bfx\|_2 \quad \text{where } \gamma \propto \sqrt{d}. \]
\end{lemma}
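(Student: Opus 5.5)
The statement has two parts: expected running time and the extraction slack. I will prove them separately, using the independence of blocks under the Independent Restart policy of $\Pi_{\term{Enc}}$ (Figure~\ref{fig:optimized-proto}) together with the extractor from the proof of the security theorem for $\Pi_{\term{Enc}}$.

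\emph{Running time.} For a single block $j$, each pass through Steps~\ref{step1}--\ref{step5} is an independent trial. By Lemma~\ref{lem:bimodal}, a trial is accepted with probability $1/M$, where
\[
M=\exp\!\left(\frac{T^2}{2\sigma^2}\right).
\]
With the parameters of Table~\ref{tab:params}, $\sigma=\xi T$, so $M=\exp(1/(2\xi^2))$. For the chosen tuning constant $\xi$ (around $1$), this gives $M\lesssim 2$. Three Gaussian components are rejected jointly, so the per-block acceptance probability is the product of three such factors. Choosing $\xi$ so that this product is a constant keeps the number of trials per block geometric with constant mean. The key point is that the restart for block $j$ depends only on the randomness of block $j$. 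Hence the total number of trials is a sum of $k$ independent geometric variables, and by linearity of expectation its mean is $kM$. Each trial costs $O(d\lambda)$ arithmetic operations, so the expected total work is $O(kM)$ block-trials. This should be contrasted with the monolithic $M^k$ that would result if all blocks had to accept simultaneously.

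\emph{Slack.} Rewinding block $j$ yields two accepting transcripts with $c_j\neq c'_j$ from $\{-1,0,1\}$. Then $\Delta c\in\{\pm1,\pm2\}$ and, as in the soundness proof,
\[
\bar{\bfx}_j=(\Delta c)^{-1}(\bfz_{j,x}-\bfz'_{j,x}).
\]
Working over the integers when $|\Delta c|=1$, or accepting a factor of $2$ otherwise, the verifier's norm check gives
\[
\|\bar{\bfx}_j\|_2\le 2\beta_x=2\tau\sqrt{d}\,\sigma_x=2\tau\xi C\, d\,B_x .
\]
The honest block norm satisfies $\|\bfx_j\|_2\ge B_x$ up to the $\sqrt d$ factor of Remark~\ref{rem:2}. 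Taking the worst case $\|\bfx_j\|_2\approx\sqrt{d}B_x$, we get
\[
\|\bar{\bfx}_j\|_2\le 2\tau\xi C\sqrt d\cdot\|\bfx_j\|_2 .
\]
Summing the squares over the $k$ blocks then gives
\[
\|\bar{\bfx}\|_2^2=\sum_j\|\bar{\bfx}_j\|_2^2\le \gamma^2\sum_j\|\bfx_j\|_2^2,
\qquad \gamma=2\tau\xi C\sqrt d ,
\]
which is $\propto\sqrt d$ and independent of $k$. By contrast, a monolithic proof would give $\sqrt{L}$.

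\emph{Main obstacle.} The difficulty lies in making ``$\gamma\propto\sqrt d$ relative to $\|\bfx\|_2$'' precise, because the extraction bound is naturally stated against the public bound $B_x$ rather than against the actual witness norm. I would therefore state the lemma with $\|\bfx\|_2$ interpreted as the public worst-case norm $\sqrt d B_x$ per block, so that the ratio is well defined. For the running time, I would also note explicitly that the constant $M\approx2$ relies on the bimodal variant of Lemma~\ref{lem:bimodal}, and that the per-block acceptance is a product over the components $s,e,x$, whose constant I would absorb into $M$.
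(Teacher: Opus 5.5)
Your slack computation is the paper's: you bound each extracted block by $2\beta_x = 2\tau\xi C\,d\,B_x$ and divide by the public honest bound, which gives $\gamma = 2\tau\xi C\sqrt{d}$. You do this as $\sqrt{d}B_x$ per block; the paper divides by $\sqrt{L}B_x$ after summing squares over the $k$ blocks, which is equivalent. The paper's proof never argues the $O(k\cdot M)$ running time, so your sum-of-$k$-independent-geometric-trials argument is a correct addition rather than a different route.

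One caution: drop the appeal to Remark~\ref{rem:2}. An honest block has no lower bound on its norm, and setting $\|\bfx_j\|_2=\sqrt{d}B_x$ is the most favorable case for the ratio, not the worst. This is exactly why the inequality only makes sense with $\|\bfx\|_2$ read as the public bound, as you propose in your closing paragraph and as the paper implicitly assumes.
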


\begin{proof}
To enable rejection sampling for a block of dimension $d = L/k$, the Gaussian width is set to $\sigma_x = \xi C \sqrt{d} B_x$. The extracted witness for a block is the difference of two Gaussian samples. Its norm is bounded by approx $2 \times$ the verification bound $\beta_x$:
\[ \|\bar{\bfx}^{(j)}\|_2 \le 2 \beta_x \approx 2 (\tau \sqrt{d} \sigma_x) = 2 \tau \xi C d B_x. \]
Aggregating over all $k$ blocks:
\[ \|\bar{\bfx}\|_2 = \sqrt{\sum_{j=1}^k \|\bar{\bfx}^{(j)}\|_2^2} \le \sqrt{k \cdot (2 \tau \xi C d B_x)^2} = \sqrt{k} \cdot d \cdot (2 \tau \xi C B_x). \]

Recall that the honest witness has norm $\|\bfx\|_2 \le \sqrt{L} B_x$.
Comparing the extracted norm to the honest norm:
\[
    \text{Slack} = \frac{\|\bar{\bfx}\|_2}{\|\bfx\|_2} \approx \frac{\sqrt{k} \cdot (L/k) \cdot 2 \tau \xi C B_x}{\sqrt{L} \cdot B_x} = \frac{(L/\sqrt{k}) \cdot 2 \tau \xi C}{\sqrt{L}} = \sqrt{\frac{L}{k}} \cdot 2 \tau \xi C = \sqrt{d} \cdot O(1).
\]
\end{proof}

\thmSimulatability*
\begin{proof}
\label{proof:sim}
For a secure LWE instance $(\lambda, L, q, \chi)$ where $\chi$ is a discrete Gaussian distribution with standard deviation $\sigma$, the corresponding Hint-LWE instance $(\lambda, m, q, \chi')$, where $\chi'$ is a discrete Gaussian distribution with standard deviation $\sigma'$, is secure when $\sigma' = \sigma/\sqrt{2}$. Consequently, any $\mathbf{e} \in \chi$ can be written as $\mathbf{e}_1 + \mathbf{e}_2$ where $\mathbf{e}_1, \mathbf{e}_2 \in \chi'$. 

This yields the real distribution $\mathcal{D}_R$, with the error term rewritten and the last ciphertext modified:
\begin{gather*}
    \left\{
\begin{array}{c|c}
  \bfs = \sum_{i=1}^{n} \bfs_i \bmod q
  & \forall i \in [n],\;
    \bfs_i \getsr \mathbb{Z}_q^\lambda,\;
    \mathbf{e}_i, \mathbf{f}_i \getsr \chi'^L
  \\[4pt]
  \mathbf{y}_1, \ldots, \mathbf{y}_n
  &
  \forall i \in [n-1],\;
  \mathbf{y}_i = \mathbf{A}\bfs_i + (\mathbf{e}_i+\mathbf{f}_i) + \Delta\mathbf{x}_i
  \\[4pt]
  &
  \mathbf{y}_n = \mathbf{A}\bfs
  - \sum_{i=1}^{n-1} \mathbf{y}_i
  + \sum_{i=1}^{n} (\mathbf{e}_i + \mathbf{f}_i)
  + \Delta\mathbf{X}
\end{array}
\right\}
\end{gather*}

We define $\textsf{Sim}(\mathbf{A}, \mathbf{X})$:

\begin{algorithmic}[1]
\State \underline{$\textsf{Sim}(\mathbf{A}, \mathbf{X})$}
\State Sample $\mathbf{u}_1, \ldots, \mathbf{u}_{n-1} \getsr \mathbb{Z}_q^L$
\State Sample $\bfs_1, \ldots, \bfs_n \getsr \mathbb{Z}_q^\lambda$
\State Sample $\mathbf{e}_1, \ldots, \mathbf{e}_n \getsr \chi'^L$
\State Sample $\mathbf{f}_1, \ldots, \mathbf{f}_n \getsr \chi'^L$
\State Set $\bfs := \sum_{i=1}^n \bfs_i \bmod q$
\State Set $\mathbf{u}_n = \mathbf{A} \cdot \bfs - \sum_{i=1}^{n-1} \mathbf{u}_i + \sum_{i=1}^n (\mathbf{e}_i + \mathbf{f}_i) + \Delta \cdot \mathbf{X}$
\State \textbf{return} $\bfs, \mathbf{u}_1, \ldots, \mathbf{u}_n$
\end{algorithmic}

The simulated distribution, $\mathcal{D}_{\textsf{Sim}}$, is:
\begin{gather*}
\left\{
\begin{array}{c|c}
  \bfs = \sum_{i=1}^{n} \bfs_i \bmod q   & \forall\, i \in [n],\, \bfs_i \getsr \mathbb{Z}_q^\lambda, \mathbf{e}_i, \mathbf{f}_i \getsr \chi'^L \\
 \mathbf{u}_1, \ldots, \mathbf{u}_n  &  \forall\, i \in [n-1],\, \mathbf{u}_i \getsr \mathbb{Z}_q^L \\
 & \mathbf{u}_n = \mathbf{A}\bfs - \sum_{i=1}^{n-1} \mathbf{u}_i + \sum_{i=1}^n (\mathbf{e}_i + \mathbf{f}_i) + \Delta\mathbf{X}
\end{array}
\right\}
\end{gather*}

We prove that $\mathcal{D}_{R}$ is indistinguishable from $\mathcal{D}_{\textsf{Sim}}$ through a sequence of hybrids. Throughout the hybrids, we will maintain the invariant that $\sum_{i=1}^{n} \bfy_i=\bfA\bfK+\sum_{i=1}^{n} (e_i+f_i) + \Delta \bfX$

\paragraph{Hybrid 0.} This is $\mathcal{D}_R$.

\paragraph{Hybrid 1.} We replace the real ciphertext $\mathbf{y}_1$ with a modified one:
\begin{gather*}
\left\{
\begin{array}{c|c}
  \bfs & \forall\, i \in [n],\, \bfs_i \getsr \mathbb{Z}_q^\lambda, \mathbf{e}_i, \mathbf{f}_i \getsr \chi'^L, \mathbf{u}_1' \getsr \mathbb{Z}_q^L \\
 \mathbf{y}_1 = \mathbf{u}_1' + \mathbf{f}_1 + \Delta\mathbf{x}_1 &  \forall\, i \in [2, n-1],\, \mathbf{y}_i = \mathbf{A} \cdot \bfs_i + (\mathbf{e}_i + \mathbf{f}_i) + \Delta\mathbf{x}_i \\
\{\mathbf{y}_i\}_{i=2}^{n}  & \mathbf{y}_n = \mathbf{A}\bfs - \sum_{i=1}^{n-1} \mathbf{y}_i + \sum_{i=1}^n (\mathbf{e}_i + \mathbf{f}_i) + \Delta\mathbf{X}
\end{array}
\right\}
\end{gather*}

We show that if there exists an adversary $\mathcal{B}$ that can distinguish between Hybrid~0 and Hybrid~1, then we can construct an adversary $\mathcal{A}$ who can distinguish the two ensembles in the Hint-LWE assumption.

\begin{algorithmic}[1]
    \State \underline{$\mathcal{A}(\mathbf{A}, \mathbf{y}^*, \bfs^* = \bfs + \mathbf{r} \bmod q, \mathbf{e}^* = \mathbf{e} + \mathbf{f})$}
    \State Sample $\bfs_2, \ldots, \bfs_{n-1} \getsr \mathbb{Z}_q^\lambda$
    \State Sample $\mathbf{e}_2, \ldots, \mathbf{e}_{n} \getsr \chi'^L$
    \State Sample $\mathbf{f}_2, \ldots, \mathbf{f}_{n} \getsr \chi'^L$
    \State Set $\bfs = \sum_{i=2}^{n-1} \bfs_i + \bfs^* \bmod q$ \Comment{Implicitly, $\bfs_n := \mathbf{r}$}
    \State $\forall\, i \in \{2, \ldots, n-1\},\, \mathbf{y}_i = \mathbf{A}\bfs_i + \mathbf{e}_i + \mathbf{f}_i + \Delta\mathbf{x}_i$
    \State Set $\mathbf{y}_1 = \mathbf{y}^* + \mathbf{f}_n + \Delta\mathbf{x}_1$
    \State Set $\mathbf{y}_n := \mathbf{A}\bfs - \sum_{i=1}^{n-1} \mathbf{y}_i + \mathbf{e}^* + \sum_{i=2}^{n} (\mathbf{e}_i + \mathbf{f}_i) + \Delta \cdot \mathbf{X}$
    \State Run $b' \getsr \mathcal{B}(\bfs, \mathbf{y}_1, \ldots, \mathbf{y}_n)$
    \State \textbf{return} $b'$
\end{algorithmic}

We argue that the reduction correctly simulates the two hybrids based on the choice of $\mathbf{y}^*$:
\begin{itemize}
    \item If $\mathbf{y}^* = \mathbf{A}\bfs + \mathbf{e}$, then $\mathbf{y}_1$ is a valid encryption of $\mathbf{x}_1$ with key $\bfs$ and error $(\mathbf{e} + \mathbf{f}_n)$. Moreover, it is straightforward to verify that $\mathbf{y}_n$ satisfies the definition in Hybrid~0.
    
    \item If $\mathbf{y}^* = \mathbf{u}$ for some random $\mathbf{u}$, then $\mathbf{y}_n$ has the prescribed format while $\mathbf{y}_1$ is generated as expected in Hybrid~1.
\end{itemize}

\paragraph{Hybrid 2.} We replace $\mathbf{y}_1$ with $\mathbf{u}_1$ sampled uniformly at random:
\begin{gather*}
\left\{
\begin{array}{c|c}
  \bfs & \forall\, i \in [n],\, \bfs_i \getsr \mathbb{Z}_q^\lambda, \mathbf{e}_i, \mathbf{f}_i \getsr \chi'^L, \mathbf{u}_1 \getsr \mathbb{Z}_q^L \\
 \mathbf{u}_1 &  \forall\, i \in [2, n-1],\, \mathbf{y}_i = \mathbf{A} \cdot \bfs_i + (\mathbf{e}_i + \mathbf{f}_i) + \Delta\mathbf{x}_i \\
\{\mathbf{y}_i\}_{i=2}^{n}  & \mathbf{y}_n = \mathbf{A}\bfs - \mathbf{u}_1 - \sum_{i=2}^{n-1} \mathbf{y}_i + \sum_{i=1}^n (\mathbf{e}_i + \mathbf{f}_i) + \Delta\mathbf{X}
\end{array}
\right\}
\end{gather*}

Hybrid~1 and Hybrid~2 are identically distributed since $\mathbf{u}_1'$ is uniformly sampled and completely masks the values in $\mathbf{y}_1$ of Hybrid~1.

\paragraph{Hybrids 3 and beyond.} In Hybrids~3 and~4, we replace $\mathbf{y}_2$ with a random element $\mathbf{u}_2$ using similar logic. Continuing this process, in Hybrid~$2n-2$, the distribution will be identical to $\mathcal{D}_{\textsf{Sim}}$. 

This concludes the proof of simulatability. 
\end{proof}
\shbp*

\begin{proof}
\label{proof:sh-bp}
We prove security via simulation using a sequence of hybrid arguments.

\paragraph{Case 1: Server 1 is Controlled by a Semi-honest Adversary.}
Let $\cA$ be the adversary controlling Server 1 and a subset of clients $\cC_{\text{corr}}$, and choosing a set $\cC_\text{drop}$ of clients to drop from the computation. Define $D:=\cC_{\text{honest}}\cap \cC_{\text{drop}}$ as the honest clients dropped from the computation, and let $\cC_{\text{honest}}^\ast:=\cC_{\text{honest}}\setminus D$ be the active honest clients. 

We construct a simulator $\cS$ that interacts with $\cF_{\text{SecAgg}}$ (defined in Figure~\ref{fig:ideal-func-secagg}) and $\cA$.  
\textbf{Simulator $\cS$ Operation:}
\begin{enumerate}[leftmargin=*]
    \item \textbf{Functionality Query:} $\cS$ observes the actual inputs $\{\bfx_i\}_{i \in \cC_{\text{corr}}}$ of the corrupted clients from $\cA$. $\cS$ sends the corrupted clients list $\cC_\text{corr}$, their true inputs, and $\cC_\text{drop}$ to the ideal functionality through the \textsf{Corrupt} and \textsf{Drop} interfaces, respectively. 
    \item \textbf{Aggregate Calculation:} $\cS$ receives the total aggregate $\bfx_{\text{total}}$ via the \textsf{Aggregate} interface. $\cS$ computes the corrupt clients' contribution $\bfx_C := \sum_{i\in \cC_\text{corr} \setminus \cC_\text{drop}} \bfx_i$. It then deduces the honest active aggregate: $\bfx_H = \bfx_{\text{total}} - \bfx_C$. 
    \item \textbf{LWE Simulation:} $\cS$ invokes the LWE simulator (from Theorem~\ref{thm:simulatability}) with target sum $\bfx_H$ to generate simulated ciphertexts $\{\tilde{\bfy}_i\}_{i \in \cC_{\text{honest}}^\ast}$. Observe that $\cS$ now knows an aggregate secret $\bfs_H$ such that $\sum_{i \in \cC_{\text{honest}}^\ast} \tilde{\bfy}_i$, when decrypted by $\bfs_H$, produces $\bfx_H$. 
    \item \textbf{Client Simulation:} For honest clients, $\cS$ generates simulated NIZK proofs $\tilde{\pi}^{(i)}$ (using the SHVZK simulator) and random dummy commitments $C_s^{(i)}, C_x^{(i)}, C_e^{(i)}$ (relying on the hiding property).
    \item \textbf{Server 2 Simulation:} $\cS$ provides these client values to $\cA$. Acting as the honest Server 2, $\cS$ provides hash values $z_i$ consistent with the dummy commitments and an aggregate blinding factor $\tilde{\bfs}$ (derived from $\bfs_H$) such that the final decryption executed by $\cA$ yields exactly $\bfx_{\text{total}} = \bfx_H + \bfx_C$.
\end{enumerate}

\textbf{Indistinguishability Hybrids:}
\begin{description}
    \item[\textbf{$H_0$ (Real):}] Real protocol execution.
    
    \item[\textbf{$H_1$ (Simulated Proofs):}] Replace honest clients' NIZK proofs with simulated proofs. 
    \emph{Justification:} Indistinguishable due to the Special Honest-Verifier Zero-Knowledge (SHVZK) property of the proof system.
    
    \item[\textbf{$H_2$ (Random Commitments):}] Replace honest clients' commitments with commitments to random values.
    \emph{Justification:} Indistinguishable due to the Hiding property of the commitment scheme.
    
    \item[\textbf{$H_3$ (Simulated Ciphertexts / Ideal):}] Replace honest clients' ciphertexts with the output of the LWE simulator (dependent only on the honest sum $\bfx_H$).
    \emph{Justification:} Indistinguishable under the Hint-LWE assumption (Theorem~\ref{thm:simulatability}). 
    
    In $H_3$, the view depends only on the aggregate $\bfx_H$, perfectly matching the Ideal World.
\end{description}

\paragraph{Case 2: Server 2 is Controlled by Semi-Honest Adversary.}
Let $\cA$ control Server 2. The view of Server 2 consists strictly of the blinding factors $\{\bfs_i\}_{i \in \cC_{\text{honest}}^\ast}$ and their openings.
In the real protocol, $\bfs_i \xleftarrow{\$} \bbZ_q^\lambda$ are chosen uniformly at random, independent of the inputs $\bfx_i$.
The simulator $\cS$ simply samples uniform random vectors $\tilde{\bfs}_i$ and sends them to $\cA$.
Since the distribution of $\bfs_i$ is identically distributed in both the real and simulated worlds (uniform random), the views are perfectly indistinguishable.
\end{proof}

\maliciousSOne*

\begin{proof}
\label{proof:mal-bp}
We prove security via simulation. Let $\cA$ be a malicious adversary controlling Server 1 ($S_1$) and a set of corrupt clients $\cC_{\text{corr}}$. Let $\cC_{\text{honest}}$ denote the set of honest clients. As before, let $\cC_\text{drop}$ denote the set of clients dropped from the computation, as specified by the adversary. 

\noindent\textbf{Simulator $\cS$ Construction:}
\begin{enumerate}[leftmargin=*]
    \item \textbf{Setup \& Trapdoors:} $\cS$ generates the NIZK CRS with an extraction trapdoor $\xi$. For the Pedersen Vector Commitment, $\cS$ samples random $\tau_1, \dots, \tau_m \in \mathbb{Z}_q$ and sets $g_j = h^{\tau_j}$ for all $j \in [m]$. The vector $\boldsymbol{\tau} = (\tau_1, \dots, \tau_m)$ serves as the \textit{equivocation trapdoor}.

    \item \textbf{Honest Client Simulation (Round 0):} For each honest client $i \in \cC_{\text{honest}}\setminus \cC_\text{drop}$, $\cS$ samples dummy inputs $\tilde{\bfx}^{(i)} = \mathbf{0}$ and generates dummy ciphertexts $\tilde{\bfy}^{(i)}$ and equivocal commitments. $\cS$ invokes the NIZK simulator to produce proofs $\tilde{\pi}^{(i)}$. 
    
    \item \textbf{Consistency Check \& Online Set (Round 1):} $\cS$ (acting as $S_2$) sends digests to $S_1$ only for clients $i \notin \cC_\text{drop}$. $\cA$ (as $S_1$) determines the final online set $\cV$. $\cS$ sets the final dropout set $D = \{i \mid i \notin \cV\}$.
    
    \item \textbf{Witness Extraction:} For every corrupted client $i \in \cC_{\text{corr}} \cap \cV$ that provided a valid NIZK proof $\pi^{(i)}$ $\cS$ observes the proof and uses the extractor $\mathcal{E}$ to extract the witness $(\bfx^{(i)})$ 

    \item \textbf{Functionality Query:} $\cS$ queries $\cF_{\text{SecAgg}}$ with the extracted inputs for $\cC_{\text{corr}}$ and the dropout set $D$. $\cS$ receives $\bfx_{\text{ideal}} = \sum_{i \in \cV} \bfx^{(i)}$.

    \item \textbf{Equivocation (Round 2):} $\cS$ forces the transcript to yield $\bfx_{\text{ideal}}$ and also provide an opening for the aggregate mask $\mathbf{w}^{\Sigma} = \sum_{i \in \cV} \mathbf{w}^{(i)}$. 

    \item \textbf{Equivocation (Round 2):} $\cS$ forces the transcript to yield $\bfx_{\text{ideal}}$ and provides an opening for the required aggregate mask.
    \begin{itemize}
        \item \textbf{Target Mask Calculation:} $\cS$ calculates the ``target'' aggregate mask $\bfw^*$ required to satisfy the decryption equation. By construction, the sum of the ciphertexts $\sum_{i\in\cV} \bfy^{(i)}$ currently contains the honest dummy masks $\tilde{\bfw}_H = \sum_{i \in \cV \cap \cC_{\text{honest}}} \tilde{\bfw}^{(i)}$, the corrupt masks $\bfw_C = \sum_{i \in \cV \cap \cC_{\text{corr}}} \bfw^{(i)}$, and the corrupt inputs $\Delta\bfx_C$. To ensure the final decode yields $\bfx_{\text{ideal}} = \bfx_H + \bfx_C$, the target aggregate mask must offset the missing honest inputs:
        \[ \bfw^* = \tilde{\bfw}_H + \bfw_C - \Delta\bfx_H \]
        where $\bfx_H = \bfx_{\text{ideal}} - \bfx_C$ and $\bfx_C=\sum_{i\in\cC_\text{corr}\cap \cV}\bfx^{(i)}$
        
        \item \textbf{RO Programming:} $\cS$ selects one active honest client $i^* \in \cC_{\text{honest}} \cap \cV$ and programs the Random Oracle $\mathcal{H}'$ for their seed to output $\tilde{\bfw}^{(i^*)} - \Delta\bfx_H$. For all other honest clients $j \neq i^*$, it programs $\mathcal{H}'(\term{seed}^{(j)}) = \tilde{\bfw}^{(j)}$. This ensures the aggregate programmed mask exactly equals $\bfw^*$.
        
        \item \textbf{Vector Equivocation:} Since $\cS$ already sent dummy commitments $C_w^{(i)}$ in Round 0, it uses the trapdoor vector $\boldsymbol{\tau} = (\tau_1, \dots, \tau_m)$ to calculate a forged aggregate opening $R'_w$ such that:
        \[ \Commit(\bfw^*; R'_w) = \prod_{i \in \cV} C_w^{(i)} \]
        Because the difference between the originally committed values $(\tilde{\bfw}_H + \bfw_C)$ and the target $\bfw^*$ is exactly $\Delta\bfx_H$, this is computed cleanly as:
        \[ R'_w = \left(\sum_{i \in \cV} r_w^{(i)}\right) + \sum_{j=1}^m \tau_j (\Delta x_{H, j}) \]
        where $\Delta x_{H, j}$ is the $j$-th component of the vector $\Delta\bfx_H$.
    \end{itemize}
\end{enumerate}

\noindent\textbf{Indistinguishability Hybrids:}
\begin{description}[leftmargin=1em]
    \item[Hybrid $H_0$ (Real World):] The real protocol execution.
    \item[Hybrid $H_1$ (Simulated NIZK):] For all honest clients $i \in \cC_{\text{honest}}$, replace the real NIZK proofs with simulated proofs generated using the CRS simulation trapdoor. Indistinguishable from $H_0$ due to the \textbf{Zero-Knowledge} property of the NIZK.
    \item[Hybrid $H_2$ (Extraction):] $\cS$ uses the soundness extractor to extract the witness $(\bfx^{(i)}, \bfs^{(i)}, \dots)$ from the proofs provided by $\cA$ for $i \in \cC_{\text{corr}} \cap \cV$. $\cS$ sends these extracted inputs to the ideal functionality $\cF_{\text{SecAgg}}$ to obtain $\bfx_{\text{ideal}}$. Indistinguishable from $H_1$ due to the \textbf{Simulation-Extractability} of the NIZK; the probability that $\cA$ produces a valid proof for a statement from which a valid witness cannot be extracted is negligible.
    \item[Hybrid $H_3$ (Hint-LWE Simulation):] Instead of using the individual honest inputs $\bfx^{(i)}$, $\cS$ invokes the simulator from Theorem~\ref{thm:simulatability}. Given the honest sum $\bfx_H$, the Hint-LWE simulator generates simulated ciphertexts $\{\bfy^{(i)}\}_{i \in \cC_{\text{honest}}}$ such that their LWE components correctly sum to an encryption of $\Delta\bfx_H$. Indistinguishable from $H_2$ under the \textbf{Hint-LWE assumption}.
    \item[Hybrid $H_4$ (RO Programming \& Equivocation):] $\cS$ modifies the Hint-LWE simulation to encrypt $\mathbf{0}$ for all honest clients instead of $\bfx_H$. To ensure the aggregate transcript still decrypts to $\bfx_{\text{ideal}}$, $\cS$ shifts the required offset into the Random Oracle computation. $\cS$ programs $\mathcal{H}'$ for the honest seeds such that the aggregate target mask becomes $\bfw^* = \tilde{\bfw}_H + \bfw_C - \Delta\bfx_H$. $\cS$ then uses the discrete log trapdoor vector $\boldsymbol{\tau}$ to compute a forged aggregate opening $R'_w$ that equivocates the dummy commitments to this programmed mask $\bfw^*$. 
\end{description}

\noindent
The view in $H_4$ is statistically indistinguishable from $H_3$ because: 
(1) The Random Oracle is programmed on honest seeds $\term{seed}^{(i)}$ that are uniformly random and unqueried by $\cA$ prior to Phase 3; 
(2) Pedersen vector commitments are \textbf{perfectly hiding}, so the commitments sent in Phase 1 reveal no information about the shift; and 
(3) The forged opening $R'_w$ is distributed identically to a true opening. In Hybrid $H_4$, the simulated view relies solely on $\bfx_{\text{ideal}}$ and is identical to the Ideal World. This concludes the proof.
\end{proof}

\maliciousSOneLat*

\begin{proof}
\label{proof:mal-lat}
We prove security via simulation. Let $\cA$ be a malicious adversary controlling Server 1 ($S_1$) and a subset of clients $\cC_{\text{corr}}$.  Let $\cC_{\text{honest}}$ denote the set of honest clients. As before, let $\cC_\text{drop}$ denote the set of clients dropped from the computation, as specified by the adversary. 

\noindent\textbf{Simulator $\cS$ Construction:}
\begin{enumerate}[leftmargin=*]
    \item \textbf{Setup \& Lattice Trapdoors:} Instead of generating standard uniformly random matrices, $\cS$ uses the Ajtai trapdoor generation algorithm (e.g., \textsf{TrapGen}) to generate $\bfB_\rho \in \bbZ_q^{\lambda \times m}$ along with a \textbf{short basis trapdoor} $\bfT_\rho$ for the lattice $\Lambda^\perp(\bfB_\rho)$. The matrices $\bfA, \bfB_s, \bfB_w$ are generated uniformly.
    
    \item \textbf{Round 0 (Honest Simulation):} For $i \in \cC_{\text{honest}}\setminus \cC_\text{drop}$, $\cS$ generates dummy inputs $\tilde{\bfx}^{(i)} = \mathbf{0}$, dummy masks $\tilde{\bfw}^{(i)}$, and dummy secrets $\tilde{\bfs}^{(i)} = \mathbf{0}$. 
    \begin{itemize}
        \item It computes dummy ciphertexts $\tilde{\bfy}^{(i)}$ using these zeros.
        \item It computes dummy Ajtai commitments $\tilde{\cC}_j^{(i)} = \bfB_\rho \tilde{\bfrho}_j^{(i)}$ using short Gaussian randomness $\tilde{\bfrho}_j^{(i)} \getsr D_\sigma^{km}$.
        \item It uses the NIZK simulation trapdoor to produce proofs $\tilde{\pi}^{(i)}$.
    \end{itemize}
    
    \item \textbf{Round 1 (Lattice Extraction):} For each corrupt client $i \in \cC_{\text{corr}} \cap \cV$, $\cS$ observes the proof $\pi^{(i)}$. Using the NIZK extractor $\mathcal{E}$ with trapdoor $\xi$, $\cS$ extracts the short vectors $\mathbf{w}_i = (\bfx^{(i)}, \bfs^{(i)}, \bfe^{(i)}, \bfw^{(i)}, \bfrho^{(i)})$. By the soundness of the lattice-based ZK proof, these extracted values correctly bind to $\bfy^{(i)}$ and $\cC_j^{(i)}$.
    
    \item \textbf{Functionality Query:} $\cS$ sends the extracted inputs $\{\bfx^{(i)}\}$ for active corrupted clients and the dropout set $D$ to $\cF_{\text{SecAgg}}$, receiving the true aggregate $\bfx_{\text{ideal}}$.

    \item \textbf{Round 2 (RO Programming \& Ajtai Equivocation):}
    $\cS$ must force the final sum to decrypt to $\bfx_{\text{ideal}}$.
    \begin{itemize}
        \item \textbf{Target Mask:} $\cS$ calculates the required target mask $\bfw^*$ to offset the difference $\Delta\bfx_H = \Delta(\bfx_{\text{ideal}} - \bfx_C)$. 
        \item \textbf{RO Programming:} $\cS$ selects an honest client $i^*$ and programs $\cH'(\term{seed}^{(i^*)})$ such that the aggregate mask equals $\bfw^*$.
        \item \textbf{Ajtai Equivocation:} Because $S_1$ only checks the aggregate secret $\bfs^\Sigma$ and the programmed seeds in Phase 3, explicit equivocation of the individual $\bfrho$ to $S_1$ is not strictly required by the transcript. However, to ensure the internal state of the simulated honest clients is consistent with the dummy commitments $\tilde{\cC}_j^{(i)}$ sent in Phase 1, $\cS$ utilizes the lattice trapdoor $\bfT_\rho$. For a programmed mask $\bfw_j^*$, $\cS$ defines the target syndrome $\bfu = \tilde{\cC}_j^{(i)} - \bfB_w \bfw_j^*$. It then uses Gaussian preimage sampling (\textsf{SamplePre}$(\bfB_\rho, \bfT_\rho, \bfu, \sigma)$) to find a short vector $\bfrho_j^*$ such that $\bfB_\rho \bfrho_j^* = \bfu$.
    \end{itemize}
\end{enumerate}

\noindent\textbf{Indistinguishability Hybrids:}
\begin{description}[leftmargin=1em]
    \item[$H_0 \to H_2$ (NIZK \& Extraction):] Replace honest proofs with simulated ones, and use the extractor for $\cA$'s inputs. Indistinguishable under the \textbf{Zero-Knowledge} and \textbf{Simulation-Extractability} properties of the lattice ZK system.
    \item[$H_3$ (Hint-LWE Simulation):] Replace the honest ciphertexts with those generated by the Hint-LWE simulator targeting the sum $\Delta\bfx_H$. Indistinguishable under the \textbf{Hint-LWE assumption}.
    \item[$H_4$ (RO Programming \& Lattice Equivocation):] Shift the LWE offset to $\mathbf{0}$ and program the Random Oracle $\cH'$ to absorb the $\Delta\bfx_H$ difference. The dummy Ajtai commitments sent in Phase 1 are computationally indistinguishable from real commitments under the \textbf{Module-SIS/LWE assumption}. Furthermore, the Gaussian preimage sampling ensures the equivocated randomness $\bfrho^*$ is statistically close to the real distribution $D_\sigma^{km}$.
\end{description}
The view in $H_4$ relies solely on $\bfx_{\text{ideal}}$ and is indistinguishable from the real execution.
\end{proof}

\end{document}